\newcommand{\hyp}{\ensuremath{\operatorname{hyp}}}
\newcommand{\epi}{\ensuremath{\operatorname{epi}}}
\newcommand{\ind}{\ensuremath{\operatorname{\mathbf{1}}}}
\newcommand{\E}{\ensuremath{\operatorname{\mathbb{E}}}}
\newcommand{\EL}{\ensuremath{\operatorname{EL}}}
\newcommand{\HL}{\ensuremath{\operatorname{HL}}}
\newcommand{\SL}{\ensuremath{\operatorname{SL}}}
\newcommand{\HS}{\ensuremath{\operatorname{HS}}}
\newcommand{\HP}{\ensuremath{\operatorname{HP}}}
\newtheorem{theorem}{Theorem}
\newtheorem{definition}[theorem]{Definition}
\newtheorem{proposition}[theorem]{Proposition}
\newenvironment{proof}[1][Proof]{\noindent\textbf{#1.} }{\ \rule{0.5em}{0.5em}}
\begin{document}

\title{Local Half-Region Depth for Functional Data}

\author{
Claudio Agostinelli \\
Department of Mathematics \\
University of Trento, Italy \\
claudio.agostinelli@unitn.it \\
}

\maketitle

\begin{abstract}
Data depth proves successful in the analysis of multivariate data sets, in particular deriving an overall center and assigning ranks to the observed units. Two key features are: the directions of the ordering, from the center towards the outside, and the recognition of a unique center irrespective of the distribution being unimodal or multimodal. This behaviour is a consequence of the monotonicity of the ranks that decrease along any ray from the deepest point. Recently, a wider framework allowing identification of partial centers was suggested in \citet{agostinelli_local_2011}. The corresponding generalized depth functions, called \textit{local depth functions} are able to record local fluctuations and can be used in mode detection, identification of components in mixture models and in cluster analysis. 

Functional data \citep{ramsay_functional_2006} are become common nowadays. Recently, \citet{lopez-pintado_half-region_2011} has proposed the half-region depth suited for functional data and for high dimensional data. Here, following their work we propose a local version of this data depth, we study its theoretical properties and illustrate its behaviour with examples based on real data sets.

\noindent \textbf{Keywords}: Clustering, Functional Data, Half-region Depth, Local Depth, Time Series.
\end{abstract}

\section{Introduction}
\label{SecIntroduction}
Functional data, such as continuous trajectories of a process, high frequency time series and irregularly spaced time series, are  encountered in many fields due to increased sensitivity and power of recording of measuring devices and computing tools. Statistical analysis of functional data sets needs specific methods and some recent overviews are \citet{ramsay_functional_2006}, \citet{ferraty_nonparametric_2006}, \citet{gonzalez-manteiga_statistics_2007}, \citet{HorKok2012} and \citet{cuevas2014}. A general task in functional data analysis is to devise an ordering within a sample of curves, with properties similar to the classical (univariate) order statistic. Here, the tools provided by data depth can prove useful because they can be applied to general types of data and they are by no means confined to the standard multivariate data. The basic tool in data depth is the depth function which describes the degree of centralness of the points of the reference space according to the underlying probability distribution. Statistical depth functions are invariant to all (non singular) affine transformations, for symmetric distributions they reach the maximum value at the center of symmetry and they become negligible when the norm of the point tends to infinity. Another critical property is ray-monotonicity, that is, the statistical depth does not increase along any ray from the center. We refer to \citet{liu_notion_1990}, \citet{zuo_general_2000} for motivation and discussion of the previous properties. The classical notion of data depth has been extended to functional data and the different available implementations aim to describe the degree of centrality of curves with respect to an underlying probability distribution or a sample. Some desirable properties of functional depths are for example invariance under some class of transformations, vanishing when the norm of a curve tends to infinity, (semi-)continuity and consistency, see \citet{MosPol2012} for more details). Once a depth function has been defined, the main tools for data analysis are the depth values and the depth regions. The depth values measure the degree of centrality of the points of the space thus providing a ranking of multivariate observations. This ranking is very different from the usual ranking of the points on the real line because it puts the highest ranks on the points near the center and the ranks monotonically decrease moving from the center towards the outside. The depth regions enclose the points with centrality not less than a fixed value and their location, size and shape convey a lot of information about the underlying distribution. Note that a depth region can be reparameterized so as to depend on the probability content instead of the depth value. It then follows that the boundary of the depth regions can be interpreted as a quantile surface of the appropriate order \citep{serfling_generalized_2002}.

Statistical depth was firstly introduced for multivariate observations, and many different definitions of depth are currently used, see \citet{liu_data_2006} and the reference therein for a review. 

One well-known example is Tukey halfspace depth \citep{tukey_mathematics_1975}. We write $\HS_{\mathbf{u}}(a)$ for the closed halfspace $\left\{\mathbf{z} \in \mathbb{R}^p: \mathbf{u}^t \mathbf{z} \geq a \right\}$. Here $\mathbf{u}$ is a $p$-vector satisfying $\mathbf{u}^t \mathbf{u} = 1$. Note that $\HS_{\mathbf{u}}(a)$ is the positive side of the hyperplane $\HP_{\mathbf{u}}(a) = \left\{ \mathbf{z} \in \mathbb{R}^p : \mathbf{u}^t \mathbf{z} = a \right\}$. The negative side $\HS_{-\mathbf{u}}(a)$ of $\HP_{\mathbf{u}}(a)$ is similarly defined. Halfspace depth is the minimum probability of all halfspaces including $\mathbf{x}$.
\begin{definition}[Halfspace Depth, \citealt{tukey_mathematics_1975}]
The halfspace depth $d_{HS}(\mathbf{x};\mathbf{X})$ maps $\mathbf{x} \in \mathbb{R}^p$ to the minimum probability, according to the random vector $\mathbf{X}$, of all closed halfspaces including $\mathbf{x}$, that is
\begin{align*}
d_{HS}(\mathbf{x}; \mathbf{X}) & = \inf_{\mathbf{u}: \mathbf{u}^t \mathbf{u}=1} \Pr_{\mathbf{X}}(\HS_{\mathbf{u}}(\mathbf{u}^t \mathbf{x})) \\
& = \inf_{\mathbf{u}: \mathbf{u}^t \mathbf{u}=1} \Pr_{\mathbf{X}}( \mathbf{z} \in \mathbb{R}^p : \mathbf{u}^t \mathbf{z} \geq \mathbf{u}^t \mathbf{x}) \ .
\end{align*}
\end{definition}

Practical implementations of data depth paradigm is often computational demanding, this is the case, for example, of the above depth function \citep{agostinelli_local_2008}. This prevents a straightforward extension of depth functions invented for multivariate data to the context of high-dimensional and/or functional data. Recently, specific notions of depth for functional data have been introduced which can be adapted to high-dimensional data without a large computational burden. Two notable definitions are the band and the half-region depths \citep{lopez-pintado_concept_2009, lopez-pintado_half-region_2011}. Here, we concentrate on the half-region depth and we propose the local (modified) half-region depth and we study its properties in the infinite and finite dimensional cases. We provide an application in the field of environmental studies. In particular, we consider a wind speed data set obtained from the meteorological station situated in Col De la Roa, San Vito di Cadore, Belluno, Italy. The station records wind speeds regularly every 15 minutes, and we consider the period 2001-2004. After removing 42 incomplete days, we obtain the final wind speed data set composed of 1420 curves which can be interpreted as daily wind speed profiles measured at $4 \times 24 = 96$ time points. The analysis is reported in Section \ref{SecExamples}. Another application of the local modified half-region depth is available in \citet{agostinelli2015} where the analysis of the shape of macroseismic fields is performed by the local modified half-region depth. \citet{sguera2014} proposes another functional depth which is local-oriented and kernel-based version of the functional spatial depth \citet{ChakrabortyChaudhuri2014}.

The paper is organized as follows. Section \ref{SecFunctionalDataDepth} reports the definitions and main properties of the half-region depth and Section \ref{SecLocalFunctionalDepth} reviews the notion of local depth and introduces the local half-region depth. An illustration, with real data set, is presented in Section \ref{SecExamples}. Concluding remarks are reported in Section \ref{SecConclusions}. An Appendix includes further examples and an illustration of the \texttt{R} package \texttt{ldfun} which implements all the methods presented in this work. The package is available upon request to the author.

\section{Functional Data Depth}
\label{SecFunctionalDataDepth}
In this section we introduce the notation for functional data sets and we discuss the half-region depth. A typical model for functional data is a function $y=y(t)$, $t \in T$, belonging to the space $C(T)$ of real continuous functions on some compact interval $T \subset \mathbb{R}$. A functional data set is a collection of functions $\mathbf{y}_n = \{ y_i \in C(T): i=1, \cdots, n\}$. The graph of a function $y$ is the subset of the space $G(y) = \{(t,y(t)): t \in T \}$.

The hypograph ($\hyp$) and the epigraph ($\epi$) of a function $y \in C(T)$ are defined as follows
\begin{align*}
\hyp(y) & = \{ (t,z): t \in T, z \le y(t) \} \ , \\
\epi(y) & = \{ (t,z): t \in T, z \ge y(t) \} \ . 
\end{align*}
The proportions of graphs that belong to the hypograph (epigraph) of a function $y$ are
\begin{align*}
R_{hyp}(y; \mathbf{y}_n) & = \frac{1}{n} \sum_{i=1}^n \ind(G(y_i) \subset \hyp(y)) = \frac{1}{n} \sum_{i=1}^n \ind(y_i(t) \le y(t), \forall t \in T) \ , \\
R_{epi}(y; \mathbf{y}_n) & = \frac{1}{n} \sum_{i=1}^n \ind(G(y_i) \subset \epi(y))  = \frac{1}{n} \sum_{i=1}^n \ind(y_i(t) \ge y(t), \forall t \in T) \ .
\end{align*}
These definitions can be extended in a straightforward way to a stochastic process $Y=Y(\omega,t)$ with trajectories in $C(T)$. In this case, we define $R_{hyp}(y; Y) = P(G(Y) \subset \hyp(y))$ and $R_{epi}(y; Y) = P(G(Y) \subset \epi(y))$. A simple notion of functional depth is the minimum probability of a trajectory of the random process $Y$ to belong in the hypograph or in the epigraph of a given curve $y$.
\begin{definition}[Half-Region Depth, \citealt{lopez-pintado_half-region_2011}]
 For a functional data set $\mathbf{y}_n$, the half-region depth of a curve $y$ is
\begin{equation*}
d_{HR}(y; \mathbf{y}_n) = \min(R_{hyp}(y; \mathbf{y}_n), R_{epi}(y; \mathbf{y}_n)) \ .
\end{equation*}
The population version is $d_{HR}(y; Y) = \min(R_{hyp}(y; Y), R_{epi}(y; Y))$.
\end{definition}
The notions of hypograph and epigraph can be adapted to finite-dimensional data. Parallel coordinates \citep[see][]{inselberg_plane_1985, unwin_multivariate_2006} are a convenient tool to visualize a set of points in $\mathbb{R}^p$. The Cartesian orthogonal axes become parallel and equally spaced in parallel coordinates; thus, points with dimension larger than three can be easily represented. Observations in $\mathbb{R}^p$ can be seen as real functions defined on the set of indexes $(1, 2, \cdots, p)$ and expressed as $\mathbf{x}=(x(1), x(2), \cdots, x(p))$. The hypograph and epigraph of a point $\mathbf{x} \in \mathbb{R}^p$ can be expressed respectively as $\hyp(\mathbf{x}) = \{(k, z) \in (1, 2, \cdots, p) \times \mathbb{R} : z \le x(k)\}$ and $\epi(\mathbf{x}) = \{(k, z) \in (1, 2, \cdots, p) \times \mathbb{R} : z \ge x(k)\}$. Let $\mathbf{e}_j$ be a $p$-vector with $1$ at the $j$-th coordinate and $0$ otherwise, $j=1, \cdots, p$. The half-region depth in Cartesian coordinates on $\mathbb{R}^p$ for the random vector $\mathbf{X}$ at the point $\mathbf{x}$ is given by
\begin{equation*}
d_{HR}(\mathbf{x}; \mathbf{X}) = \min \left( \Pr_{\mathbf{X}} \left( \bigcap_{j=1}^p \HS_{e_j}(x(j)) \right), \Pr_{\mathbf{X}} \left( \bigcap_{j=1}^p \HS_{-e_j}(x(j)) \right) \right) \ .
\end{equation*}

The properties of half-region depth are studied in \citet{lopez-pintado_half-region_2011}. In the finite dimensional case, half-region depth is not affine invariant, however it is invariant with respect to translations and some types of dilation, that is, linear transformations operated by positive (or negative) definite diagonal matrices. For $p=1$ half-region depth is equivalent to the halfspace depth. It vanishes as the norm of $\mathbf{x}$ increases. The empirical version uniformly almost surely converges to the corresponding population version, the empirical maximizer is a consistent estimator of the unique maximizer of the population version. In the general case half-region depth is invariant under the transformations $aY+b$ where $a(t)$ is either positive or negative in $T$. Similar results, as for the finite dimensional case, hold for the asymptotic behaviour.

A modified version of half-region depth which is less restrictive than the definition above, is proposed in \citet{lopez-pintado_half-region_2011}. This modified version can be used for the analysis of irregular (non-smooth) curves with many crossing points. We denote the superior ($\EL$) and the inferior ($\HL$) lengths as
\begin{align*}
\EL(y; Y) & = \frac{1}{\lambda(T)} \E \left( \lambda(t \in T: y(t) \le Y(t)) \right) \\
\HL(y; Y) & = \frac{1}{\lambda(T)} \E \left( \lambda(t \in T: y(t) \ge Y(t)) \right)
\end{align*}
where $\lambda$ stands for the Lebesgue measure on $\mathbb{R}$. $\EL(y; Y)$ can be interpreted as the ``proportion of time'' the stochastic process $Y$ is greater than the curve $y$, similarly for $\HL(y; Y)$. The modified half-region depth at $y$ is:
\begin{equation*}
d_{MHR}(y; Y) = \min(\EL(y;Y), \HL(y;Y)) \ .
\end{equation*}
Let $\mathbf{y}_n=(y_1, \cdots, y_n)$ be a set of curves from the stochastic process $Y$. The sample version of this depth is
\begin{equation*}
d_{MHR}(y; \mathbf{y}_n) = \min(\EL(y;\mathbf{y}_n), \HL_n(y;\mathbf{y}_n)),
\end{equation*}
where
\begin{align*}
\EL(y;\mathbf{y}_n) & = \frac{1}{n \lambda(T)} \sum_{i=1}^n \lambda\left(t \in T: y(t) \le y_i(t) \right) \\
\HL(y;\mathbf{y}_n) & = \frac{1}{n \lambda(T)} \sum_{i=1}^n \lambda\left(t \in T: y(t) \ge y_i(t) \right) \ .
\end{align*}
are the sample means of the proportions of the lengths respectively in the epigraph and hypograph.

\section{Local Functional Depth}
\label{SecLocalFunctionalDepth}
At the beginning of data depth it was almost a postulate that depth ranks could single out just one center of a distribution, corresponding to the maximizer of the ranks, whatever the shape of the distribution, \textit{unimodal or multimodal}. Current developments are showing that certain modified depth functions can indeed account for multimodal data, having multiple centers. These generalized definitions, called \textit{local} depth, measure centrality conditional on a neighbourhood of each point of the space and provide a tool that is sensitive to local features of the data, while retaining most features of regular depth functions. In most cases, when the neighbourhood radius tends to infinity, the usual (global) depth is recovered. More precisely, the monotonicity along rays from the center of the distribution prevents the depth function to describe local features of a distribution, like the modes (e.g. \citet[p. 461]{zuo_general_2000}, see also \citet{rafalin_proximity_2005}). However, when the distribution is unimodal, local depth behaves very similarly to a regular depth and provides a similar ranking. For halfspace depth, halfspaces are replaced by infinite slabs with finite width. When the threshold (slab width) tends to infinity, the ordinary definition is recovered. 

In more details, for $\tau>0$, the closed slab $\SL_{\mathbf{u}}(a,a+\tau)$ is the intersection of the positive side of $\HP_{\mathbf{u}}(a)$ and the negative side of $\HP_{\mathbf{u}}(a+\tau)$
\begin{align*}
\SL_{\mathbf{u}}(a,a+\tau) & = \HS_{\mathbf{u}}(a) \cap \HS_{-\mathbf{u}}(a+\tau) \\
& = \left\{ \mathbf{z} \in \mathbb{R}^{p}: a \leq \mathbf{u}^t \mathbf{z} \leq a+\tau \right\} \ .
\end{align*}

\begin{definition}[Local Halfspace Depth, \citealt{agostinelli_local_2011}]
For any $\tau>0$, local halfspace depth is the minimum probability, according to the random vector $\mathbf{X}$, of the closed slab $\SL_{\mathbf{u}}(\mathbf{u}^{t} \mathbf{x}, \mathbf{u}^{t} \mathbf{x} + \tau)$, that is
\begin{align*}
ld_{HS}(\mathbf{x}; \mathbf{X}, \tau) & = \inf_{\mathbf{u}: \mathbf{u}^t \mathbf{u}=1} \Pr_{\mathbf{X}}(\SL_{\mathbf{u}}(\mathbf{u}^t \mathbf{x}, \mathbf{u}^t \mathbf{x} + \tau)) \\
& = \inf_{\mathbf{u}: \mathbf{u}^t \mathbf{u}=1} \Pr_{\mathbf{X}}( \mathbf{z} \in \mathbb{R}^p : \mathbf{u}^t \mathbf{x} \leq \mathbf{u}^t \mathbf{z} \leq \mathbf{u}^t \mathbf{x} + \tau) \ .
\end{align*}
\end{definition}

\subsection{Local half-region depth}

A local version of half-region depth is obtained by imposing a restriction on the size of the hypograph and epigraph under consideration. 
For a given non negative function $\tau$, the closed negative slab $\hyp(y; \tau)$ is the intersection between $\hyp(y)$ and $\epi(y-\tau)$, that is
\begin{equation*}
\hyp(y; \tau) = \hyp(y) \cap \epi(y-\tau) = \{ (t,z): t \in T, y(t) - \tau(t) \le z \le y(t) \}
\end{equation*}
and, in the similar way for the closed positive slab $\epi(y; \tau)$. The local half-region depth is defined as the minimum probability of these two sets.
\begin{definition}[Local Half-Region Depth]
For a stochastic process $Y$, the local half-region depth for the curve $y$ is defined as
\begin{equation*}
ld_{HR}(y; Y, \tau) = \min(P(G(Y) \subset \hyp(y;\tau)), P(G(Y) \subset \epi(y;\tau))) \ .
\end{equation*}
For a functional data set $\mathbf{y}_n$, the empirical version is
\begin{equation*}
ld_{HR}(y; \mathbf{y}_n, \tau) = \min\left( \frac{1}{n} \sum_{i=1}^n \ind(G(y_i) \subset \hyp(y;\tau)), \frac{1}{n} \sum_{i=1}^n \ind(G(y_i) \subset \epi(y;\tau))\right) \ .
\end{equation*}
\end{definition}
In the finite-dimensional case and Cartesian coordinates, this leads to consider slabs instead of halfspaces, each slabs has its own dimension given by $\tau(j)$ ($j=1, \cdots, p$)
\begin{equation*}
ld_{HR}(\mathbf{x}; \mathbf{X}, \boldsymbol{\tau}) = \min \left( \Pr_{\mathbf{X}} \left( \bigcap_{j=1}^p \SL_{-e_j}(x(j)-\tau(j),x(j)) \right), \Pr_{\mathbf{X}} \left( \bigcap_{j=1}^p \SL_{e_j}(x(j),x(j)+\tau(j)) \right) \right) \ .
\end{equation*}
A different formulation is possible using an Inclusion-Exclusion formula, in fact, the probability of the hyper cube can be rewritten in terms of the distribution function as follows
\begin{align} \label{EquLHRDistributionFunction}
\Pr_{\mathbf{X}} \left( \bigcap_{j=1}^p \SL_{-e_j}(x(j)-\tau(j),x(j)) \right) & = F_{\mathbf{X}}(\mathbf{x}) + \sum_{k=1}^p (-1)^k \sum_{I \subseteq \{1, \cdots, p\}, |I|=k} F_{\mathbf{X}}((\mathbf{x} - \boldsymbol{\tau}_I)^-) \\
\Pr_{\mathbf{X}} \left( \bigcap_{j=1}^p \SL_{e_j}(x(j),x(j)+\tau(j)) \right) & = F_{\mathbf{X}}(\mathbf{x}+\tau(j)) + \sum_{k=1}^p (-1)^k \sum_{I \subseteq \{1, \cdots, p\}, |I|=k} F_{\mathbf{X}}((\mathbf{x} + \boldsymbol{\tau}_I)^-) \nonumber
\end{align}
where $\boldsymbol{\tau}_I$ is a $p$-vector with the elements indexed by $I$ equal to the corresponding element in $\boldsymbol{\tau}$ and $0$ for all the others, and $|\cdot|$ is the cardinality of the set.

Properties of the local half-region depth applied to functional data are now studied while Section \ref{sec:plhrdfdc}, available in the Appendix, provides results for the finite dimensional case. Let $\mathbf{y}_n = (y_1, \cdots, y_n)$ a functional data set from a stochastic process $Y$ in $C(I)$ with distribution function $P$. Assume that the stochastic process $Y$ is tight, i.e.,   
\begin{equation*}
\Pr (\| Y \|_\infty \ge M) \rightarrow 0 \ , \qquad M \rightarrow \infty \ .
\end{equation*}

The local half-region depth satisfies a linear invariance property provided that the threshold function $\tau$ is transformed appropriately, i.e., consider $a$ and $b$ functions in $C(I)$, where $a(t) > 0$ or $a(t) < 0$ for every $t \in I$, then
\begin{equation*}
ld_{HR}(a y + b; a Y + b, |a| \tau) = ld_{HR}(y; Y, \tau) \ .
\end{equation*}
In the next proposition we study the behaviour of the local half-region depth according to the behaviour of the $\tau$ function.
\begin{proposition}
\label{PropLocalHalfRegionTauInfiniteDimension}
Let $ld_{HR}(\cdot, Y, \tau)$ be the local half-region depth. Then, for any given function $y$
\begin{enumerate}
\item[i.] if $\tau_1 < \tau_2$, $ld_{HR}(y; Y, \tau_1) \leq ld_{HR}(y; Y, \tau_2)$;

\item[ii.] for any function $\tau$ such that $\tau \ge 0$, $P(G(Y)=g(y)) \leq ld_{HR}(y; Y, \tau) \leq d_{HR}(y; Y)$;

\item[iii.] $\lim_{\min_{t \in I} \tau(t) \rightarrow \infty} ld_{HR}(y; Y, \tau) = d_{HR}(y; Y)$;

\item[iv.] $\lim_{\| \tau \|_\infty \rightarrow 0^{+}} ld_{HR}(y; Y, \tau) = P(G(Y)=g(y))$.
\end{enumerate}
\end{proposition}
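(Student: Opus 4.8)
The proof rests on one structural fact: the slab sets are monotone in $\tau$. Since $\tau\ge 0$, a pointwise ordering $\tau_1\le\tau_2$ gives $[y(t)-\tau_1(t),y(t)]\subseteq[y(t)-\tau_2(t),y(t)]$, hence $\hyp(y;\tau_1)\subseteq\hyp(y;\tau_2)\subseteq\hyp(y)$, and symmetrically for $\epi$. Consequently the containment event $\{G(Y)\subseteq\hyp(y;\tau)\}$ is increasing in $\tau$: if the whole graph of $Y$ fits in the smaller band it fits in the larger one. For part i this is all that is needed. From $\{G(Y)\subseteq\hyp(y;\tau_1)\}\subseteq\{G(Y)\subseteq\hyp(y;\tau_2)\}$ (and likewise for $\epi$) we obtain the ordering of the two probabilities defining $ld_{HR}$, and since $\min$ is monotone in each argument, $ld_{HR}(y;Y,\tau_1)\le ld_{HR}(y;Y,\tau_2)$ follows.

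For part ii I would sandwich $ld_{HR}$ between the two extreme events. For the lower bound, on $\{G(Y)=G(y)\}$ we have $Y(t)=y(t)$ for all $t$, so, using $\tau\ge 0$, the graph of $Y$ lies in both $\hyp(y;\tau)$ and $\epi(y;\tau)$; thus $\{G(Y)=G(y)\}$ is contained in each containment event, giving $P(G(Y)=G(y))\le ld_{HR}(y;Y,\tau)$. For the upper bound, the inclusions $\hyp(y;\tau)\subseteq\hyp(y)$ and $\epi(y;\tau)\subseteq\epi(y)$ yield $P(G(Y)\subseteq\hyp(y;\tau))\le R_{hyp}(y;Y)$ and $P(G(Y)\subseteq\epi(y;\tau))\le R_{epi}(y;Y)$; taking minima of both sides and using monotonicity of $\min$ gives $ld_{HR}(y;Y,\tau)\le d_{HR}(y;Y)$.

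Parts iii and iv are the two monotone limits, obtained by combining part i with continuity of the probability measure along monotone sequences of events and continuity of $\min$. Since $ld_{HR}$ is monotone in $\tau$, it suffices to treat constant thresholds and squeeze: with $m=\min_{t}\tau(t)$ and $M=\|\tau\|_\infty$ one has $m\le\tau\le M$ pointwise, hence $ld_{HR}(y;Y,m)\le ld_{HR}(y;Y,\tau)\le ld_{HR}(y;Y,M)$. For part iii I pair the left inequality with the upper bound of part ii to get $ld_{HR}(y;Y,m)\le ld_{HR}(y;Y,\tau)\le d_{HR}(y;Y)$, and then show $ld_{HR}(y;Y,m)\to d_{HR}(y;Y)$ as $m\to\infty$: here $\hyp(y;m)\uparrow\hyp(y)$, since a point $(t,z)$ with $z\le y(t)$ enters the band once $m\ge y(t)-z$, so the increasing events $\{G(Y)\subseteq\hyp(y;m)\}$ exhaust $\{G(Y)\subseteq\hyp(y)\}$ and continuity from below gives $P(G(Y)\subseteq\hyp(y;m))\uparrow R_{hyp}(y;Y)$, similarly for $\epi$, and passing $\min$ through the limit finishes. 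For part iv I pair the right inequality with the lower bound of part ii to get $P(G(Y)=G(y))\le ld_{HR}(y;Y,\tau)\le ld_{HR}(y;Y,M)$, and show $ld_{HR}(y;Y,M)\to P(G(Y)=G(y))$ as $M\to 0^{+}$: now $\hyp(y;M)\downarrow G(y)$, so $\bigcap_{M>0}\{G(Y)\subseteq\hyp(y;M)\}=\{G(Y)\subseteq G(y)\}=\{G(Y)=G(y)\}$, and continuity from above (valid since the measures are finite) delivers the common limit.

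The main obstacle is the set-level bookkeeping in parts iii and iv: one must check that the containment events, not merely the bands, converge to the intended limit events. The delicate point is that $\{G(Y)\subseteq\hyp(y;m)\}\uparrow\{G(Y)\subseteq\hyp(y)\}$ requires the almost-sure finiteness of the uniform gap $\sup_{t}(y(t)-Y(t))$ on $\{G(Y)\subseteq\hyp(y)\}$, which holds because each trajectory is continuous on the compact $I$ (tightness supplying the corresponding control where uniformity is needed), and that $\{G(Y)\subseteq\hyp(y;M)\}\downarrow\{G(Y)=G(y)\}$ hinges on the identity $G(Y)\subseteq G(y)\iff Y\equiv y$ for graphs over a common domain. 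Once these set identities are in place, continuity of probability along monotone sequences and continuity of $\min$ make the passage to the limit routine.
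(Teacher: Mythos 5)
Your proof is correct and follows essentially the same route as the paper's: monotonicity of the slabs $\hyp(y;\tau)$ and $\epi(y;\tau)$ in $\tau$ drives part i, the sandwich between $G(y)$ and $\hyp(y)$, $\epi(y)$ gives part ii, and parts iii and iv are the two monotone limits. In fact your treatment of iii and iv is more careful than the paper's, which simply asserts that $\tau\rightarrow\bar{\infty}$ (respectively $\tau\rightarrow\bar{0}$) without spelling out the reduction to constant thresholds, the identification of the limit events, or the appeal to continuity of the probability measure along monotone sequences --- details you correctly supply.
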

Proofs are reported in Section \ref{sec:proofs} of the Appendix. In the analysis of a functional data set, the function $\tau$ should be determined. In most cases $\tau$ can be the constant function which takes value equal to the quantile of the empirical distribution of the distance between all pairs of curves. In practice, the use of sup norm to measure the distance and a quantile order in the interval $5\%$-$30\%$ proved to be effective in most situations.

The local half-region depth of a function converges to zero when its norm tends to infinity.
\begin{proposition} \label{PropLocalHalfRegionZero}
The local half-region depths  $ld_{HR}(y; Y, \tau)$ and $ld_{HR}(y; \mathbf{y}_n, \tau)$, for all function $\tau$, always greater than zero, satisfy
\begin{align*}
\sup_{\| y \|_\infty \ge M} ld_{HR}(y; Y, \tau) \rightarrow 0 \ , \qquad \text{when} \ M \rightarrow \infty \ , \\
\sup_{\| y \|_\infty \ge M} ld_{HR}(y; \mathbf{y}_n, \tau) \rightarrow 0 \ , \qquad \text{when} \ M \rightarrow \infty \ .
\end{align*}    
\end{proposition}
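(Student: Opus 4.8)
The plan is to reduce the statement to the tightness hypothesis on $Y$ by exploiting the fact that membership of $G(Y)$ in a thin slab around a curve $y$ of large sup-norm forces $Y$ itself to have a large sup-norm. First I would unfold the two inclusion events. By the definition of the slabs, $G(Y) \subset \hyp(y;\tau)$ means $y(t) - \tau(t) \le Y(t) \le y(t)$ for every $t \in T$, while $G(Y) \subset \epi(y;\tau)$ means $y(t) \le Y(t) \le y(t) + \tau(t)$ for every $t \in T$. Since $T$ is compact and $y$ is continuous, $\|y\|_\infty$ is attained at some $t_0$, so $\|y\|_\infty \ge M$ yields a point $t_0$ with either $y(t_0) \ge M$ or $y(t_0) \le -M$.

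Next I would split on these two cases. If $y(t_0) \ge M$, then on the event $\{G(Y) \subset \hyp(y;\tau)\}$ we have $Y(t_0) \ge y(t_0) - \tau(t_0) \ge M - \|\tau\|_\infty$, whence $\|Y\|_\infty \ge M - \|\tau\|_\infty$, so $P(G(Y) \subset \hyp(y;\tau)) \le \Pr(\|Y\|_\infty \ge M - \|\tau\|_\infty)$. Symmetrically, if $y(t_0) \le -M$, then on $\{G(Y) \subset \epi(y;\tau)\}$ we have $Y(t_0) \le y(t_0) + \tau(t_0) \le -M + \|\tau\|_\infty$, so again $\|Y\|_\infty \ge M - \|\tau\|_\infty$ and $P(G(Y) \subset \epi(y;\tau)) \le \Pr(\|Y\|_\infty \ge M - \|\tau\|_\infty)$. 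Because $ld_{HR}$ is the minimum of the hypograph and epigraph probabilities, in either case it is dominated by the probability I have just controlled, yielding the uniform bound
\[
\sup_{\|y\|_\infty \ge M} ld_{HR}(y; Y, \tau) \le \Pr(\|Y\|_\infty \ge M - \|\tau\|_\infty).
\]

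Finally, letting $M \to \infty$ sends $M - \|\tau\|_\infty \to \infty$, and tightness of $Y$ drives the right-hand side to zero, which settles the population claim. For the empirical version the identical case analysis applies with $Y$ replaced by each fixed curve $y_i$: the indicator $\ind(G(y_i) \subset \hyp(y;\tau))$ (resp. $\epi$) can equal one only if $\|y_i\|_\infty \ge M - \|\tau\|_\infty$, so once $M$ exceeds $\max_i \|y_i\|_\infty + \|\tau\|_\infty$ every indicator vanishes and the supremum is \emph{identically} zero, hence trivially convergent.

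The step I expect to demand the most care is securing uniformity over all $y$ with $\|y\|_\infty \ge M$: both the location $t_0$ and the value $\tau(t_0)$ depend on $y$, and it is precisely the boundedness of $\tau$ (which holds whenever $\tau$ is continuous on the compact set $T$, and in particular for the constant thresholds used in practice) that lets me replace $\tau(t_0)$ by $\|\tau\|_\infty$ and obtain a threshold $M - \|\tau\|_\infty$ independent of $y$. Without this uniform control the pointwise estimates would not assemble into the stated statement about the supremum.
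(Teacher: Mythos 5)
Your argument is correct, but it is not the route the paper takes. The paper's proof is a two-line reduction: by part (ii) of Proposition \ref{PropLocalHalfRegionTauInfiniteDimension} one has $ld_{HR}(y;Y,\tau) \le d_{HR}(y;Y)$, and the vanishing at infinity of the global half-region depth is then quoted from Proposition 5 of L\'opez-Pintado and Romo (2011); the empirical case is handled the same way. You instead give a direct, self-contained derivation from the tightness hypothesis, locating $t_0$ where $\|y\|_\infty$ is attained and showing that the relevant slab event forces $\|Y\|_\infty \ge M - \|\tau\|_\infty$, which yields the explicit uniform bound $\sup_{\|y\|_\infty \ge M} ld_{HR}(y;Y,\tau) \le \Pr(\|Y\|_\infty \ge M - \|\tau\|_\infty)$. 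What your approach buys is transparency and a quantitative rate tied directly to the tail of $\|Y\|_\infty$, plus the pleasant observation that the empirical supremum is eventually identically zero; what it costs is the extra hypothesis $\|\tau\|_\infty < \infty$, which you correctly flag and which is harmless here since $\tau$ is a continuous function on the compact $T$ throughout the paper. Note, though, that you could have dispensed with boundedness of $\tau$ altogether by using the one-sided constraint built into each slab: in the case $y(t_0)\ge M$ the event $G(Y)\subset\epi(y;\tau)$ already forces $Y(t_0)\ge y(t_0)\ge M$, and symmetrically for $y(t_0)\le -M$ with the hypograph; bounding the minimum by that side reproduces, in effect, the paper's reduction to the global depth.
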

It is a continuous function when the marginal distributions of the random process $Y$ are absolutely continuous.  
\begin{proposition}
\label{PropSemiContinuousFunctional}
Let $\tau$ be a predefined function then $ld_{HR}(\cdot; Y, \tau)$ is an upper semicontinuous functional. Moreover, if $P$ has absolutely continuous marginals, then $ld_{HR}(\cdot; Y, \tau)$ is continuous.
\end{proposition}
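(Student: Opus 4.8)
The plan is to reduce the statement to its two constituent functionals and to treat upper semicontinuity and full continuity separately. Write $H(y) = P(G(Y) \subset \hyp(y;\tau))$ and $E(y) = P(G(Y) \subset \epi(y;\tau))$, so that $ld_{HR}(\cdot; Y, \tau) = \min(H, E)$. Since a pointwise minimum of two upper semicontinuous (respectively continuous) functionals is again upper semicontinuous (respectively continuous), it suffices to prove each assertion for $H$, the argument for $E$ being symmetric. Throughout I would work in the sup-norm topology on $C(I)$ and with the event $A_y = \{\omega : y(t)-\tau(t) \le Y(\omega,t) \le y(t), \ \forall t \in I\}$, so that $H(y) = P(A_y)$.

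For upper semicontinuity, take $y_n \to y$ uniformly and show $\limsup_n A_{y_n} \subseteq A_y$: if $\omega$ lies in infinitely many $A_{y_n}$, then along that subsequence $y_n(t)-\tau(t) \le Y(\omega,t) \le y_n(t)$ for every $t$, and letting $n\to\infty$ the non-strict inequalities pass to the limit because $y_n \to y$ pointwise, giving $\omega \in A_y$. Closedness of the slab is exactly what makes this inclusion hold. Reverse Fatou for the finite measure $P$ then gives $\limsup_n H(y_n) \le P(\limsup_n A_{y_n}) \le P(A_y) = H(y)$, which is upper semicontinuity; note that no assumption on the marginals is needed here, consistent with the unconditional first claim.

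For the continuity assertion I would supply the matching lower bound. Introduce the open (strict) slab event $A_y^\circ = \{\omega : y(t)-\tau(t) < Y(\omega,t) < y(t), \ \forall t \in I\}$. Because $I$ is compact and $Y(\omega,\cdot)$, $y$, $\tau$ are continuous, strict inequalities at every $t$ upgrade to a uniform gap $\delta>0$; hence for $n$ large enough that $\|y_n-y\|_\infty < \delta$ one has $\omega \in A_{y_n}$, proving $A_y^\circ \subseteq \liminf_n A_{y_n}$. Fatou's lemma yields $\liminf_n H(y_n) \ge P(\liminf_n A_{y_n}) \ge P(A_y^\circ)$. Combined with upper semicontinuity, continuity at $y$ follows the moment one knows $P(A_y^\circ) = P(A_y)$, i.e.\ that the boundary-touching event $A_y \setminus A_y^\circ$ is $P$-null.

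The main obstacle is precisely this last step, and it is where absolute continuity of the marginals must be used. The event $A_y \setminus A_y^\circ$ is contained in $\{M^+ = 0\} \cup \{M^- = 0\}$, where $M^+ = \sup_{t\in I}(Y(t)-y(t))$ and $M^- = \inf_{t\in I}(Y(t)-y(t)+\tau(t))$, i.e.\ in the event that the graph of $Y$ touches the upper curve $y$ or the lower curve $y-\tau$ of the slab. For each fixed $t$, absolute continuity of the marginal of $Y(t)$ gives $P(Y(t)=y(t))=0$ and $P(Y(t)=y(t)-\tau(t))=0$; the delicate point is to promote these pointwise facts to the supremum and infimum over the uncountable index set $I$, so that $M^+$ and $M^-$ carry no atom at $0$. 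I would attempt this by using path-continuity to reduce the suprema to a countable dense subset of $I$ together with a monotone limiting argument over shrinking and expanding slabs. This is the step that genuinely invokes the regularity hypothesis on $P$, and it is the one I expect to require the most care.
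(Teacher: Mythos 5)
Your decomposition and your treatment of upper semicontinuity follow the paper's route and are in fact tighter than the paper's: the paper simply asserts that $\limsup_n P(G(Y)\subseteq \hyp(y_n;\tau))\le P(G(Y)\subseteq\hyp(y;\tau))$, whereas your identification of the event $A_y$ as a limit superior of the $A_{y_n}$ plus reverse Fatou actually justifies that inequality. That half of your argument is complete and correct, and correctly uses no hypothesis on the marginals.

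The continuity half, however, contains a genuine gap, and it sits exactly where you say it does. You correctly reduce the problem to showing $P(A_y\setminus A_y^{\circ})=0$, i.e.\ that the event where the path of $Y$ touches the curve $y$ or the curve $y-\tau$ without crossing is $P$-null, but you do not prove it; you only sketch a reduction of the supremum to a countable dense subset of $I$. That reduction gives measurability of $M^{+}=\sup_{t\in I}(Y(t)-y(t))$, not the absence of an atom at $0$, and absolute continuity of the one-dimensional marginals is not enough to rule out such an atom. Concretely, take $V$ uniform on $[0,1]$ and $Y(t)=-(t-V)^{2}$: every marginal $Y(t)$ is absolutely continuous, the paths are continuous, yet $\sup_{t}Y(t)=0$ almost surely, so with $y\equiv 0$ one gets $P(A_y\setminus A_y^{\circ})=1$ and the hypograph functional $P(G(Y)\subseteq\hyp(\cdot;\tau))$ is genuinely discontinuous at $y$. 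So the step you flag as ``requiring the most care'' cannot in fact be completed from the stated hypotheses, and some joint (not merely marginal) regularity of the process is needed. In fairness, the paper's own proof disposes of precisely this step with the phrase ``it is easy to prove that'' the two symmetric-difference probabilities vanish, so you have located, rather than introduced, the weak point; but as a proof your proposal is incomplete at the same place, and an honest completion would require strengthening the hypothesis.
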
  
In the next theorem we establish the strong consistency of the sample local half-region depth.
\begin{theorem}
\label{TeoConsistency}
$ld_{HR}(\cdot; \mathbf{Y}_n, \tau)$ is strongly consistent, 
\begin{equation*}
ld_{HR}(y; \mathbf{Y}_n, \tau) \stackrel{a.s.}{\longrightarrow} ld_{HR}(y; Y, \tau)
\end{equation*}  
\end{theorem}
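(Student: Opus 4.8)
The plan is to reduce the claim to the Strong Law of Large Numbers (SLLN) followed by the continuity of the $\min$ map. First I would observe that, for the fixed pair $(y,\tau)$, each of the two arguments of the outer $\min$ defining $ld_{HR}(y; \mathbf{Y}_n, \tau)$ is an empirical average of independent and identically distributed Bernoulli variables. Concretely, set
\[
Z_i = \ind(G(Y_i) \subset \hyp(y;\tau)), \qquad W_i = \ind(G(Y_i) \subset \epi(y;\tau)), \qquad i = 1, \dots, n .
\]
Since $Y_1, Y_2, \dots$ are i.i.d. copies of $Y$, the $Z_i$ are i.i.d. (and likewise the $W_i$), each taking values in $\{0,1\}$, hence bounded and integrable, with $\E Z_1 = P(G(Y) \subset \hyp(y;\tau))$ and $\E W_1 = P(G(Y) \subset \epi(y;\tau))$.

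Next I would apply the SLLN separately to the two sequences, obtaining
\[
\frac{1}{n}\sum_{i=1}^n Z_i \stackrel{a.s.}{\longrightarrow} P(G(Y) \subset \hyp(y;\tau)), \qquad \frac{1}{n}\sum_{i=1}^n W_i \stackrel{a.s.}{\longrightarrow} P(G(Y) \subset \epi(y;\tau)) .
\]
Each convergence holds on an event of probability one; I would then pass to the intersection $\Omega_0$ of these two events, which still satisfies $P(\Omega_0) = 1$ and on which both averages converge to their respective limits simultaneously. No moment conditions are required here, precisely because the indicators are bounded and therefore automatically integrable.

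Finally, on $\Omega_0$ I would invoke the continuity of the map $(a,b) \mapsto \min(a,b)$, for instance through the identity $\min(a,b) = \tfrac12(a + b - |a-b|)$, which is a continuous function of its arguments. Since both coordinates converge on $\Omega_0$, so does their minimum, giving
\[
ld_{HR}(y; \mathbf{Y}_n, \tau) = \min\!\left(\frac{1}{n}\sum_{i=1}^n Z_i,\ \frac{1}{n}\sum_{i=1}^n W_i\right) \stackrel{a.s.}{\longrightarrow} ld_{HR}(y; Y, \tau),
\]
which is exactly the assertion. The only point requiring care, rather than a genuine obstacle, is the joint handling of the two almost-sure statements: one must intersect the two full-measure sets \emph{before} applying continuity of $\min$, since a coordinatewise limit of a vector-valued map is guaranteed only once the exceptional null sets have been combined. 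The argument parallels the consistency proof for the (global) half-region depth, the slabs $\hyp(y;\tau)$ and $\epi(y;\tau)$ playing the role of the hypograph and epigraph.
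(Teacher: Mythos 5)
Your proposal is correct and follows essentially the same route as the paper's own proof: both write each argument of the outer $\min$ as an empirical mean of i.i.d.\ bounded indicator variables, apply the strong law of large numbers to each, and conclude via the continuity of the minimum. Your version merely spells out the intersection of the two full-measure events, a detail the paper leaves implicit.
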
  
Hereafter, we establish the uniform consistency of the sample half-region depth and the strong consistency of its global maximizer using the approach based on empirical processes \citep[see for instance][]{kosorok_2008}. We recall that a subset $E$ of $C(I)$ is equicontinuous if for each $\nu > 0$ there exists $\delta(\nu) > 0$ such that for every $y \in E$ and for every $t, s \in I$, if $|t-s| < \delta(\nu)$ then $|y(t) - y(s)| < \nu$. 
\begin{theorem} \label{TeoEquiContinuousFunctional}
Let $E$ be an equicontinuous subset of $C(I)$ and assume that the stochastic process $Y$ satisfies the following condition: (i) for a given $\varepsilon > 0$, there exists $\nu(\varepsilon) > 0$ such that for every pair of functions $z_i, z_j \in C(I)$, if $\| z_i - z_j \|_\infty \leq \nu(\varepsilon)$ then $\Pr( z_j \leq Y \leq z_i) \leq \varepsilon$ and (ii) given a non negative function $\tau \in E$ such that $\inf_{t \in I} \tau(t) \geq \underline{\tau} > 2 \nu(\varepsilon) > 0$, then $ld_{HR}(\cdot; \mathbf{Y}_n, \tau)$ is strongly uniform consistent, i.e.,
\begin{equation*}
\sup_{y \in E} |ld_{HR}(y; \mathbf{Y}_n, \tau) - ld_{HR}(y; Y, \tau)| \stackrel{a.s.}{\longrightarrow} 0 \ .
\end{equation*}  
\end{theorem}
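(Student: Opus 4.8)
The plan is to recast the statement as a Glivenko--Cantelli property for two classes of indicator functionals and to verify it through a finite bracketing number, conditions (i) and (ii) being precisely what make the brackets small and non-degenerate. Write $R^\tau_{hyp}(y)=P(y-\tau\le Y\le y)$ and $R^\tau_{epi}(y)=P(y\le Y\le y+\tau)$ for the two population slab probabilities, and $R^\tau_{hyp,n},R^\tau_{epi,n}$ for the empirical versions obtained by replacing $P$ with the empirical measure $P_n$ of $\mathbf Y_n$, so that $ld_{HR}(y;Y,\tau)=\min(R^\tau_{hyp}(y),R^\tau_{epi}(y))$ and likewise for the sample depth. Since $|\min(a,b)-\min(c,d)|\le\max(|a-c|,|b-d|)$, it suffices to show
\begin{equation*}
\sup_{y\in E}\bigl|R^\tau_{hyp,n}(y)-R^\tau_{hyp}(y)\bigr|\stackrel{a.s.}{\longrightarrow}0 \quad\text{and}\quad \sup_{y\in E}\bigl|R^\tau_{epi,n}(y)-R^\tau_{epi}(y)\bigr|\stackrel{a.s.}{\longrightarrow}0 .
\end{equation*}
These are symmetric, so I would treat only the class $\mathcal F=\{\,\omega\mapsto\ind(y-\tau\le Y(\omega)\le y):y\in E\,\}$ and prove it is $P$--Glivenko--Cantelli. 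First I would reduce to a uniformly bounded index set: by Proposition \ref{PropLocalHalfRegionZero} both the population and the sample depth vanish uniformly as $\|y\|_\infty\to\infty$, so the supremum over $\{y\in E:\|y\|_\infty>M\}$ is negligible for $M$ large, and it remains to control $E_M=\{y\in E:\|y\|_\infty\le M\}$. Being equicontinuous and uniformly bounded, $E_M$ is totally bounded in $(C(I),\|\cdot\|_\infty)$ by Arzel\`a--Ascoli, so for every $\nu>0$ there is a finite $\nu$--net $y_1,\dots,y_N$ of $E_M$.

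Next I would attach a bracket to each net point. For any $y$ with $\|y-y_k\|_\infty\le\nu$ one has $y_k-\nu\le y\le y_k+\nu$ pointwise, the corresponding slab events are nested, and
\begin{equation*}
l_k:=\ind\bigl(y_k-\tau+\nu\le Y\le y_k-\nu\bigr)\le\ind\bigl(y-\tau\le Y\le y\bigr)\le\ind\bigl(y_k-\tau-\nu\le Y\le y_k+\nu\bigr)=:u_k .
\end{equation*}
Condition (ii), $\inf_{t}\tau(t)\ge\underline\tau>2\nu(\varepsilon)$, guarantees that for $\nu\le\nu(\varepsilon)$ the inner slab has positive width, so $[l_k,u_k]$ is non-degenerate; these $N$ brackets cover $\mathcal F$ on $E_M$.

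The crux is to show each bracket is small in $L_1(P)$, namely $P(u_k-l_k)=P(\text{outer slab})-P(\text{inner slab})\le\varepsilon$ for $\nu=\nu(\varepsilon)$. Here I would invoke condition (i): the set difference of the outer and inner slabs splits into ``boundary'' pieces such as $\{Y\le y_k+\nu,\ \exists t:Y(t)>y_k(t)-\nu\}$ and the symmetric lower piece, and I would dominate each by a full--trajectory thin--slab probability $P(z_j\le Y\le z_i)$ with $\|z_i-z_j\|_\infty\le\nu(\varepsilon)$, using the equicontinuity of $E$ (and of $\tau\in E$) to pass from the ``for some $t$'' descriptions to the uniform slabs to which (i) applies. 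This gives $P(u_k-l_k)\le\varepsilon$ uniformly in $k$, hence $N_{[\,]}(\varepsilon,\mathcal F,L_1(P))\le N<\infty$ for every $\varepsilon>0$. By the bracketing Glivenko--Cantelli theorem \citep[e.g.][]{kosorok_2008}, $\mathcal F$ is $P$--Glivenko--Cantelli, which yields the first uniform limit; the epigraph class is identical, and the Lipschitz bound for $\min$ assembles the two into the claimed uniform strong consistency.

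I expect the genuinely delicate point to be exactly this bracket--size estimate. Because the hypograph and epigraph events are intersections over all $t\in I$, the difference between the outer and inner slab probabilities is \emph{not} itself a thin--slab probability, so the whole force of the argument lies in turning it into one; this is where the equicontinuity of $E$ and the lower bound (ii) on $\tau$ must be used in tandem with (i). Everything else — the reduction through the Lipschitz property of $\min$, the Arzel\`a--Ascoli extraction of a finite net, and the passage from finite bracketing numbers to almost sure uniform convergence — is routine once this estimate is secured.
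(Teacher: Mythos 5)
Your route coincides with the paper's: reduce to the two slab-probability functionals via the Lipschitz property of $\min$, discard $\{y\in E:\|y\|_\infty>M\}$ using Proposition \ref{PropLocalHalfRegionZero}, extract a finite net from equicontinuity plus uniform boundedness, and conclude from a finite $L_1(P)$ bracketing number via \citet{kosorok_2008}, with conditions (i) and (ii) entering exactly where you place them. Your bracket construction is in fact more careful than the paper's: the paper brackets $\psi_+(y,\cdot)$ between functions of the same form $\psi_+(z_k,\cdot)$, which are indicators of translated slabs and hence not ordered pointwise, whereas your $l_k\le \ind(y-\tau\le Y\le y)\le u_k$ with widened and narrowed slabs is a genuine bracket.

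The gap sits precisely at the step you flag as delicate, and your sketch of it does not close. The upper boundary piece of $P(u_k)-P(l_k)$ is the event $\{Y(t)\le y_k(t)+\nu\ \forall t\}\cap\{\exists t: Y(t)>y_k(t)-\nu\}$ (intersected with the lower constraint), and this is \emph{not} contained in any set $\{z_j\le Y\le z_i\ \forall t\}$ with $\|z_i-z_j\|_\infty\le\nu(\varepsilon)$: membership only forces $Y$ into the band $(y_k-\nu,\,y_k+\nu]$ at \emph{some} $t$, while elsewhere $Y$ may lie far below $y_k-\nu$, so there is no uniform lower envelope within $\nu$ of the upper one. Equicontinuity of $E$ and of $\tau$ cannot repair this, since it constrains the index curves, not the trajectory $Y$. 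A two-point $I$ makes the failure concrete: take $y_k=(0,0)$, $\tau\equiv 20$, $Y(1)\sim U[-10,-5]$ and $Y(2)\sim U[-\nu/2,\nu/2]$ independent; every thin slab has probability at most $\nu/5$, so (i) holds with $\nu(\varepsilon)=5\varepsilon$ and (ii) holds, yet $P(u_k)=1$ and $P(l_k)=0$. What the bracketing argument actually requires is uniform continuity of the hypograph and epigraph functionals themselves, e.g.\ $P(Y\le z_i\ \forall t)-P(Y\le z_j\ \forall t)\le\varepsilon$ whenever $z_j\le z_i\le z_j+\nu(\varepsilon)$, together with its mirror image --- a strictly stronger hypothesis than (i). To be fair, the paper's own proof makes the identical leap (it asserts $P(\ind(z_j\le Y)-\ind(z_i\le Y))=\Pr(z_j\le Y\le z_i)$, equating an ``L-shaped'' event with a thin slab), so you have reproduced the published argument together with its weakest step rather than introduced a new error; but as written the reduction of the boundary pieces to events covered by (i) does not go through.
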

The following theorem proves the uniform convergence of the global maximizer of the sample local half-region depth.
\begin{theorem} \label{TeoMaximizerFunctional}
Under the assumptions stated in the previous theorem, if $ld_{HR}(y; Y, \tau)$ is uniquely maximized at $\hat{y} \in E$ and $\hat{y}_n$ is a sequence of functions in $E$ with $ld_{HR}(\hat{y}_n; \mathbf{y}_n, \tau) = \sup_{y \in E} ld_{HR}(y; \mathbf{y}_n, \tau)$ then
\begin{equation*}
\hat{y}_n \stackrel{a.s.}{\longrightarrow} \hat{y} \ ,    \qquad \text{when} \ n \rightarrow \infty \ .
\end{equation*}
\end{theorem}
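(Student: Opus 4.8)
The plan is to adapt the classical argmax (M-estimation) consistency argument to the functional setting, combining three ingredients already in hand: the strong uniform consistency of Theorem~\ref{TeoEquiContinuousFunctional}, the continuity of the population depth from Proposition~\ref{PropSemiContinuousFunctional}, and the vanishing-at-infinity behaviour of Proposition~\ref{PropLocalHalfRegionZero}. I would argue on the almost sure event on which
\[
\sup_{y \in E} |ld_{HR}(y; \mathbf{y}_n, \tau) - ld_{HR}(y; Y, \tau)| \to 0 ,
\]
and prove $\hat{y}_n \to \hat{y}$ by showing that every subsequence of $\{\hat{y}_n\}$ admits a further subsequence converging to $\hat{y}$.

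First I would establish relative compactness of the maximizer sequence. Since $E$ is equicontinuous, Arzel\`a--Ascoli reduces this to confining $\{\hat{y}_n\}$ to a uniformly bounded set. By Proposition~\ref{PropLocalHalfRegionZero} the population depth, and eventually the sample depth, is uniformly small outside a ball $\{\|y\|_\infty < M_0\}$; on the other hand, optimality gives
\[
ld_{HR}(\hat{y}_n; \mathbf{y}_n, \tau) = \sup_{y \in E} ld_{HR}(y; \mathbf{y}_n, \tau) \geq ld_{HR}(\hat{y}; \mathbf{y}_n, \tau) \to ld_{HR}(\hat{y}; Y, \tau),
\]
and this limit is strictly positive in the non-degenerate case in which a unique deepest curve exists. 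Hence for large $n$ the maximizers cannot lie in the region where the depth is small, so $\{\hat{y}_n\}$ eventually belongs to a uniformly bounded, equicontinuous, hence relatively compact, subset of $C(I)$, with limit points lying in $E$ by closedness of $E$.

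Next I would run the comparison argument. Extract from any subsequence a further subsequence $\hat{y}_{n_k} \to y^\ast \in E$ uniformly. The right-hand side of the optimality inequality above converges to $ld_{HR}(\hat{y}; Y, \tau)$ by uniform consistency, while the left-hand side converges to $ld_{HR}(y^\ast; Y, \tau)$ via
\[
|ld_{HR}(\hat{y}_{n_k}; \mathbf{y}_{n_k}, \tau) - ld_{HR}(y^\ast; Y, \tau)| \leq \sup_{y \in E}|ld_{HR}(y; \mathbf{y}_{n_k}, \tau) - ld_{HR}(y; Y, \tau)| + |ld_{HR}(\hat{y}_{n_k}; Y, \tau) - ld_{HR}(y^\ast; Y, \tau)| ,
\]
whose first term vanishes by uniform consistency and whose second term vanishes by the continuity of $ld_{HR}(\cdot; Y, \tau)$ from Proposition~\ref{PropSemiContinuousFunctional}. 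This yields $ld_{HR}(y^\ast; Y, \tau) \geq ld_{HR}(\hat{y}; Y, \tau)$, and uniqueness of the maximizer forces $y^\ast = \hat{y}$. Since every subsequence has a further subsequence with the same limit $\hat{y}$, the whole sequence converges almost surely.

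The main obstacle I anticipate is the compactness step rather than the limit passage, which is routine once uniform convergence and continuity are available. Concretely, I must rule out that the maximizers drift to infinity or fail to form an equicontinuous family: the former is controlled by Proposition~\ref{PropLocalHalfRegionZero} together with the maximal sample depth being bounded below away from zero, and the latter is granted by the standing hypothesis that $E$ is equicontinuous and closed, so that Arzel\`a--Ascoli applies and limit points remain in $E$. Some care is also needed to guarantee continuity of the population objective at $y^\ast$, which is precisely where the absolutely-continuous-marginals requirement, encoded in condition~(i) of Theorem~\ref{TeoEquiContinuousFunctional}, enters.
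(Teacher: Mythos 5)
Your argument is correct, but it follows the subsequence--compactness route to argmax consistency rather than the paper's. The paper (deferring to the proof of Proposition~\ref{PropMaximizerFinite}) uses the well-separation device: from upper semicontinuity, vanishing at infinity and uniqueness of $\hat y$ it extracts a constant $\delta = ld_{HR}(\hat y;Y,\tau) - \sup_{\|y-\hat y\|_\infty\ge\varepsilon} ld_{HR}(y;Y,\tau) > 0$, and then bounds $\Pr\left[\sup_{n\ge l}\|\hat y_n-\hat y\|_\infty>\varepsilon\right]$ by $2 \Pr\left[\sup_{n\ge l}\sup_{y}|ld_{HR}(y;\mathbf{Y}_n,\tau)-ld_{HR}(y;Y,\tau)|>\delta/2\right]$, which vanishes by the uniform consistency of Theorem~\ref{TeoEquiContinuousFunctional}. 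You instead confine the maximizers to a bounded set (using positivity of the maximal depth --- which does follow from uniqueness --- together with Proposition~\ref{PropLocalHalfRegionZero} and uniform consistency), invoke Arzel\`a--Ascoli on the equicontinuous family $E$, and identify every subsequential limit with $\hat y$ via uniform consistency, continuity and uniqueness. The two proofs consume the same ingredients; yours makes the compactness step explicit, whereas the paper buries it in the unproved claim that $\delta>0$, which in $C(I)$ likewise requires Arzel\`a--Ascoli applied to $\{y\in E:\|y-\hat y\|_\infty\ge\varepsilon,\ \|y\|_\infty\le M\}$ and hence closedness of $E$ --- an assumption not stated in the theorem that you also use tacitly when placing subsequential limits back in $E$. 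A small simplification available to you: upper semicontinuity alone already gives $ld_{HR}(y^\ast;Y,\tau)\ge\limsup_k ld_{HR}(\hat y_{n_k};Y,\tau)\ge ld_{HR}(\hat y;Y,\tau)$, so the full continuity of the population depth at $y^\ast$ that you worry about is not actually needed for the limit passage.
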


\subsection{Local modified half-region depth}

To introduce a modified version of local half-region depth, the expected proportions of time a process (or a function) stays in the upper and lower slabs need be defined: 
\begin{align*}
\EL(y; Y, \tau) & = \frac{1}{\lambda(T)} \E \left( \lambda(t \in T: y(t) \le Y(t)) \le y(t)+\tau(t) \right) \\
& \times \ind \left( y(t) - \tau \le Y(t) \le y(t) + \tau(t); \forall t \in T \right) \\
\HL(y; Y) & = \frac{1}{\lambda(T)} \E \left( \lambda(t \in T: y(t) - \tau(t) \le Y(t) \le y(t)) \right) \\
& \times \ind \left( y(t) - \tau(t) \le Y(t) \le y(t) + \tau; \forall t \in T \right)
\end{align*}
Then, the local modified half-region depth at $y$ is:
\begin{equation*}
ld_{MHR}(y; Y, \tau) = \min(\EL(y;Y,\tau), \HL(y;Y,\tau)) \ .
\end{equation*}
The sample version of this local depth is
\begin{equation*}
ld_{MHR}(y; \mathbf{y}_n, \tau) = \min(\EL(y;\mathbf{y}_n,\tau), \HL_n(y;\mathbf{y}_n,\tau)) \ ,
\end{equation*}
where
\begin{align*}
\EL(y;\mathbf{y}_n,\tau) & = \frac{1}{n \lambda(T)} \sum_{i=1}^n \lambda \left(t \in T: y(t) \le y_i(t) \le y(t) + \tau(t) \right) \\
& \times \ind \left( y(t) - \tau(t) \le y_i(t) \le y(t) + \tau; \forall t \in T \right) \\
\HL(y;\mathbf{y}_n,\tau) & = \frac{1}{n \lambda(T)} \sum_{i=1}^n \lambda\left(t \in T: y(t) - \tau(t) \le y_i(t) \le y(t) \right) \\
& \times \ind \left( y(t) - \tau(t) \le y_i(t) \le y(t) + \tau(t); \forall t \in T \right) \ .
\end{align*}

\subsection{Similarity based on half-region depth}

The half-region depth similarity $s_{HR}(x, y; \mathbf{y}_n)$ of the trajectories pair $(x, y)$ given a functional data set $\mathbf{y}_n$ is
\begin{equation*}
s_{HR}(x, y; \mathbf{y}_n) = \min(R_{hyp}(x \wedge y; \mathbf{y}_n), R_{epi}(x \vee y; \mathbf{y}_n)) \ .
\end{equation*}
where 
\begin{align*}
x \wedge y & = \{ (t,z(t)) : t \in T, z(t) = \min[x(t), y(t)] \} \ , \\
x \vee y & = \{ (t,z(t)) : t \in T, z(t) = \max[x(t), y(t)] \} \ .
\end{align*}

The local half-region depth similarity $ls_{HR}(x, y; \mathbf{y}_n)$ of the trajectories pair $(x, y)$ given a functional data set $\mathbf{y}_n$ is given by
\begin{equation*}
ls_{HR}(x, y; \mathbf{y}_n, \tau) = \min(R_{hyp}(x \wedge y; \mathbf{y}_n, \tau), R_{epi}(x \vee y; \mathbf{y}_n, \tau)) \ .
\end{equation*}
Notice that, $ls_{HR}(y, y; \mathbf{y}_n, \tau) = ld_{HR}(y; \mathbf{y}_n, \tau)$ and
\begin{equation*}
ls_{HR}(x, y; \mathbf{y}_n, \tau) \leq \min(ld_{HR}(x; \mathbf{y}_n, \tau), ld_{HR}(y; \mathbf{y}_n, \tau))
\end{equation*}  
for all $\tau > 0$. Let $z = x \vee y$ and $w = x \wedge y$. The sample version of the local modified half-region depth similarity for the couple $(x,y)$ of trajectories, given the functional data set $\mathbf{y}_n$ is
\begin{equation*}
ls_{MHR}(x, y; \mathbf{y}_n, \tau) = \min(\EL(z, w; \mathbf{y}_n, \tau), \HL(z, w;\mathbf{y}_n,\tau)) \ ,
\end{equation*}
where
\begin{align*}
\EL(z, w;\mathbf{y}_n,\tau) & = \frac{1}{n \lambda(T)} \sum_{i=1}^n \lambda \left(t \in T: z(t) \le y_i(t) \le z(t) + \tau \right) \\
& \times \ind \left( z(t) - \tau \le y_i(t) \le w(t) + \tau; \forall t \in T \right) \\
\HL(z, w;\mathbf{y}_n,\tau) & = \frac{1}{n \lambda(T)} \sum_{i=1}^n \lambda\left(t \in T: w(t) - \tau \le y_i(t) \le w(t) \right) \\
& \times \ind \left( z(t) - \tau \le y_i(t) \le w(t) + \tau; \forall t \in T \right) \ .
\end{align*}
Again, we have $ls_{MHR}(y, y; \mathbf{y}_n, \tau) = ld_{MHR}(y; \mathbf{y}_n, \tau)$ and
\begin{equation*}
ls_{MHR}(x, y; \mathbf{y}_n, \tau) \leq \min(ld_{MHR}(x; \mathbf{y}_n, \tau), ld_{MHR}(y; \mathbf{y}_n, \tau))
\end{equation*}  
for all $\tau > 0$.

There are several ways to construct Dissimilarity/Distance matrix $D=[d_{ij}]$ from Similarity matrix $S=[s_{ij}]$. In our analysis we use the following transformation proposed by \citet{gower_1966}
\begin{equation*}
d_{ij} = \left( s_{ii} + s_{jj} -2 s_{ij} \right)^{1/2} \ .
\end{equation*}
Once a dissimilarity matrix is available hierarchical cluster analysis can be easily performed on the set of observed curves.
\section{Example}
\label{SecExamples}
In this section we illustrate the behaviour of the modified half-region depth and its local version using a data set on wind speed. Further examples are available in the Appendix, where two other data sets are analyzed: individual household electric power consumption and data from remote sensing devices.

\subsection{Wind speed}

For this example we use a data set obtained from the Col De la Roa (Italy) meteorological station available at \texttt{intra.tesaf.unipd.it/Sanvito}. We concentrate on the wind speed recorded regularly every $15$ minutes in the years $2001$-$2004$. After removing $42$ incomplete days we obtain $1420$ time series of length $4 \times 24 = 96$. We apply the modified half-region depth and the local modified half-region depth with $\tau=2.619$, which corresponds to the $20\%$ quantile order of the empirical distribution of the sup norm between two time series.

The two depths provide very different ranking for most series, while some ranks are preserved, see the Figure \ref{fig_vel_dd} in the Appendix where we report the DD plot. This is confirmed by the plots in Figure \ref{fig_vel_ranks} where each series is plotted with a different color and width according to its rank, darker and wider means higher rank. We also investigate the possible presence of subgroups or structures. At this aim, we evaluate the dissimilarity measures based on the local modified half-region depth and modified half-region depth and run a hierarchical cluster analysis based on the Ward distance. In Figure \ref{fig_vel_den_sil} the dendrograms and the silhouette graphs are reported. Inspections of the dendrograms and several silhouette plots (not reported) shows that two groups are suggested by the modified half-region depth while three groups are supported by the local version. The classification is very different. Groups based on depth dissimilarity are very similar and they do not identify different pattern (see Figure \ref{fig_vel_clusters2} in the Appendix), while, the first group provided by local depth dissimilarity is characterized by days of wind calm with modest changes during the day; the second group shows an interesting pattern during the day: wind calm during the night and in the early morning with almost absence of wind between 8am-10am, moderate wind speed for the rest of the day until 9pm. The third group is formed by days with higher wind speed, with variability during the day and picks. Further comments and analysis are performed on this data set and reported in the Appendix.

\section{Concluding remarks}
\label{SecConclusions} 
Statistical data depth provides appropriate tools for the analysis of complex datasets such as functional data. However, since the nature of depth functions the analysis is not very suited for non homogeneous data, where subgroups/substructures are present. Instead, local depth is able to capture local features of the data and to provide a better analysis of the observations. Local modified half-region depth proved to be a suitable tool for the analysis of functional data as well as high dimensional data. Hierarchical clustering can be easily performed using similarity measures based on local modified half-region depth, classification is also straightforward but not reported in this work.

\section*{Acknowledgements}
All statistical analysis were performed on SCSCF (www.dais.unive.it/scscf), a multiprocessor cluster system owned by Ca' Foscari University of Venice running under GNU/Linux.

\clearpage

\begin{figure}
\begin{center}
  \includegraphics[width=0.45\textwidth]{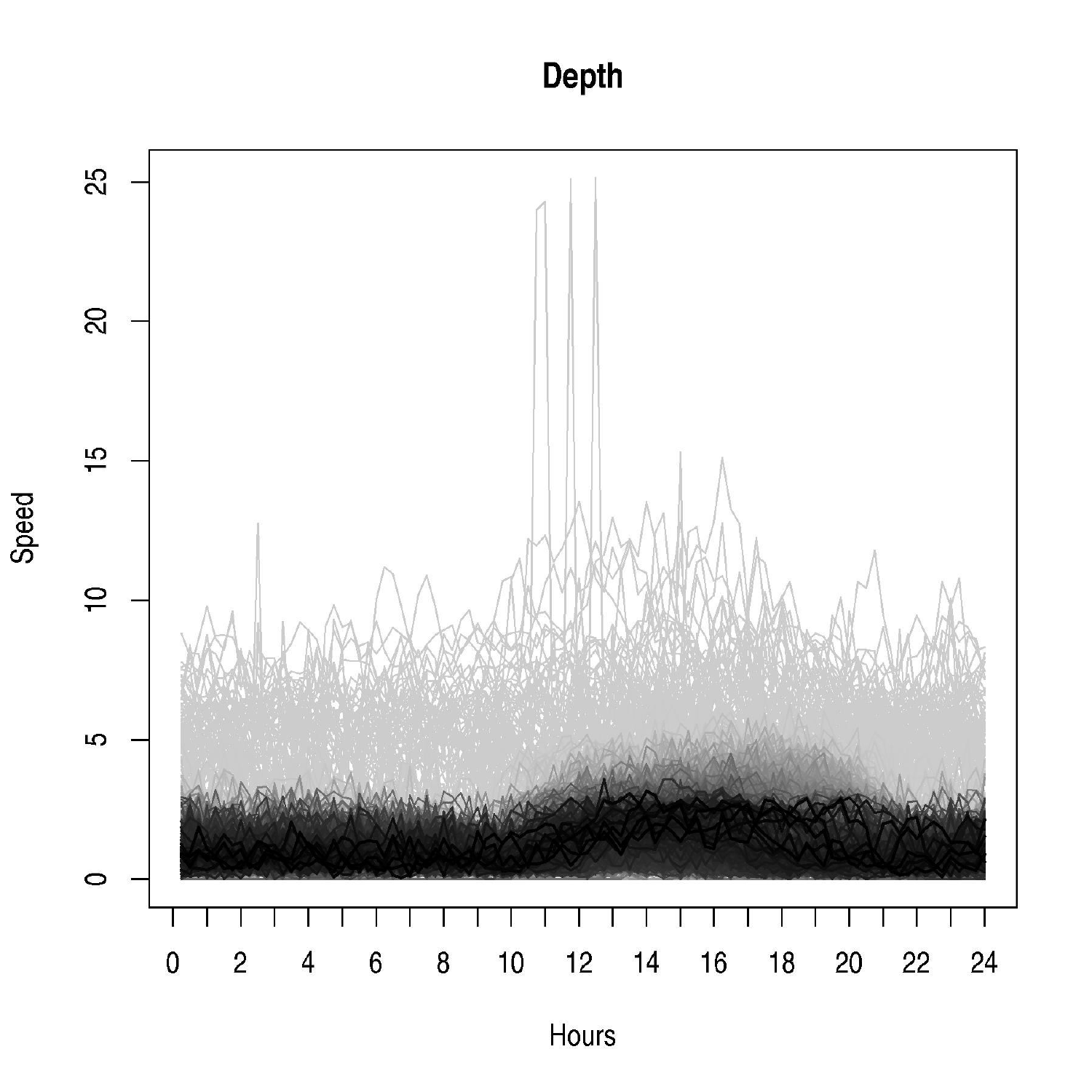}
  \includegraphics[width=0.45\textwidth]{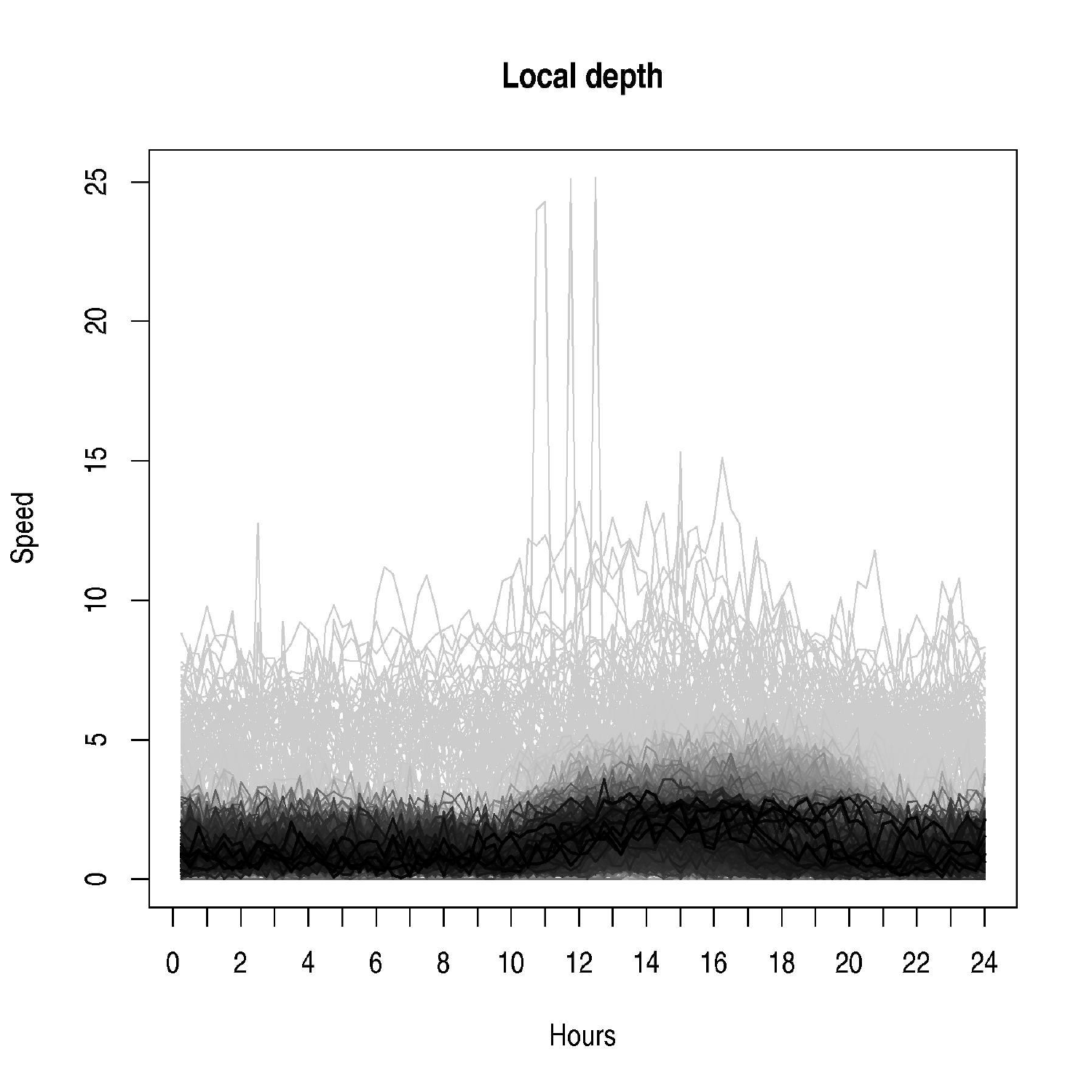}  
\end{center}
\caption{Wind Speed. Ranks provided by modified half-region depth (left panel) and local modified half-region depth (right panel). Darker and wider means higher rank.}
\label{fig_vel_ranks}
\end{figure}  

\clearpage

\begin{figure}
\begin{center}
  \includegraphics[width=0.45\textwidth]{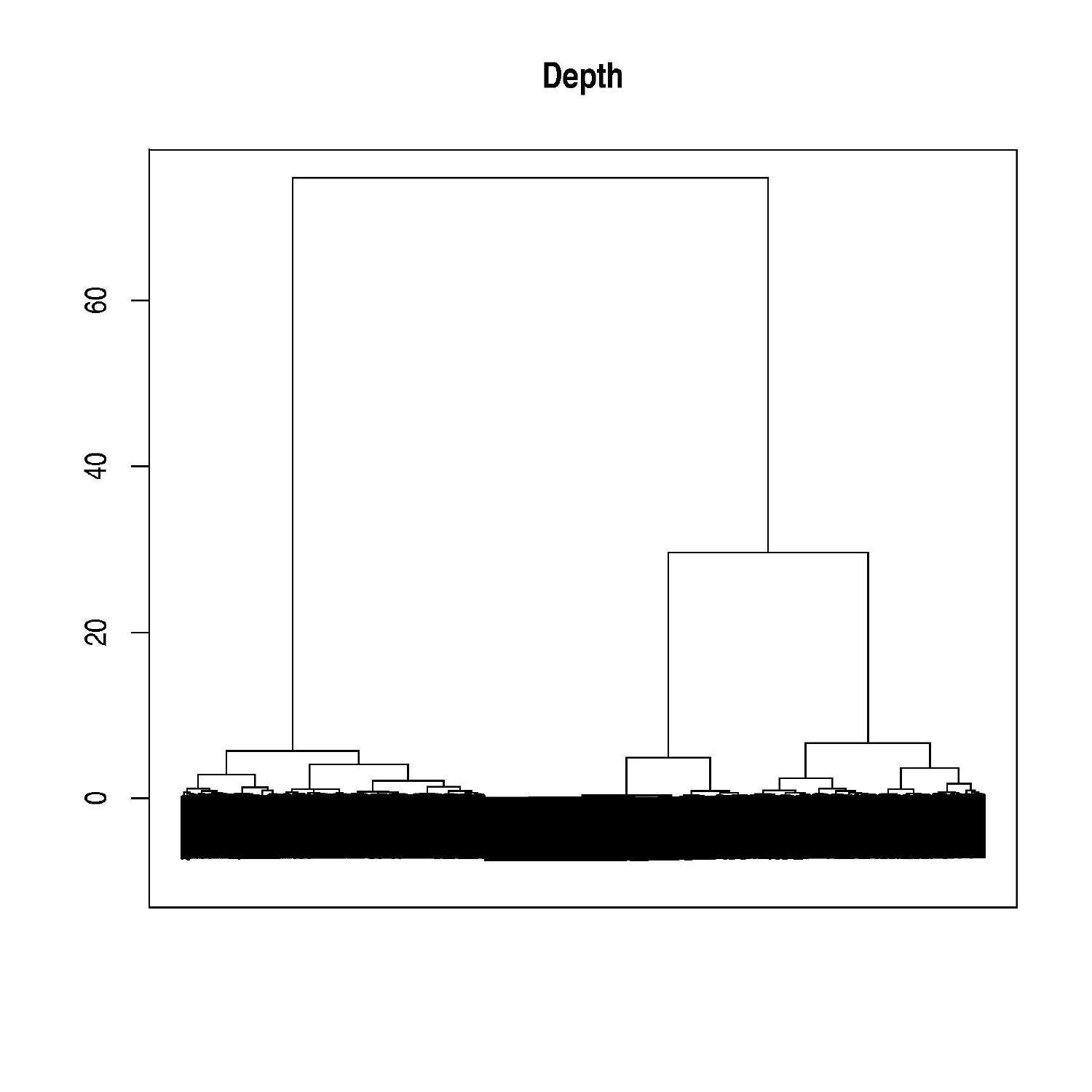}
  \includegraphics[width=0.45\textwidth]{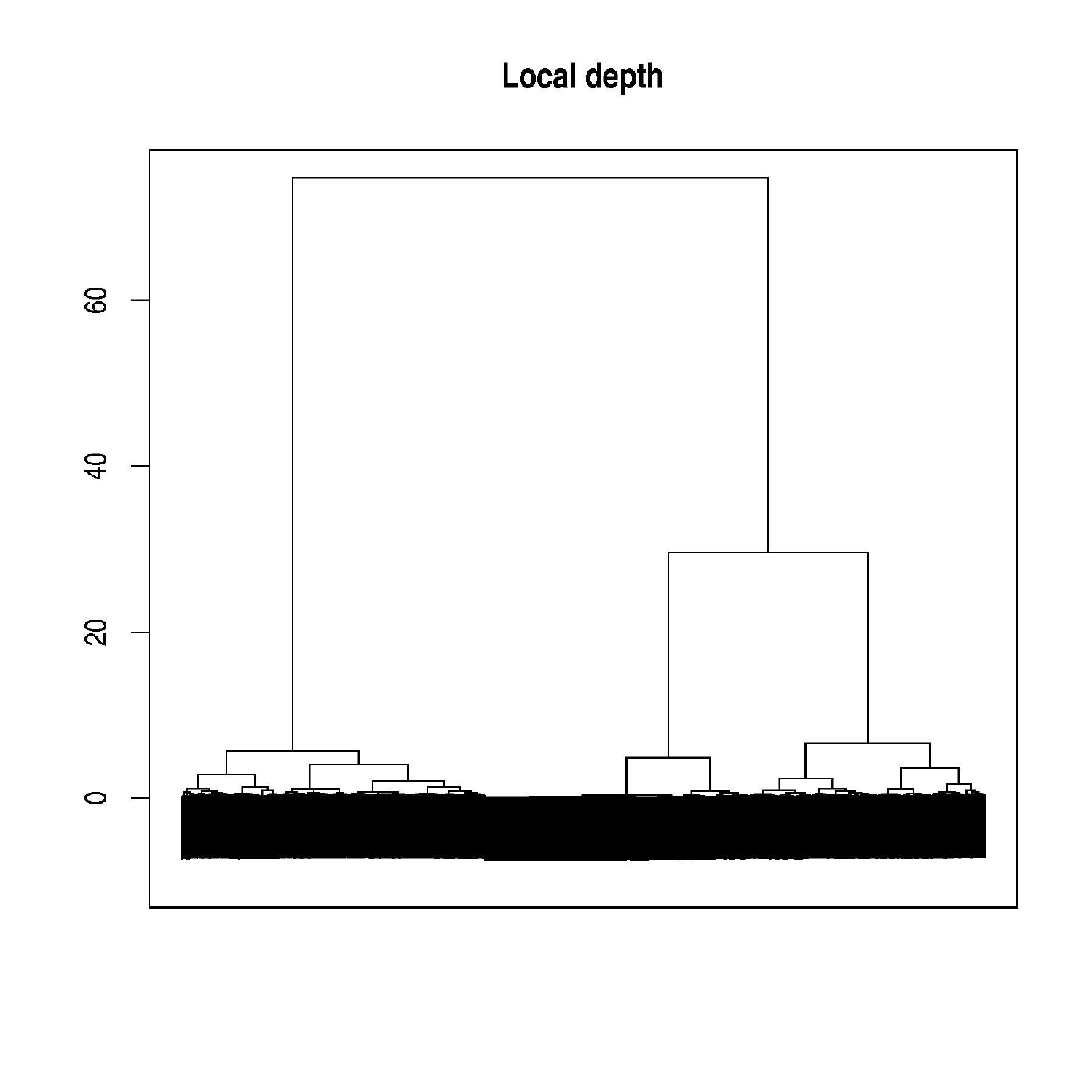}
  
  \includegraphics[width=0.45\textwidth]{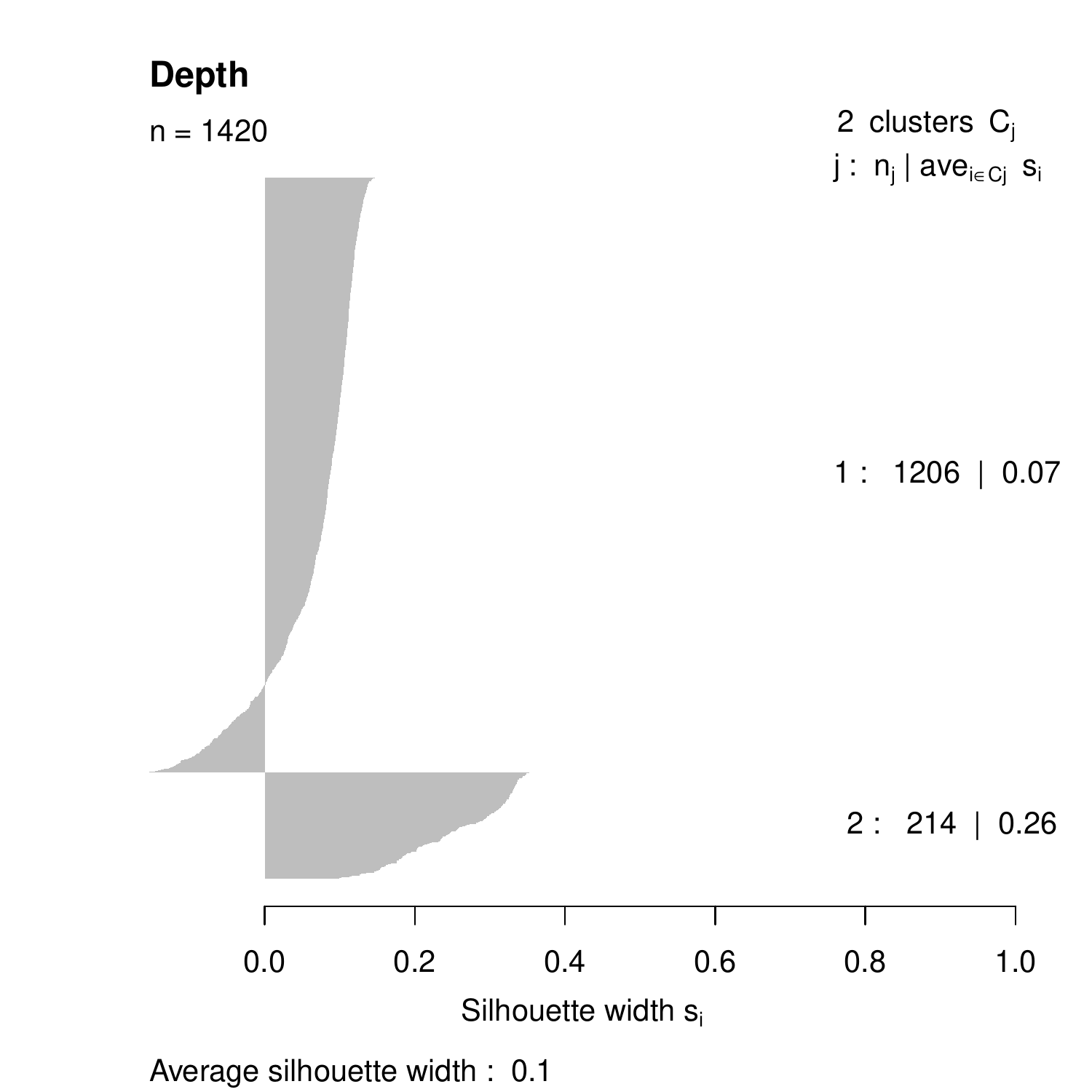}
  \includegraphics[width=0.45\textwidth]{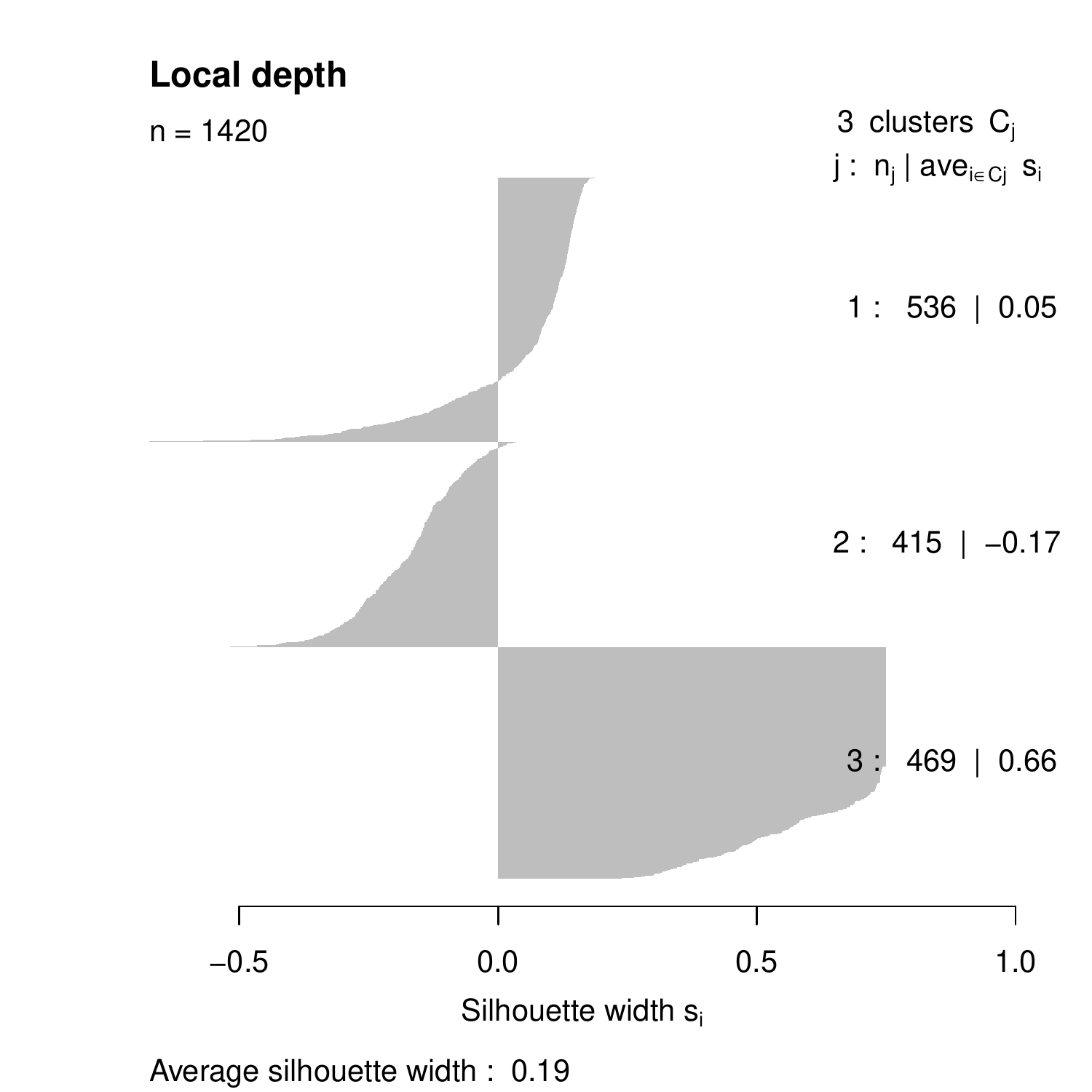}
  
\end{center}
\caption{Wind Speed. First row, dendograms, second row silhouette plot, first column modified half-region depth, second column local modified half-region depth.}
\label{fig_vel_den_sil}
\end{figure}  

\clearpage

\begin{figure}
\begin{center}  
  \includegraphics[width=0.3\textwidth]{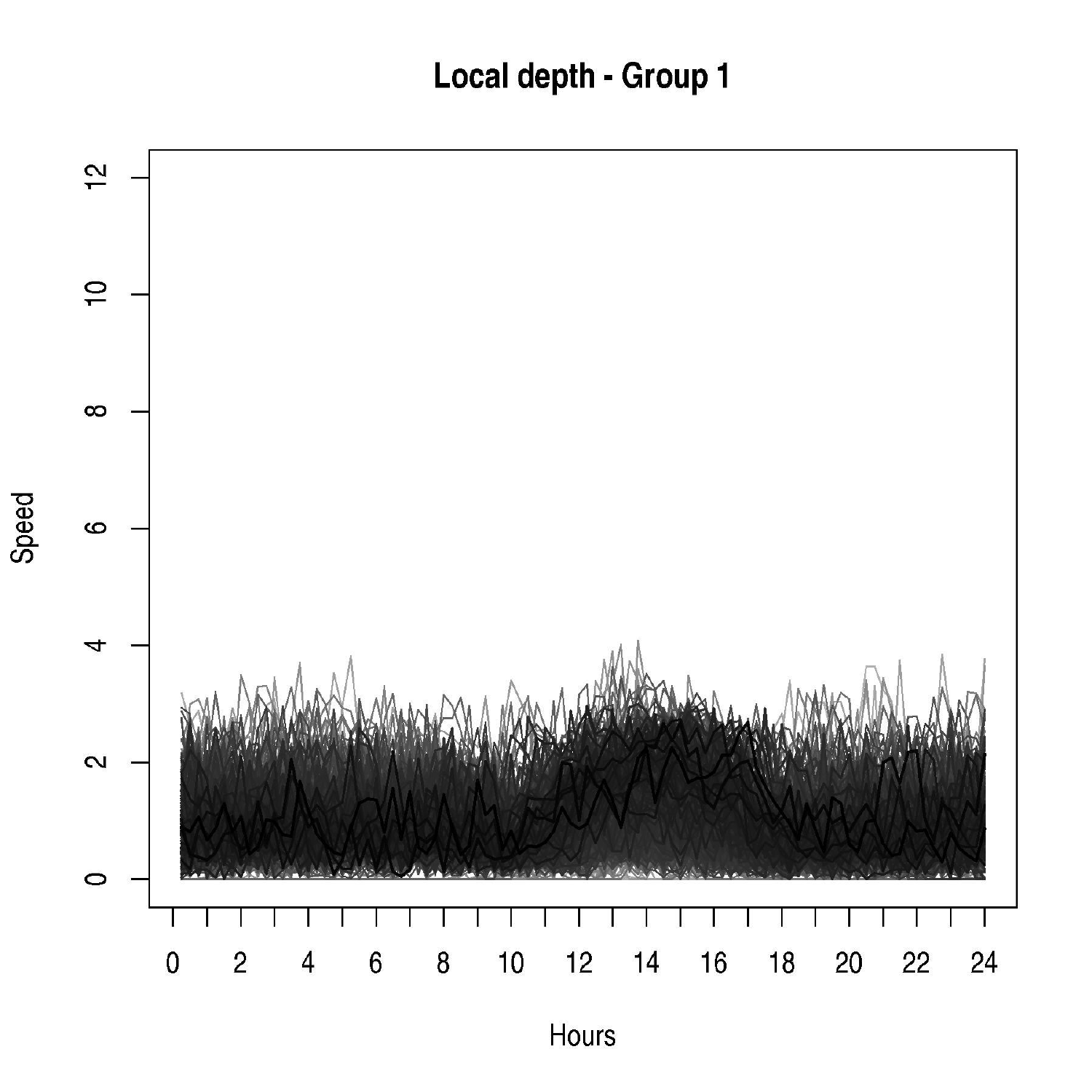}
  \includegraphics[width=0.3\textwidth]{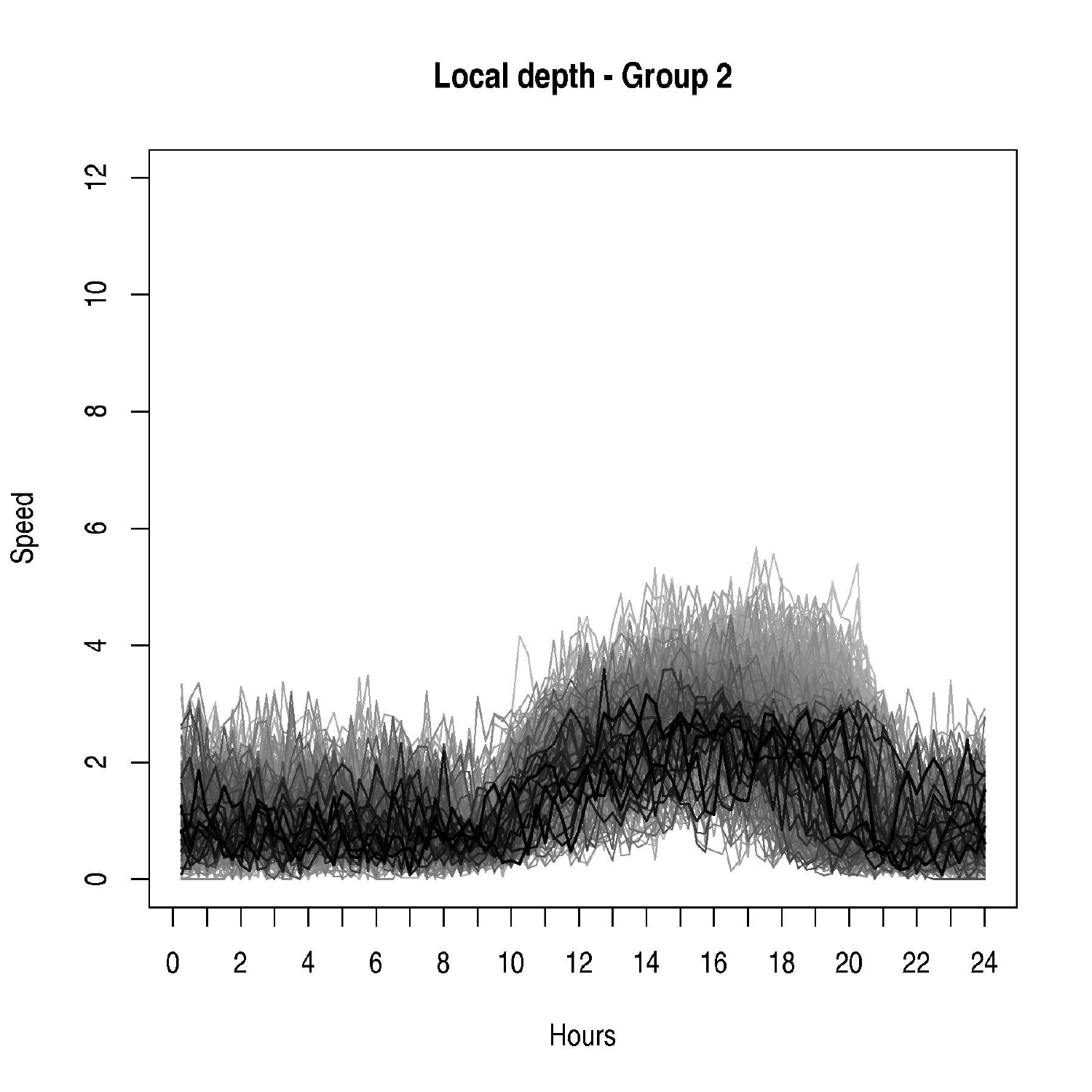}
  \includegraphics[width=0.3\textwidth]{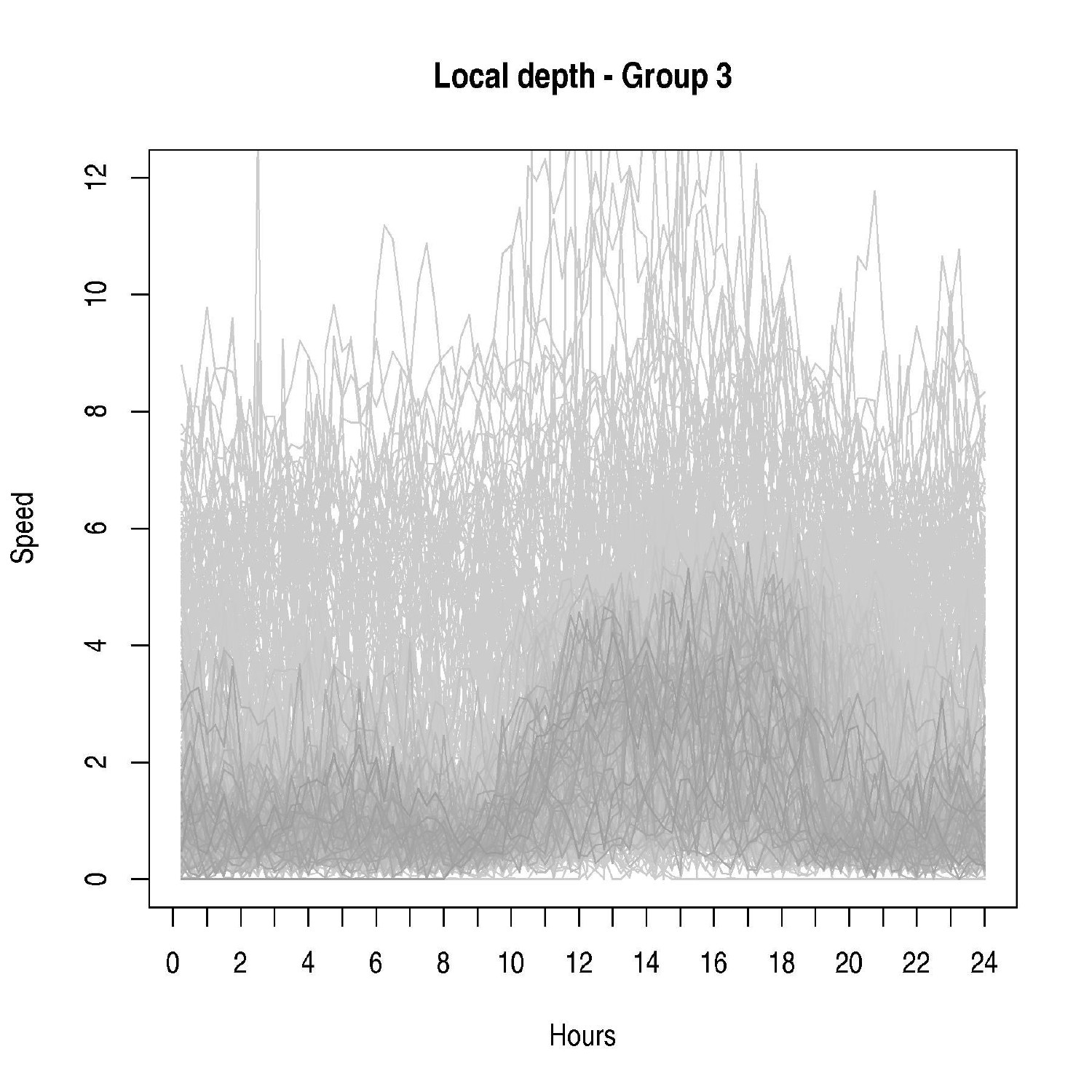}
\end{center}
\caption{Wind Speed. Groups provided by a cluster analysis based local modified half-region depth similarity when three groups are formed. The curves are plotted with color and thickness according to their depth/local depth.}
\label{fig_vel_clusters3}
\end{figure}  

\clearpage

\appendix

\section*{Appendix}

Section \ref{sec:plhrdfdc} studies the properties of the local half-region depth in the finite dimensional case; Section \ref{sec:velff} continues the example illustrated in Section \ref{SecExamples} with further comments and figures. Section \ref{sec:hpc} shows an example based on individual household electric power consumption and Section \ref{sec:out5d} analyzes the out5d data set which contains data collected from remote sensing devices. In this last example a short illustration of the \texttt{R} package \texttt{ldfun} is provided. Finally, Section \ref{sec:proofs} contains all the proofs.

\section{Properties of the local half-region depth in the finite dimensional case}
\label{sec:plhrdfdc}

The half-region depth is invariant with respect to translations and some types of dilation. Let $A$ be a positive (or negative) definite diagonal matrix and $\mathbf{b} \in \mathbb{R}^p$, then 
\begin{equation*}
d_{HR}(A \mathbf{x} + \mathbf{b}, A \mathbf{X} + \mathbf{b}) = d_{HR}(\mathbf{x}, \mathbf{X}) \ .
\end{equation*}
The local half-region depth exhibit the same type of invariance if the vector $\boldsymbol{\tau}$ is transformed accordingly, i.e., 
\begin{equation*}
ld_{HR}(A \mathbf{x} + \mathbf{b}, A \mathbf{X} + \mathbf{b}, A \boldsymbol{\tau}) = ld_{HR}(\mathbf{x}, \mathbf{X}, \boldsymbol{\tau}) \ .
\end{equation*}
On the other hand, if we restrict the values of $\boldsymbol{\tau}$ always to be equal, than the local half-region depth is invariant to translations and only dilation $A = d I$, where $d$ is a constant scalar and $I$ is the identity matrix. This is, since we use one measure for all the slabs. In this particular situation, the same type of invariance as the half-region depth could be achieved by a componentwise standardization of the data. In practice, only asymptotic invariance can be obtained using a scale estimates for each of the component. The Median Absolute Deviation, without consistency constant could be used effectively.

In the rest of this subsection we study the main properties of the local half-region depth in the finite dimensional case. Most of the propositions are extensions of the results presented in \citet{lopez-pintado_half-region_2011}. The proofs are reported in the Appendix. We start by describing the behaviour of the local half-region depth as function of the $\tau$ parameter.
\begin{proposition}
\label{PropLocalHalfRegionTauFinite}
Let $ld_{HR}(\cdot, \mathbf{X}, \boldsymbol{\tau})$ be the local half-region depth. Then, for any fixed $\mathbf{x} \in \mathbb{R}^{p}$
\begin{enumerate}
\item[i.] if $\tau_1(j) < \tau_2(j)$ ($j=1, \cdots, p$), $ld_{HR}(\mathbf{x}; \mathbf{X}, \boldsymbol{\tau}_1) \leq ld_{HR}(\mathbf{x}; \mathbf{X}, \boldsymbol{\tau}_2)$;

\item[ii.] for any vector $\boldsymbol{\tau}$ such that $\tau(j) \ge 0$, $\Pr_{\mathbf{X}} (\mathbf{X}=\mathbf{x}) \leq ld_{HR}(\mathbf{x}; \mathbf{X}, \boldsymbol{\tau}) \leq d_{HR}(\mathbf{x}; \mathbf{X})$;

\item[iii.] $\lim_{\min(\tau(j), j=1, \cdots, p) \rightarrow \infty} ld_{HR}(\mathbf{x}; \mathbf{X}, \boldsymbol{\tau}) = d_{HR}(\mathbf{x}; \mathbf{X})$;

\item[iv.] $\lim_{\| \boldsymbol{\tau} \| \rightarrow0^{+}} ld_{HR}(\mathbf{x}; \mathbf{X}, \boldsymbol{\tau}) = \Pr_{\mathbf{X}} (\mathbf{X}=\mathbf{x})$.
\end{enumerate}
\end{proposition}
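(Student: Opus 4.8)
The plan is to work directly with the two slab regions that define the depth. For fixed $\mathbf{x}$ and a threshold $\boldsymbol{\tau} \ge 0$ I would write the lower (hypograph) slab as $A(\boldsymbol{\tau}) = \{\mathbf{z} \in \mathbb{R}^p : x(j)-\tau(j) \le z(j) \le x(j), \ j=1,\dots,p\}$ and the upper (epigraph) slab as $B(\boldsymbol{\tau}) = \{\mathbf{z} \in \mathbb{R}^p : x(j) \le z(j) \le x(j)+\tau(j), \ j=1,\dots,p\}$, so that $ld_{HR}(\mathbf{x};\mathbf{X},\boldsymbol{\tau}) = \min(\Pr_{\mathbf{X}}(A(\boldsymbol{\tau})),\Pr_{\mathbf{X}}(B(\boldsymbol{\tau})))$. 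With this notation every claim reduces to monotonicity of $\Pr_{\mathbf{X}}$ under set inclusion together with continuity of measure, exactly paralleling the functional Proposition \ref{PropLocalHalfRegionTauInfiniteDimension} but without the tightness considerations.

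For part i, if $\tau_1(j) < \tau_2(j)$ for every $j$ then each coordinate interval grows, giving $A(\boldsymbol{\tau}_1) \subseteq A(\boldsymbol{\tau}_2)$ and $B(\boldsymbol{\tau}_1) \subseteq B(\boldsymbol{\tau}_2)$; hence both $\Pr_{\mathbf{X}}(A(\cdot))$ and $\Pr_{\mathbf{X}}(B(\cdot))$ are nondecreasing, and since the pointwise minimum of two nondecreasing quantities is nondecreasing the claim follows. The same inclusion in fact gives coordinatewise monotonicity, which I will reuse in part iv. For part ii the lower bound comes from $\{\mathbf{X}=\mathbf{x}\} \subseteq A(\boldsymbol{\tau})$ and $\{\mathbf{X}=\mathbf{x}\} \subseteq B(\boldsymbol{\tau})$ (the point $\mathbf{x}$ lies in every slab when $\boldsymbol{\tau} \ge 0$), so each probability is at least $\Pr_{\mathbf{X}}(\mathbf{X}=\mathbf{x})$ and so is their minimum; the upper bound comes from $A(\boldsymbol{\tau}) \subseteq \bigcap_j \HS_{-e_j}(x(j))$ and $B(\boldsymbol{\tau}) \subseteq \bigcap_j \HS_{e_j}(x(j))$, whence $\Pr_{\mathbf{X}}(A(\boldsymbol{\tau})) \le \Pr_{\mathbf{X}}(\bigcap_j \HS_{-e_j}(x(j)))$ and analogously for $B$, so taking minima yields $ld_{HR}(\mathbf{x};\mathbf{X},\boldsymbol{\tau}) \le d_{HR}(\mathbf{x};\mathbf{X})$.

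For part iii, as $\min_j \tau(j) \to \infty$ the slabs increase, $A(\boldsymbol{\tau}) \uparrow \bigcap_j \HS_{-e_j}(x(j))$ and $B(\boldsymbol{\tau}) \uparrow \bigcap_j \HS_{e_j}(x(j))$, so continuity of $\Pr_{\mathbf{X}}$ from below sends each probability to the corresponding half-region probability and the minimum converges to $d_{HR}(\mathbf{x};\mathbf{X})$. For part iv the delicate point is that $\|\boldsymbol{\tau}\| \to 0^+$ need not be monotone, so I would squeeze rather than invoke monotone convergence directly: by part ii one has $\Pr_{\mathbf{X}}(\mathbf{X}=\mathbf{x}) \le ld_{HR}(\mathbf{x};\mathbf{X},\boldsymbol{\tau})$, while coordinatewise monotonicity gives $ld_{HR}(\mathbf{x};\mathbf{X},\boldsymbol{\tau}) \le ld_{HR}(\mathbf{x};\mathbf{X},(m,\dots,m))$ with $m = \max_j \tau(j) \le \|\boldsymbol{\tau}\|$. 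Since $\bigcap_{m>0} A((m,\dots,m)) = \{\mathbf{x}\}$, the constant-threshold slabs decrease to the singleton $\{\mathbf{x}\}$, and continuity of $\Pr_{\mathbf{X}}$ from above (legitimate as $\Pr_{\mathbf{X}}$ is finite) gives $\Pr_{\mathbf{X}}(A((m,\dots,m))) \to \Pr_{\mathbf{X}}(\mathbf{X}=\mathbf{x})$ as $m \downarrow 0$, and likewise for $B$; the squeeze forces the limit to be $\Pr_{\mathbf{X}}(\mathbf{X}=\mathbf{x})$. The main obstacle is precisely this last step — identifying the decreasing intersection of the slabs as the atom $\{\mathbf{x}\}$ and justifying the interchange of limit and probability for an arbitrary, non-monotone sequence $\boldsymbol{\tau} \to 0$ — which is what makes the point-mass term appear.
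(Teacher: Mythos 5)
Your proof is correct and follows essentially the same route as the paper: set inclusion of the coordinate slabs plus monotonicity of $\Pr_{\mathbf{X}}$ for parts i--iii, the point/halfspace sandwich for part ii, and identification of the shrinking slabs' intersection with the atom $\{\mathbf{x}\}$ for part iv. Your treatment of part iv (squeezing via the constant threshold $m=\max_j\tau(j)$ and invoking continuity of the measure from above) is in fact more careful than the paper's, which argues informally via set limits, but the underlying idea is the same.
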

Proofs are available in Section \ref{sec:proofs}. In the analysis of a data set, the vector $\tau$ could be chooses by considering the variability of each variable. For instance, $\tau(j)$ ($j=1,\cdots,p$) could be set as the quantile of the empirical distribution of the distance between all pairs of observations in the $j$ coordinate. In practice, quantile order in the interval $5\%$-$30\%$ proved to be effective in most situations.

In the univariate case, the local half-region depth is equivalent to the local halfspace depth.
\begin{proposition}
\label{PropHalfSpaceFinite}
For $p=1$ the local half-region depth $ld_{HR}$ can be expressed as
\begin{equation*}
ld_{HR}(x; X, \tau) = \min \left( F_X(x) - F_X((x-\tau)^-), F_X(x+\tau) - F_X(x^-) \right)
\end{equation*}
and it is equivalent to the local halfspace depth.
\end{proposition}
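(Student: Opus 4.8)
The plan is to specialise the finite-dimensional definition of $ld_{HR}$ to $p=1$, evaluate the two slab probabilities in terms of the distribution function $F_X$, and then compare the resulting expression with the two admissible directions appearing in the local halfspace depth.

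First I would observe that for $p=1$ the intersections over $j=1,\dots,p$ in the definition of $ld_{HR}(\mathbf{x};\mathbf{X},\boldsymbol{\tau})$ collapse to a single term, leaving
\[
ld_{HR}(x;X,\tau)=\min\bigl(\Pr_X(\SL_{-e_1}(x-\tau,x)),\ \Pr_X(\SL_{e_1}(x,x+\tau))\bigr).
\]
The negative slab is the closed interval $[x-\tau,x]$ and the positive slab is the closed interval $[x,x+\tau]$. Since both endpoints are included, each probability is a difference of $F_X$ with the appropriate one-sided limit:
\[
\Pr_X(x-\tau\le X\le x)=F_X(x)-F_X((x-\tau)^-),\qquad \Pr_X(x\le X\le x+\tau)=F_X(x+\tau)-F_X(x^-).
\]
Substituting these into the displayed minimum gives exactly the claimed formula. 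The negative-slab probability is precisely the one-dimensional instance of the inclusion–exclusion identity \eqref{EquLHRDistributionFunction}, with $\{1,\dots,p\}=\{1\}$ so that the only nonempty index set is $I=\{1\}$ and the alternating sum reduces to the single term $-F_X((x-\tau)^-)$; the positive-slab probability is obtained in the same way.

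For the equivalence with the local halfspace depth I would use that in dimension one the constraint $u^t u=1$ admits only the two directions $u=\pm 1$, so the infimum defining $ld_{HS}(x;X,\tau)$ is taken over a two-point set. Evaluating the slab $\SL_u(ux,ux+\tau)$ for $u=1$ gives $\{z:x\le z\le x+\tau\}$ with probability $F_X(x+\tau)-F_X(x^-)$, while $u=-1$ gives $\{z:-x\le -z\le -x+\tau\}=[x-\tau,x]$ with probability $F_X(x)-F_X((x-\tau)^-)$. Hence the infimum over $u\in\{-1,1\}$ equals the minimum of these two quantities, which coincides term by term with $ld_{HR}(x;X,\tau)$, establishing the equivalence.

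The computations are elementary, so I do not expect a genuine obstacle; the only point requiring care is the bookkeeping of the interval endpoints, namely tracking which inequalities are strict so that the closed slabs translate into the left limits $F_X((x-\tau)^-)$ and $F_X(x^-)$ rather than the values $F_X(x-\tau)$ and $F_X(x)$. I would also verify at the outset that the sign convention places $\SL_{-e_1}$ below $x$ (on $[x-\tau,x]$) and $\SL_{e_1}$ above $x$ (on $[x,x+\tau]$), so that the two terms are correctly identified with the hypograph and epigraph contributions and matched against the directions $u=-1$ and $u=+1$; once this is fixed, the remainder is a direct substitution.
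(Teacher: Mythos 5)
Your proposal is correct and follows essentially the same route as the paper, which simply invokes the specialization of the inclusion--exclusion identity \eqref{EquLHRDistributionFunction} to $p=1$ together with the definition; you spell out the details the paper leaves implicit, in particular the explicit enumeration of the two directions $u=\pm 1$ in the local halfspace depth and the careful handling of the left limits $F_X((x-\tau)^-)$ and $F_X(x^-)$. Nothing further is needed.
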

The local half-region depth decreases to zero when the point tends to infinity.
\begin{proposition}[Vanishing at infinity]
\label{PropVanishingFinite}
Let $\mathbf{x} \in \mathbb{R}^p$ and $\boldsymbol{\tau} > \mathbf{0}$ then
\begin{align*}
\lim_{M \rightarrow \infty} \sup_{\| \mathbf{x} \| \ge M} ld_{HR}(\mathbf{x}; \mathbf{X}, \boldsymbol{\tau}) = 0 \\
\lim_{M \rightarrow \infty} \sup_{\| \mathbf{x} \| \ge M} ld_{HR}(\mathbf{x}; X_n, \boldsymbol{\tau}) = 0 \\
\end{align*}
\end{proposition}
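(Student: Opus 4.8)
The plan is to dominate the local half-region depth at a point $\mathbf{x}$ that is far from the origin by a single marginal tail probability, and then use that there are only finitely many coordinates to obtain the required uniformity. Recall from the finite-dimensional definition that $ld_{HR}(\mathbf{x}; \mathbf{X}, \boldsymbol{\tau})$ is the minimum of the probabilities that $\mathbf{X}$ falls in the lower box $\bigcap_{j=1}^p \SL_{-e_j}(x(j)-\tau(j),x(j))$ and in the upper box $\bigcap_{j=1}^p \SL_{e_j}(x(j),x(j)+\tau(j))$. Since $\boldsymbol{\tau} > \mathbf{0}$ is fixed, both are bounded rectangles whose $j$-th edge is the interval $[x(j)-\tau(j),x(j)]$, respectively $[x(j),x(j)+\tau(j)]$, of fixed finite length $\tau(j)$.

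First I would note that intersecting the slabs over all $j$ only shrinks the event, so each box probability is dominated by its marginal in any single coordinate $k$:
\begin{align*}
\Pr_{\mathbf{X}}\left(\bigcap_{j=1}^p \SL_{e_j}(x(j),x(j)+\tau(j))\right) &\le \Pr(x(k) \le X(k) \le x(k)+\tau(k)) \ , \\
\Pr_{\mathbf{X}}\left(\bigcap_{j=1}^p \SL_{-e_j}(x(j)-\tau(j),x(j))\right) &\le \Pr(x(k)-\tau(k) \le X(k) \le x(k)) \ .
\end{align*}
Next, by equivalence of norms on $\mathbb{R}^p$, the condition $\|\mathbf{x}\| \ge M$ forces $\max_j |x(j)| \ge cM$ for some constant $c>0$ depending only on $p$ and the chosen norm; fix a coordinate $j^\ast$ attaining this maximum. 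If $x(j^\ast) \ge cM$, the $j^\ast$-edge of the upper box lies in $[cM,\infty)$, so its probability is at most $\Pr(X(j^\ast) \ge cM)$; if $x(j^\ast) \le -cM$, the $j^\ast$-edge of the lower box lies in $(-\infty,-cM]$, with probability at most $\Pr(X(j^\ast) \le -cM)$. Bounding the minimum by whichever box is controlled gives, uniformly over $\mathbf{x}$,
\begin{equation*}
\sup_{\|\mathbf{x}\| \ge M} ld_{HR}(\mathbf{x}; \mathbf{X}, \boldsymbol{\tau}) \le \max_{1 \le j \le p} \Pr(|X(j)| \ge cM) \ .
\end{equation*}
Each marginal of $\mathbf{X}$ is a proper distribution, so every term $\Pr(|X(j)| \ge cM) \to 0$ as $M \to \infty$, and the maximum over finitely many coordinates also vanishes, proving the first limit.

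The only point requiring care is that the dominating coordinate $j^\ast$ depends on $\mathbf{x}$; uniformity is recovered precisely by passing to the maximum over the finitely many coordinates, so no estimate at a single point need hold uniformly, and one must simply track the sign of $x(j^\ast)$ to decide which of the two boxes to bound. For the empirical statement I would argue even more directly: the sample underlying $X_n$ is a fixed bounded set with $C = \max_{i,j} |X_i(j)| < \infty$, so once $cM > C + \max_j \tau(j)$ no observation can lie in either box of any $\mathbf{x}$ with $\|\mathbf{x}\| \ge M$; hence $ld_{HR}(\mathbf{x}; X_n, \boldsymbol{\tau}) = 0$ for all such $\mathbf{x}$ and the supremum is identically zero for large $M$. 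I expect no genuine obstacle beyond this bookkeeping.
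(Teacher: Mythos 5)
Your proof is correct, but it takes a genuinely different route from the paper. The paper's argument is a two-line reduction: by part ii of Proposition \ref{PropLocalHalfRegionTauFinite} the local depth is dominated by the global half-region depth, $ld_{HR}(\mathbf{x};\mathbf{X},\boldsymbol{\tau}) \le d_{HR}(\mathbf{x};\mathbf{X})$, and the vanishing at infinity of $d_{HR}$ is then imported from Proposition 2 of \citet{lopez-pintado_half-region_2011}; the same domination handles the empirical version. You instead give a direct, self-contained estimate: project each box onto a single coordinate $j^\ast$ with $|x(j^\ast)| \ge cM$ (by equivalence of norms), observe that the sign of $x(j^\ast)$ determines which of the two boxes is pushed into a marginal tail, and bound $\min(\cdot,\cdot)$ by the controlled one, yielding the uniform bound $\sup_{\|\mathbf{x}\|\ge M} ld_{HR}(\mathbf{x};\mathbf{X},\boldsymbol{\tau}) \le \max_{j} \Pr(|X(j)| \ge cM) \to 0$. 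Your handling of the two concerns that could sink such an argument --- the $\mathbf{x}$-dependence of $j^\ast$ and the choice of which box to bound --- is correct: uniformity is recovered by maximizing over the finitely many coordinates. What your approach buys is independence from the cited external result and a sharper conclusion in the empirical case (the supremum is identically zero once $cM$ exceeds the sample range plus $\max_j \tau(j)$, for each fixed realization), and it in fact never uses $\boldsymbol{\tau}>\mathbf{0}$ beyond finiteness; what the paper's approach buys is brevity and consistency with its general strategy of deriving local-depth properties from the corresponding global ones. One cosmetic caveat: your marginal bound tacitly reads $\SL_{-e_j}(x(j)-\tau(j),x(j))$ as the interval $[x(j)-\tau(j),x(j)]$ in the $j$-th coordinate, which matches the paper's intent (cf.\ equation (\ref{EquLHRDistributionFunction})) though not the literal sign convention of the slab notation.
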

It is a continuous function when the underlying distribution is absolutely continuous.
\begin{proposition}
\label{PropSemicontinuousFinite}
$ld_{HR}(\cdot; \mathbf{X}, \boldsymbol{\tau})$ is an upper semicontinuous function. Moreover, if $F$ is absolutely continuous then $ld_{HR}(\cdot; \mathbf{X}, \boldsymbol{\tau})$ is continuous.
\end{proposition}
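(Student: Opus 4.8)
The plan is to treat the two probabilities inside the outer minimum separately and then recombine, using the elementary fact that a pointwise minimum of finitely many upper semicontinuous (respectively continuous) functions is again upper semicontinuous (respectively continuous). Writing
\[
H(\mathbf{x}) = \Pr_{\mathbf{X}}\Bigl(\bigcap_{j=1}^p \SL_{-e_j}(x(j)-\tau(j),x(j))\Bigr), \qquad E(\mathbf{x}) = \Pr_{\mathbf{X}}\Bigl(\bigcap_{j=1}^p \SL_{e_j}(x(j),x(j)+\tau(j))\Bigr),
\]
each of these is the $\mathbf{X}$-probability of a closed axis-parallel hypercube, namely $C^-(\mathbf{x}) = \prod_{j=1}^p [x(j)-\tau(j), x(j)]$ and $C^+(\mathbf{x}) = \prod_{j=1}^p [x(j), x(j)+\tau(j)]$; this is exactly the content of the inclusion--exclusion identity \eqref{EquLHRDistributionFunction}. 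Since $ld_{HR}(\cdot;\mathbf{X},\boldsymbol{\tau}) = \min(H,E)$, it suffices to establish the stated regularity for $H$ and $E$, and I treat $H$, the argument for $E$ being identical up to reflection.

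For upper semicontinuity, fix $\mathbf{x}_0$ and let $\mathbf{x}_n \to \mathbf{x}_0$. For $\varepsilon > 0$ introduce the dilated closed cube $C^-_\varepsilon(\mathbf{x}_0) = \prod_{j=1}^p [x_0(j)-\tau(j)-\varepsilon, x_0(j)+\varepsilon]$. Because each endpoint of $C^-(\mathbf{x}_n)$ converges to the corresponding endpoint of $C^-(\mathbf{x}_0)$, for all $n$ large enough we have $C^-(\mathbf{x}_n) \subseteq C^-_\varepsilon(\mathbf{x}_0)$, whence $H(\mathbf{x}_n) \le \Pr_{\mathbf{X}}(C^-_\varepsilon(\mathbf{x}_0))$ and therefore $\limsup_n H(\mathbf{x}_n) \le \Pr_{\mathbf{X}}(C^-_\varepsilon(\mathbf{x}_0))$. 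As $\varepsilon \downarrow 0$ the cubes $C^-_\varepsilon(\mathbf{x}_0)$ decrease to the closed cube $C^-(\mathbf{x}_0)$, so continuity from above of the probability measure gives $\Pr_{\mathbf{X}}(C^-_\varepsilon(\mathbf{x}_0)) \to \Pr_{\mathbf{X}}(C^-(\mathbf{x}_0)) = H(\mathbf{x}_0)$. Hence $\limsup_n H(\mathbf{x}_n) \le H(\mathbf{x}_0)$, which is upper semicontinuity.

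For the continuity claim under absolute continuity, I would complement this with a matching lower bound. With the eroded cube $C^{-,\varepsilon}(\mathbf{x}_0) = \prod_{j=1}^p [x_0(j)-\tau(j)+\varepsilon, x_0(j)-\varepsilon]$, nonempty once $\varepsilon < \tfrac12 \min_j \tau(j)$, the same endpoint-convergence argument yields $C^{-,\varepsilon}(\mathbf{x}_0) \subseteq C^-(\mathbf{x}_n)$ for $n$ large, so that $\liminf_n H(\mathbf{x}_n) \ge \Pr_{\mathbf{X}}(C^{-,\varepsilon}(\mathbf{x}_0))$; letting $\varepsilon \downarrow 0$ and using continuity from below gives $\liminf_n H(\mathbf{x}_n) \ge \Pr_{\mathbf{X}}(\operatorname{int} C^-(\mathbf{x}_0))$. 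The two bounds together reduce continuity of $H$ at $\mathbf{x}_0$ to the statement $\Pr_{\mathbf{X}}(\partial C^-(\mathbf{x}_0)) = 0$. This is precisely where absolute continuity of $F$ enters: the boundary $\partial C^-(\mathbf{x}_0)$ is a finite union of pieces of coordinate hyperplanes, each of Lebesgue measure zero, so a density representation of the law of $\mathbf{X}$ forces its probability to vanish. Taking the minimum with the analogous statement for $E$ then delivers continuity of $ld_{HR}(\cdot;\mathbf{X},\boldsymbol{\tau})$.

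The only genuinely delicate point is this last one, namely controlling the probability mass on the boundary of the cubes. Without absolute continuity the measure may place atoms or singular mass on the faces $\{z(j) = x_0(j)\}$ or $\{z(j) = x_0(j) - \tau(j)\}$, in which case $\Pr_{\mathbf{X}}(\operatorname{int} C^-(\mathbf{x}_0)) < \Pr_{\mathbf{X}}(C^-(\mathbf{x}_0))$ and only the one-sided upper estimate survives; this is exactly why the unconditional conclusion is mere upper semicontinuity and full continuity requires the null-boundary hypothesis.
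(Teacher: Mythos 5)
Your proof is correct, and it follows the same overall decomposition as the paper's: write $ld_{HR}(\cdot;\mathbf{X},\boldsymbol{\tau})=\min(H,E)$ with $H$ and $E$ the probabilities of the two closed axis-parallel hypercubes, establish upper semicontinuity (respectively continuity) of each, and conclude because a minimum of finitely many upper semicontinuous (respectively continuous) functions inherits the property. Where you genuinely diverge is in how the regularity of each cube probability is obtained. The paper controls $\bigl|\Pr_{\mathbf{X}}\bigl(\bigcap_j A_{n,j}\bigr)-\Pr_{\mathbf{X}}\bigl(\bigcap_j A_j\bigr)\bigr|$ by the measure of a symmetric difference and then passes to set-theoretic limits superior of the moving cubes, which leaves the role of the absolute-continuity hypothesis largely implicit. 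You instead sandwich $C^-(\mathbf{x}_n)$ between an $\varepsilon$-dilated and an $\varepsilon$-eroded cube centred at $\mathbf{x}_0$ and invoke monotone continuity of the probability measure along decreasing and increasing sequences of cubes; this reduces the entire continuity question to the single identity $\Pr_{\mathbf{X}}(\partial C^-(\mathbf{x}_0))=0$, pinpoints exactly where absolute continuity enters (the boundary is a finite union of pieces of coordinate hyperplanes, hence Lebesgue-null), and makes transparent why only the one-sided upper estimate survives when the law charges those faces. The only loose end is the degenerate case $\min_j\tau(j)=0$, where your eroded cube is empty; it is harmless, since then $C^-(\mathbf{x})$ lies in a hyperplane for every $\mathbf{x}$, so under absolute continuity $H\equiv 0$ and continuity is trivial. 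On balance your route is more elementary and, in my view, more rigorous than the argument printed in the paper.
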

Let $X_n$ be a sample of size $n$ from the random vector $\mathbf{X}$, and $\Pr_{X_n}$ and $\hat{F}_{X_n}$ be the corresponding empirical probability and distribution function respectively. In the next proposition we establish the consistency of the empirical local half-region depth and of its maximizer.
\begin{proposition} \label{PropMaximizerFinite}
$ld_{HR}$ is uniformly consistent:
\begin{equation*}  
\sup_{\mathbf{x} \in \mathbb{R}^p} \left| ld_{HR}(\mathbf{x}; X_n, \boldsymbol{\tau}) - ld_{HR}(\mathbf{x}; \mathbf{X}, \boldsymbol{\tau}) \right| \stackrel{a.s.}{\rightarrow} 0 \ , \qquad n \rightarrow \infty \ .  
\end{equation*}  
Furthermore, if $ld_{HR}(\cdot; \mathbf{X}, \boldsymbol{\tau})$ is uniquely maximized at $\hat{\mathbf{x}}$ and $\hat{\mathbf{x}}_n$ is a sequence of random variables with $ld_{HR}(\hat{\mathbf{x}}_n; X_n, \boldsymbol{\tau}) = \sup_{\mathbf{x} \in \mathbb{R}^p} ld_{HR}(\mathbf{x}; X_n, \boldsymbol{\tau})$, then 
\begin{equation*}
\hat{\mathbf{x}}_n \stackrel{a.s.}{\rightarrow} \hat{\mathbf{x}} \ , \qquad n \rightarrow \infty \ .
\end{equation*}
\end{proposition}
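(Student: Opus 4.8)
The plan is to treat the two claims separately: first establish the uniform strong consistency of the depth viewed as an objective function, and then deduce the convergence of its maximizer by a standard argmax argument, drawing on the properties already recorded for the finite-dimensional case.

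For the uniform consistency, the key observation is that by the inclusion--exclusion representation in \eqref{EquLHRDistributionFunction}, each of the two probabilities defining $ld_{HR}(\mathbf{x}; \mathbf{X}, \boldsymbol{\tau})$ is a fixed signed sum of exactly $2^p$ evaluations of the distribution function $F_{\mathbf{X}}$ at the shifted points $\mathbf{x} \pm \boldsymbol{\tau}_I$ (and their left limits), where the translations $\boldsymbol{\tau}_I$ do not depend on $\mathbf{x}$. The empirical depth is obtained by replacing $F_{\mathbf{X}}$ with $\hat{F}_{X_n}$ throughout. First I would invoke the multivariate Glivenko--Cantelli theorem, $\sup_{\mathbf{z} \in \mathbb{R}^p} |\hat{F}_{X_n}(\mathbf{z}) - F_{\mathbf{X}}(\mathbf{z})| \stackrel{a.s.}{\rightarrow} 0$, noting that because both functions are right-continuous and coordinatewise monotone, the same uniform bound also controls the left-limit terms $F_{\mathbf{X}}((\mathbf{x} \pm \boldsymbol{\tau}_I)^-)$. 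Summing the $2^p$ contributions via the triangle inequality bounds the difference of each probability, uniformly in $\mathbf{x}$, by $2^p \sup_{\mathbf{z}} |\hat{F}_{X_n}(\mathbf{z}) - F_{\mathbf{X}}(\mathbf{z})|$. Finally, using the elementary inequality $|\min(a_1,b_1) - \min(a_2,b_2)| \leq \max(|a_1-a_2|, |b_1-b_2|)$, the difference of the two minima, i.e. of the depths, is controlled by the same quantity, which tends to zero almost surely; taking the supremum over $\mathbf{x}$ yields the first display.

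For the maximizer I would use the argmax/M-estimation approach. By Proposition \ref{PropVanishingFinite} (vanishing at infinity) there is a compact ball $K = \{\| \mathbf{x} \| \leq R\}$ outside which both the population depth and, almost surely for $n$ large by the uniform consistency just proved, the empirical depth stay strictly below $ld_{HR}(\hat{\mathbf{x}}; \mathbf{X}, \boldsymbol{\tau})$; hence the maximizers $\hat{\mathbf{x}}_n$ eventually lie in $K$. On $K$, Proposition \ref{PropSemicontinuousFinite} gives upper semicontinuity of $ld_{HR}(\cdot; \mathbf{X}, \boldsymbol{\tau})$, so on each closed annulus $\{\varepsilon \leq \| \mathbf{x} - \hat{\mathbf{x}} \| \leq R\}$ the supremum is attained and, by the assumed uniqueness of the maximizer, is strictly smaller than $ld_{HR}(\hat{\mathbf{x}}; \mathbf{X}, \boldsymbol{\tau})$. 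This is precisely the well-separation condition. Combining well-separation with the uniform consistency gives $ld_{HR}(\hat{\mathbf{x}}_n; \mathbf{X}, \boldsymbol{\tau}) \rightarrow ld_{HR}(\hat{\mathbf{x}}; \mathbf{X}, \boldsymbol{\tau})$ almost surely, which forces $\| \hat{\mathbf{x}}_n - \hat{\mathbf{x}} \| \rightarrow 0$.

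The step I expect to be most delicate is the passage from uniform convergence of the empirical distribution function to control of the left-limit terms appearing in \eqref{EquLHRDistributionFunction}, since the usual Glivenko--Cantelli statement is phrased for the right-continuous distribution function; this is handled by a standard monotonicity argument but must be stated carefully. For the maximizer, the only real content beyond the generic argmax theorem is verifying well-separation, for which the combination of vanishing at infinity (to confine the problem to a compact set) and upper semicontinuity together with uniqueness (to obtain strict separation on annuli) is exactly what is required.
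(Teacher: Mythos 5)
Your proposal is correct and follows essentially the same route as the paper: uniform consistency via the inclusion--exclusion representation \eqref{EquLHRDistributionFunction} together with the multivariate Glivenko--Cantelli theorem and an elementary bound on the difference of minima, then consistency of the maximizer via the well-separation condition obtained from upper semicontinuity, vanishing at infinity and uniqueness, followed by the standard argmax argument. Your explicit treatment of the left-limit terms and of how the annulus argument yields strict separation is somewhat more careful than the paper's, but the underlying argument is the same.
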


\section{Wind speed}
\label{sec:velff}

Hereafter, we report some more information regarding the example based on the wind speed data set. Figure \ref{fig_vel_dd} presents the DD-plot of the local depth versus the depth. When only one group is present in the data, almost all observations have the same ranks. In our case, most of the points are on the right bottom corner, which means they have high rank in the depth but low rank according to the local depth. Figure \ref{fig_vel_outliers} shows the most $40$ unusual patterns and the most $20$ common patterns as identified by the modified half-region depth and the local modified half-region depth. The local version correctly identifies the serie with very high picks and other series with high variability, all belonging to the $3$rd cluster, while the modified half-region depth identifies curves which belong to its $2$nd cluster, see Figure \ref{fig_vel_clusters2} left panel. Silhouette plot based on two groups for local depth are reported in Figure \ref{fig_vel_den_sil2}. The two groups split seems the best clustering provided by depth, while, local depth suggests a three groups split as shown in the main document at Figure \ref{fig_vel_den_sil}. Finally, for completeness, we show the clustering obtained by two groups split for both depth and local depth in Figure \ref{fig_vel_clusters2}.

\begin{figure}
\begin{center}
  \includegraphics[width=0.45\textwidth]{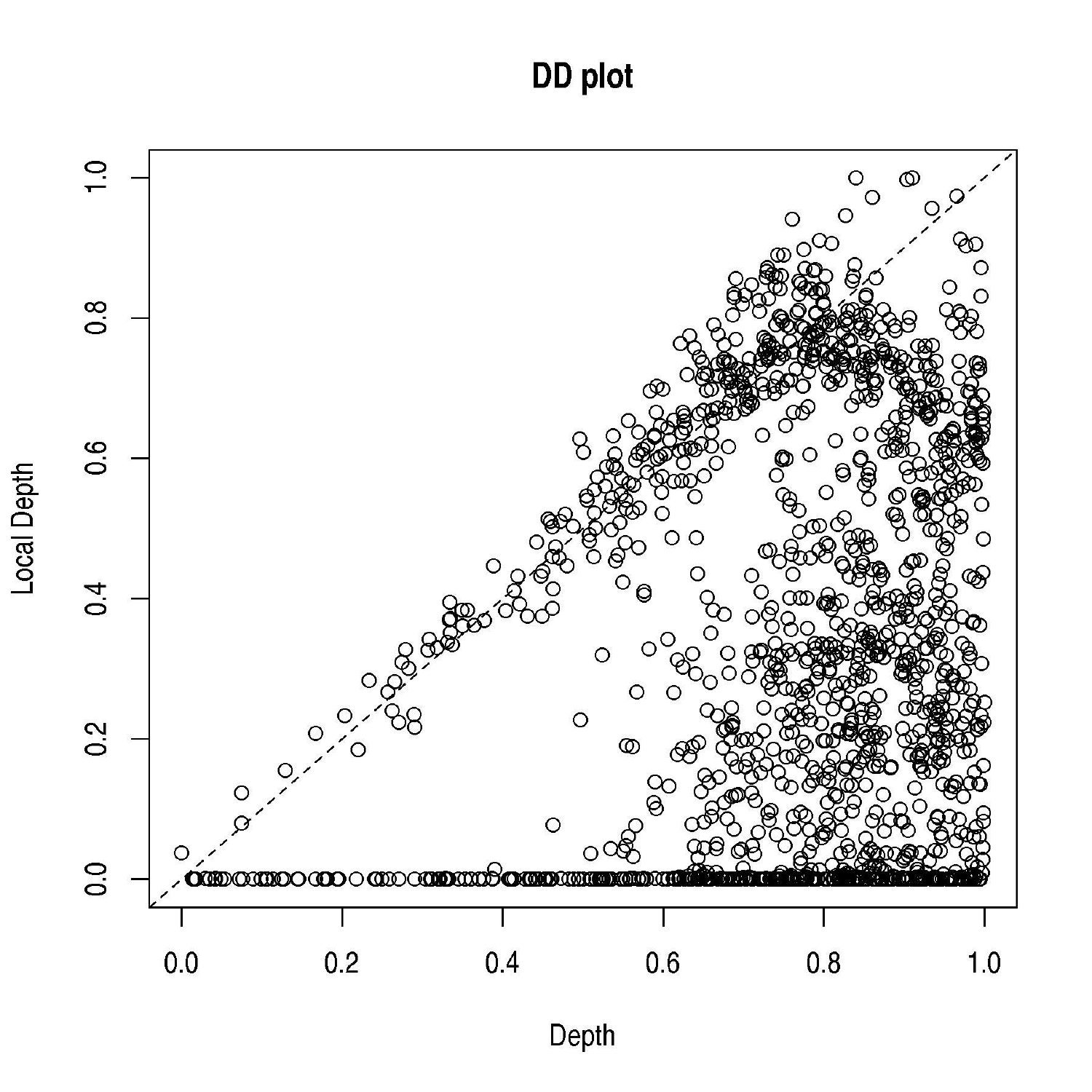}
\end{center}
\caption{Wind Speed. DD plot, local modified half-region depth versus modified half-region depth.}
\label{fig_vel_dd}
\end{figure}  

\clearpage

\begin{figure}
\begin{center}
  \includegraphics[width=0.45\textwidth]{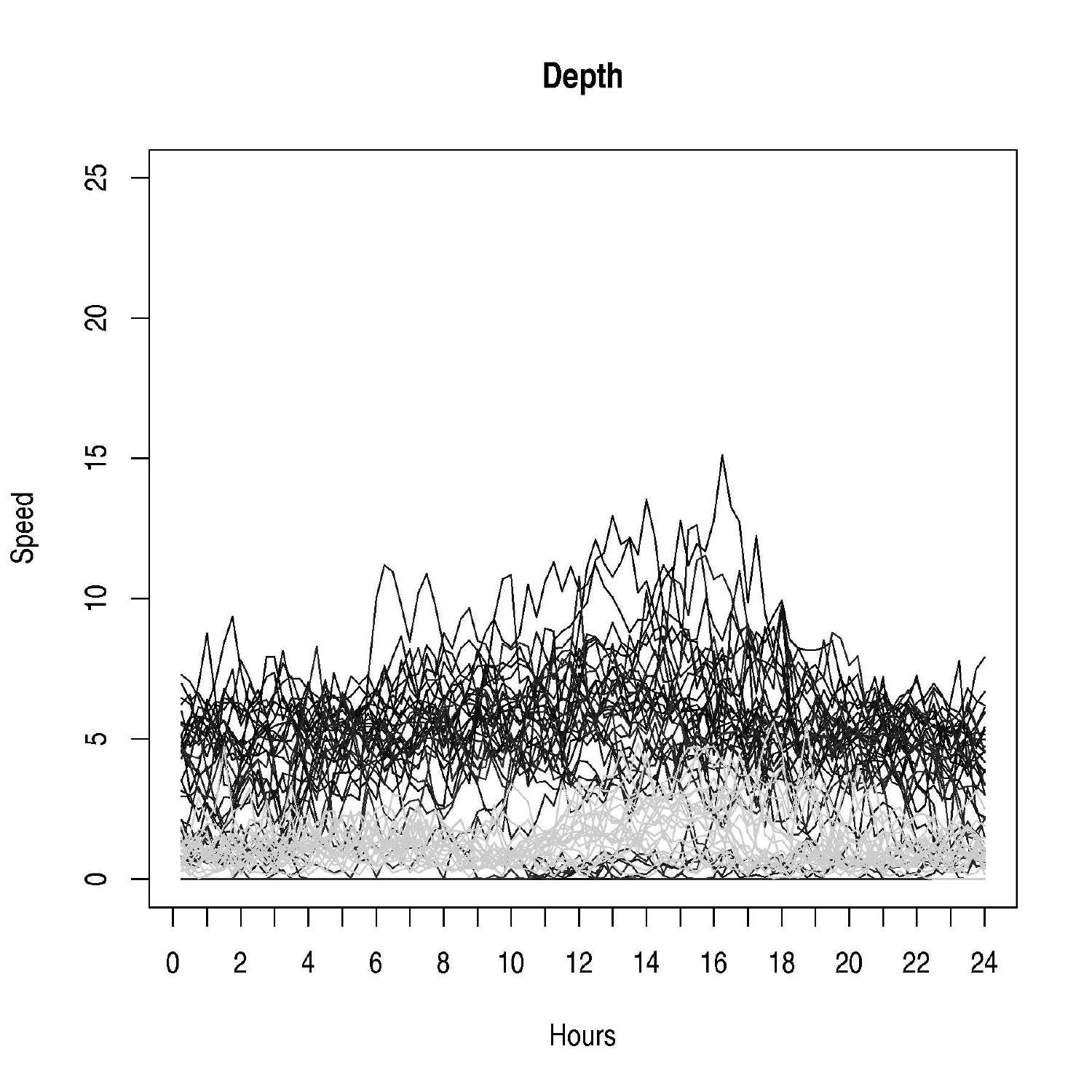}
  \includegraphics[width=0.45\textwidth]{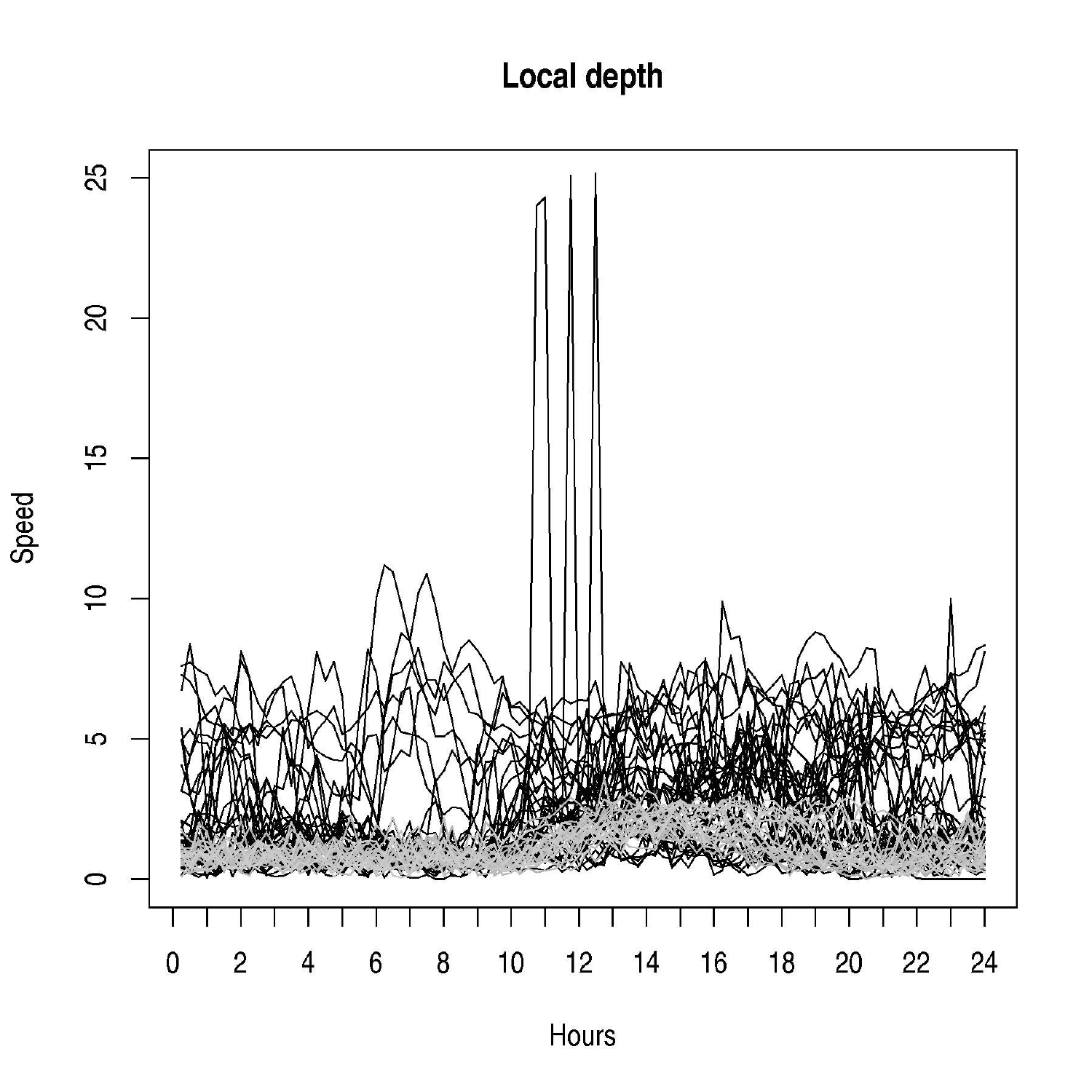}  
\end{center}
\caption{Wind Speed. Unusual patterns (black curves) and most common pattern (light gray curves) by modified half-region depth (left panel) and local modified half-region depth (right panel).}
\label{fig_vel_outliers}
\end{figure}  

\clearpage

\begin{figure}
\begin{center}
  \includegraphics[width=0.45\textwidth]{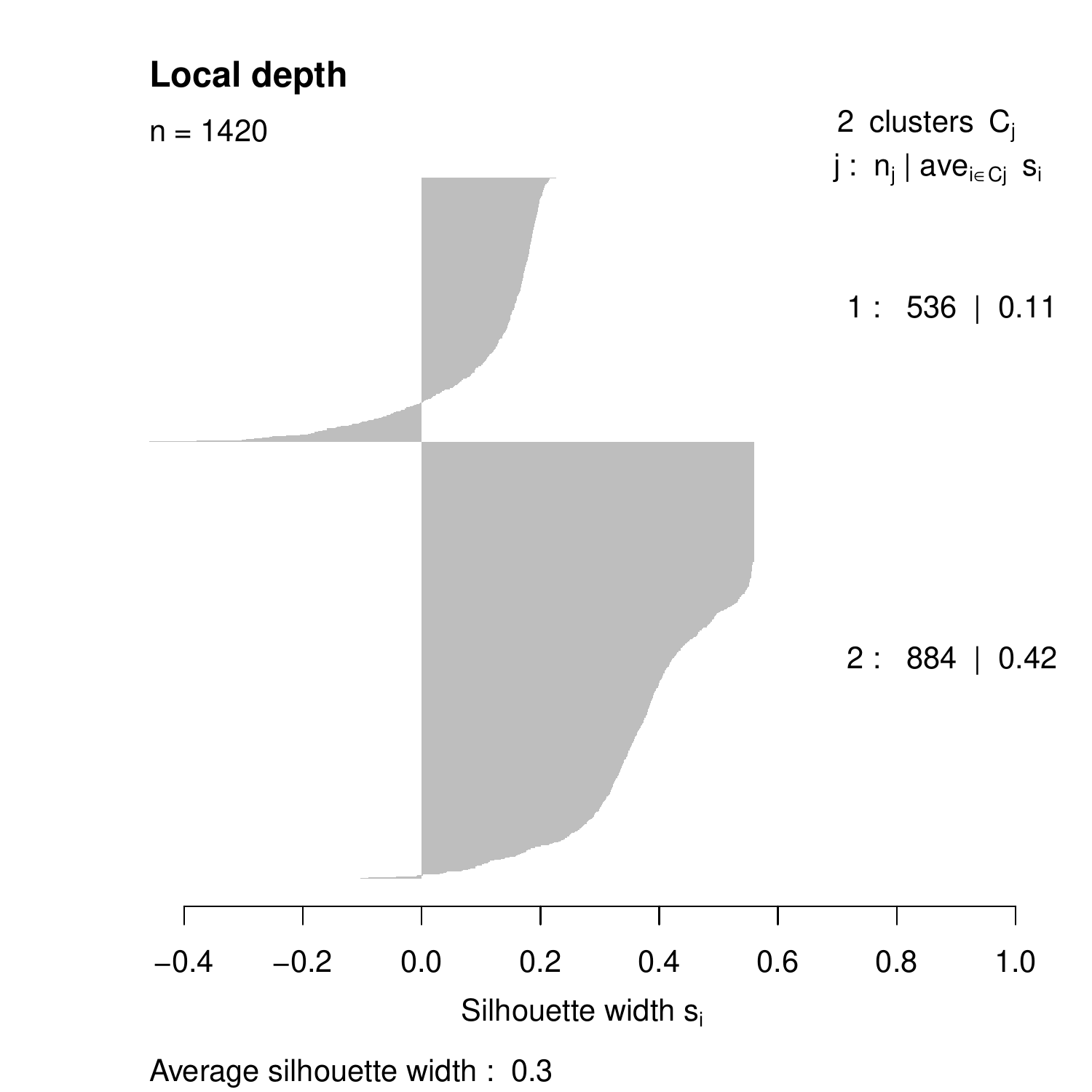}
  
\end{center}
\caption{Wind Speed. silhouette plot based on two groups and local modified half-region depth.}
\label{fig_vel_den_sil2}
\end{figure}  

\clearpage

\begin{figure}
\begin{center}
  \includegraphics[width=0.45\textwidth]{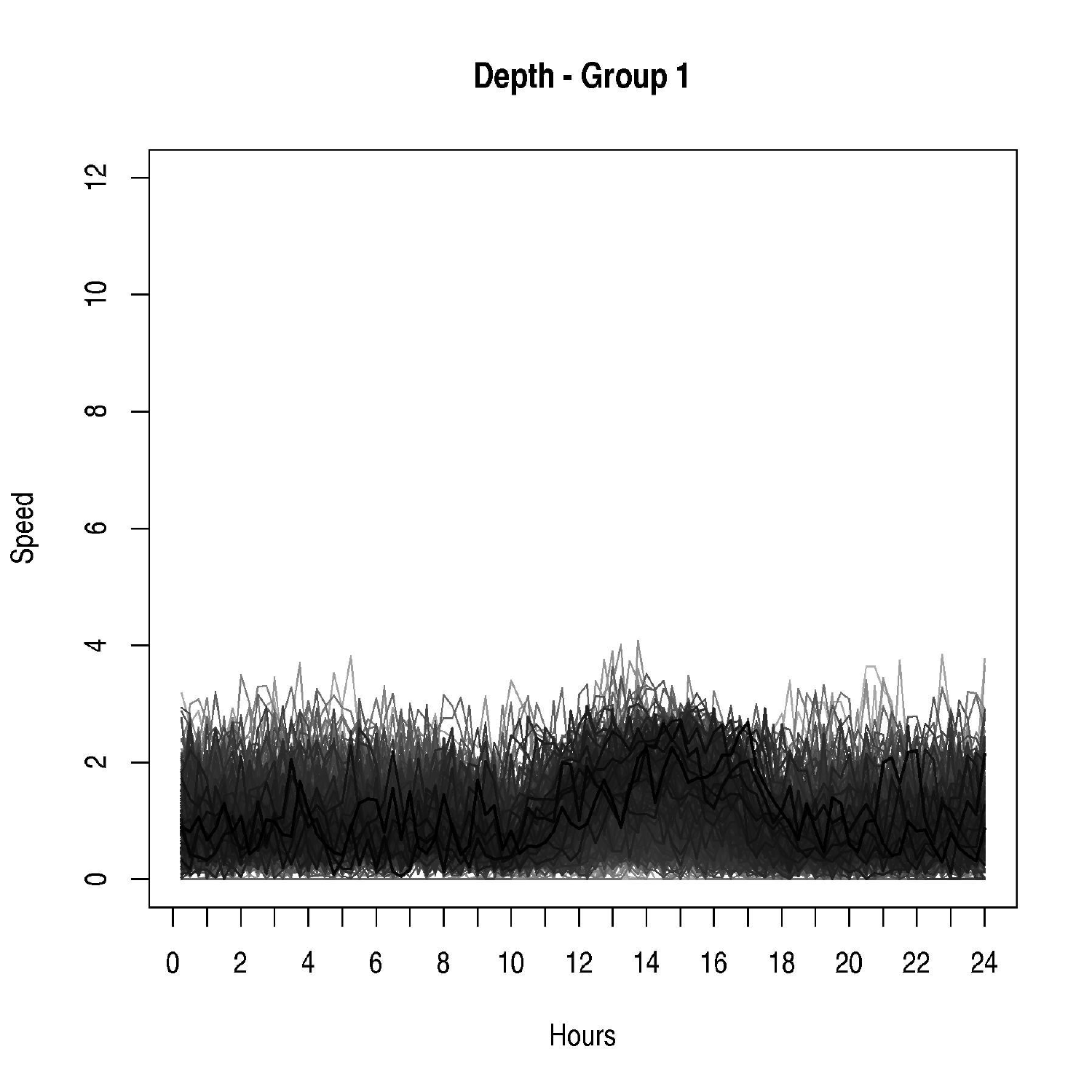}
  \includegraphics[width=0.45\textwidth]{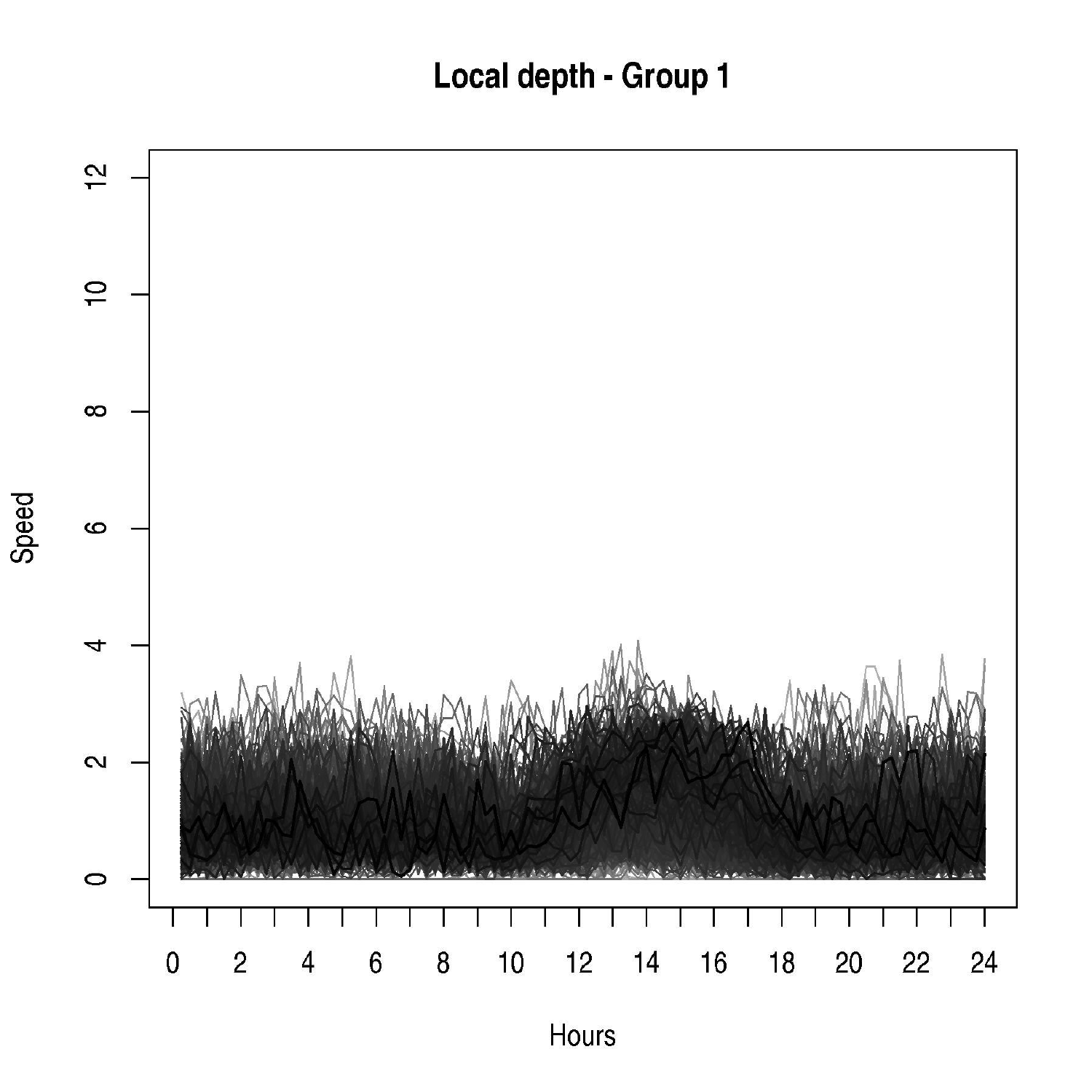}
  
  \includegraphics[width=0.45\textwidth]{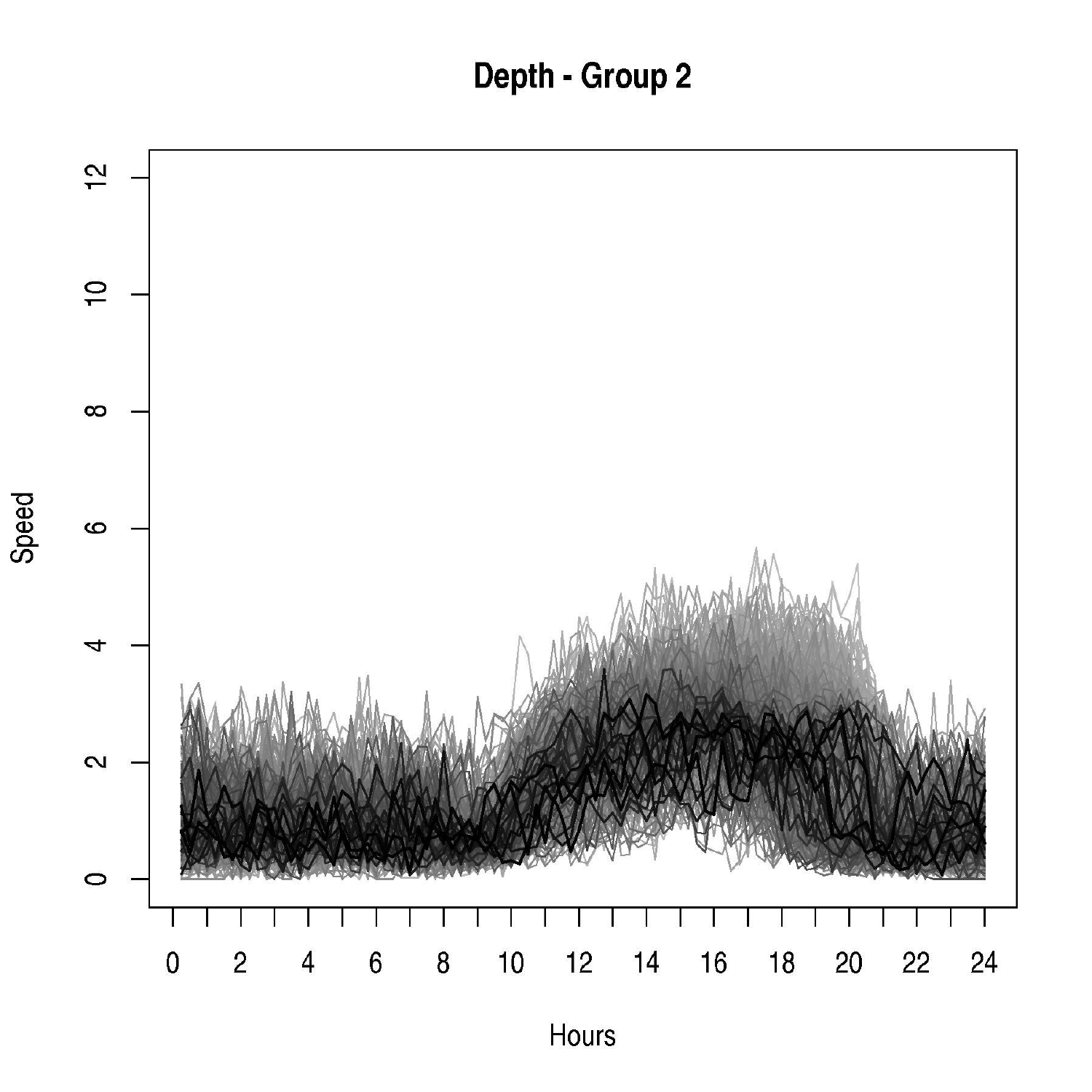}
  \includegraphics[width=0.45\textwidth]{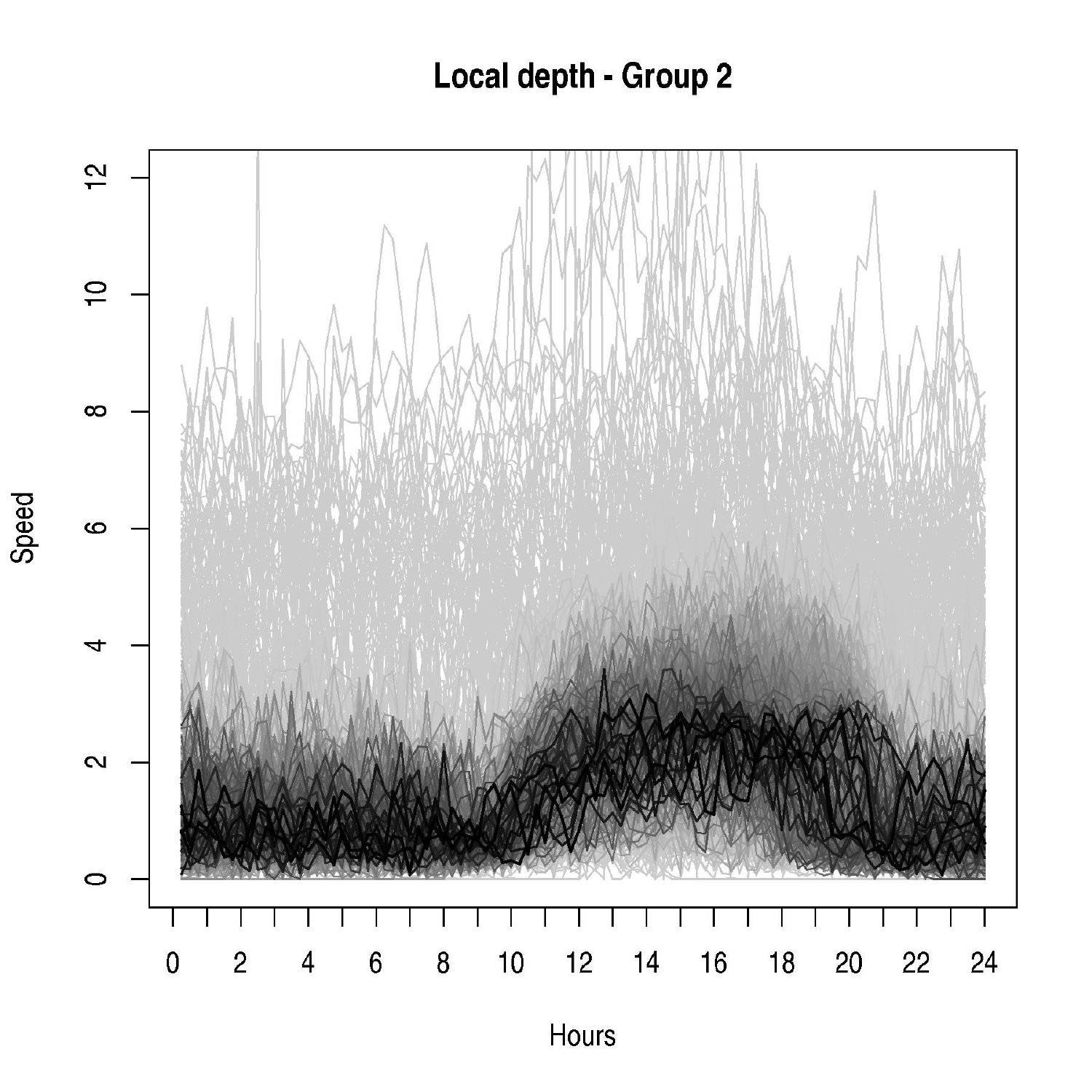}
\end{center}
\caption{Wind Speed. Groups provided by a cluster analysis based on modified half-region depth similarity (first column) and local modified half-region depth similarity (second column). The curves are plotted with color and thickness according to their depth/local depth.}
\label{fig_vel_clusters2}
\end{figure}  

\clearpage

\section{Individual household electric power consumption}
\label{sec:hpc}

This archive contains $2075259$ measurements gathered between December $2006$ and November $2010$ ($47$ months) on individual household electric power consumption \citep{bache_lichman_2013}, an it is available at their website. Here we concentrate on global active power, i.e., household global minute-averaged active power in kilowatt. After removing 2 days which contain missing values and rearranging the data we obtain $1440$ time series each based on $60\times24=1440$ observations. Figure \ref{fig_hpc_dd} reports the DD-plot. Since the possible presence of two groups, the ranks provided by the two procedures are different. In particular, the local version provides more insight on the structures, high ranks are associated with a specific pattern, highlighted in green in figures, where consumption is almost null during the night, reaches a pick around 7am-8am, return to be null or with a moderate load in the early afternoon and return to have a higher consumption between 7pm and 11pm; see Figure \ref{fig_hpc_ranks}. Figure \ref{fig_hpc_den_sil} contrasts the dendogram and the silhouette plots provided by depth (left panels) and local depth (right panels) and Figure \ref{fig_hpc_clusters} shows the curves according to group membership. The differences are not very marked, however local depth provides a better job, in one group the trajectories with common consumptions, with a particular pattern as described above and picks of consumptions below 2Kw, in the other group the consumption tends to be high during all times.

\clearpage

\begin{figure}
\begin{center}
  \includegraphics[width=0.45\textwidth]{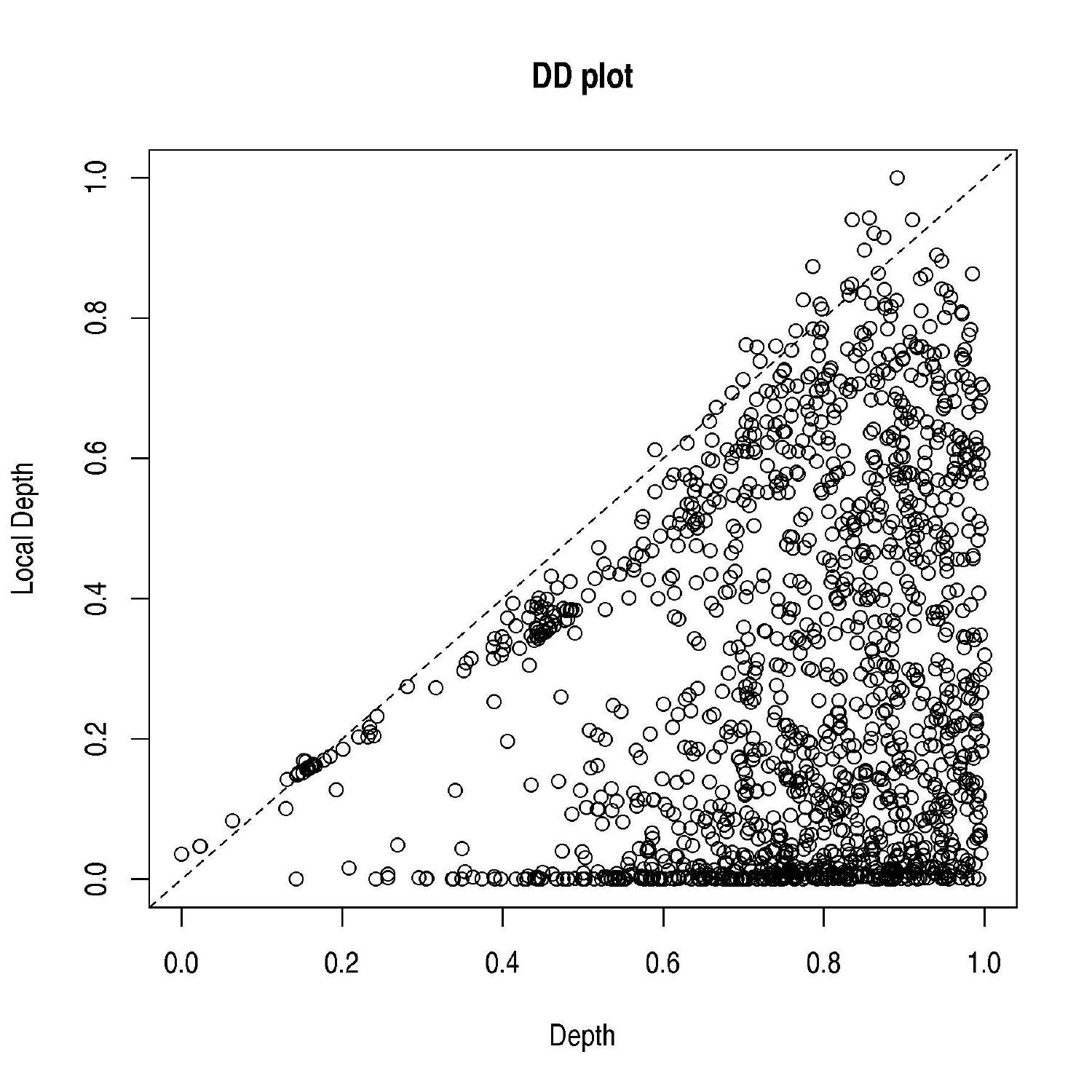}
\end{center}
\caption{Individual Household Electric Power Consumption. DD plot, local modified half-region depth versus modified half-region depth.}
\label{fig_hpc_dd}
\end{figure}  

\clearpage

\begin{figure}
\begin{center}
  \includegraphics[width=0.45\textwidth]{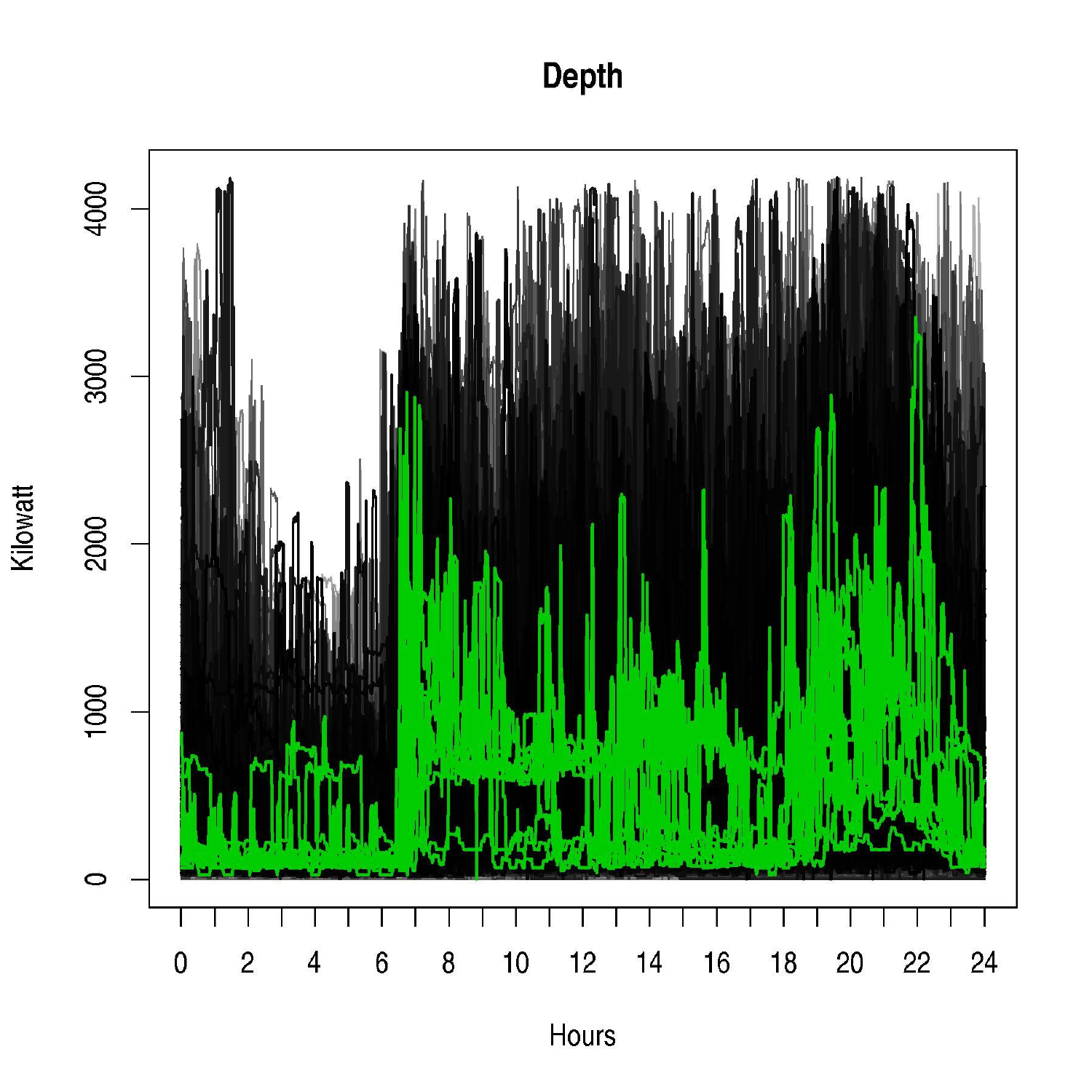}
  \includegraphics[width=0.45\textwidth]{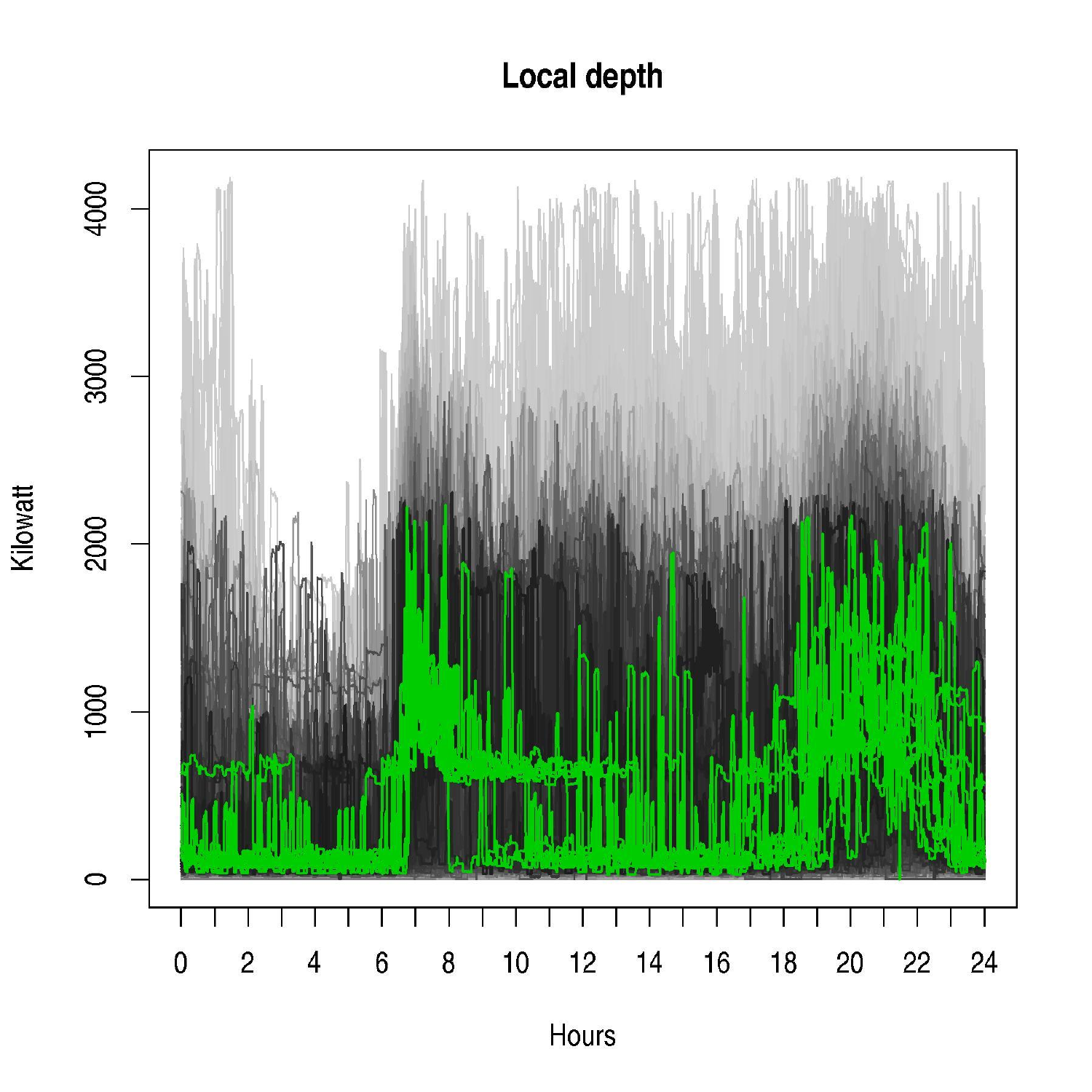}  
\end{center}
\caption{Individual Household Electric Power Consumption. Ranks provided by modified half-region depth (left panel) and local modified half-region depth (right panel). Darker and wider means higher rank.}
\label{fig_hpc_ranks}
\end{figure}  

\clearpage

\begin{figure}
\begin{center}
  \includegraphics[width=0.45\textwidth]{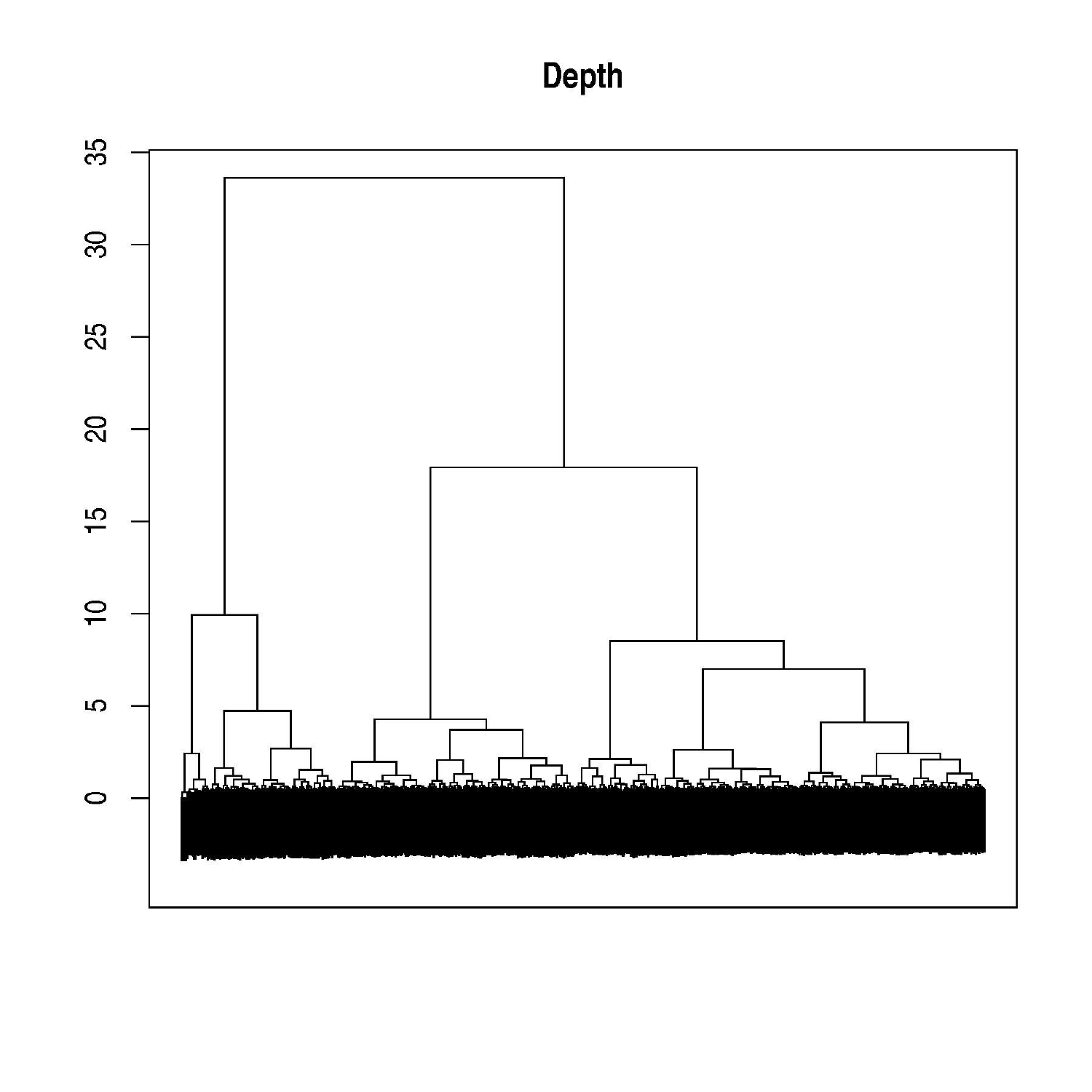}
  \includegraphics[width=0.45\textwidth]{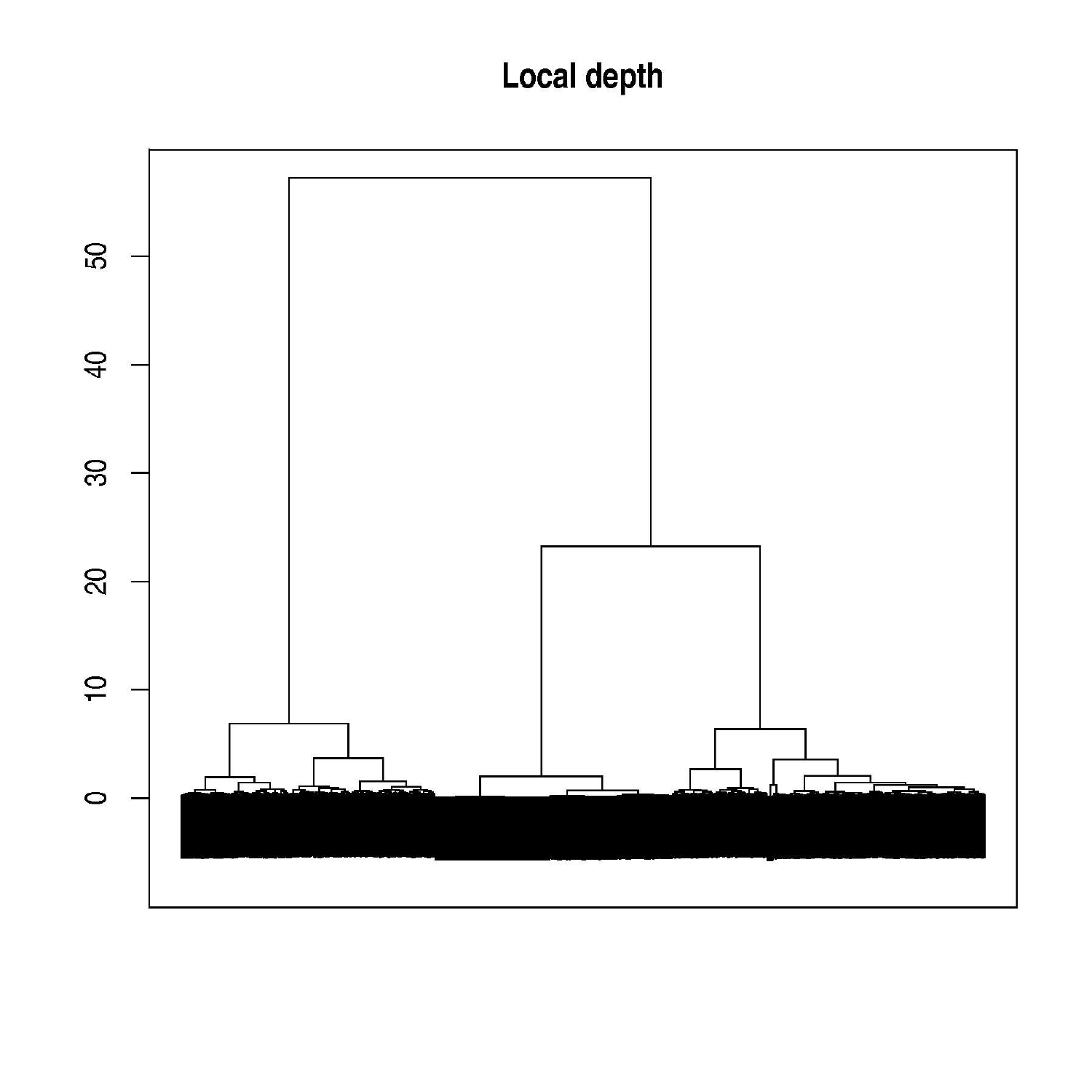}
  
  \includegraphics[width=0.45\textwidth]{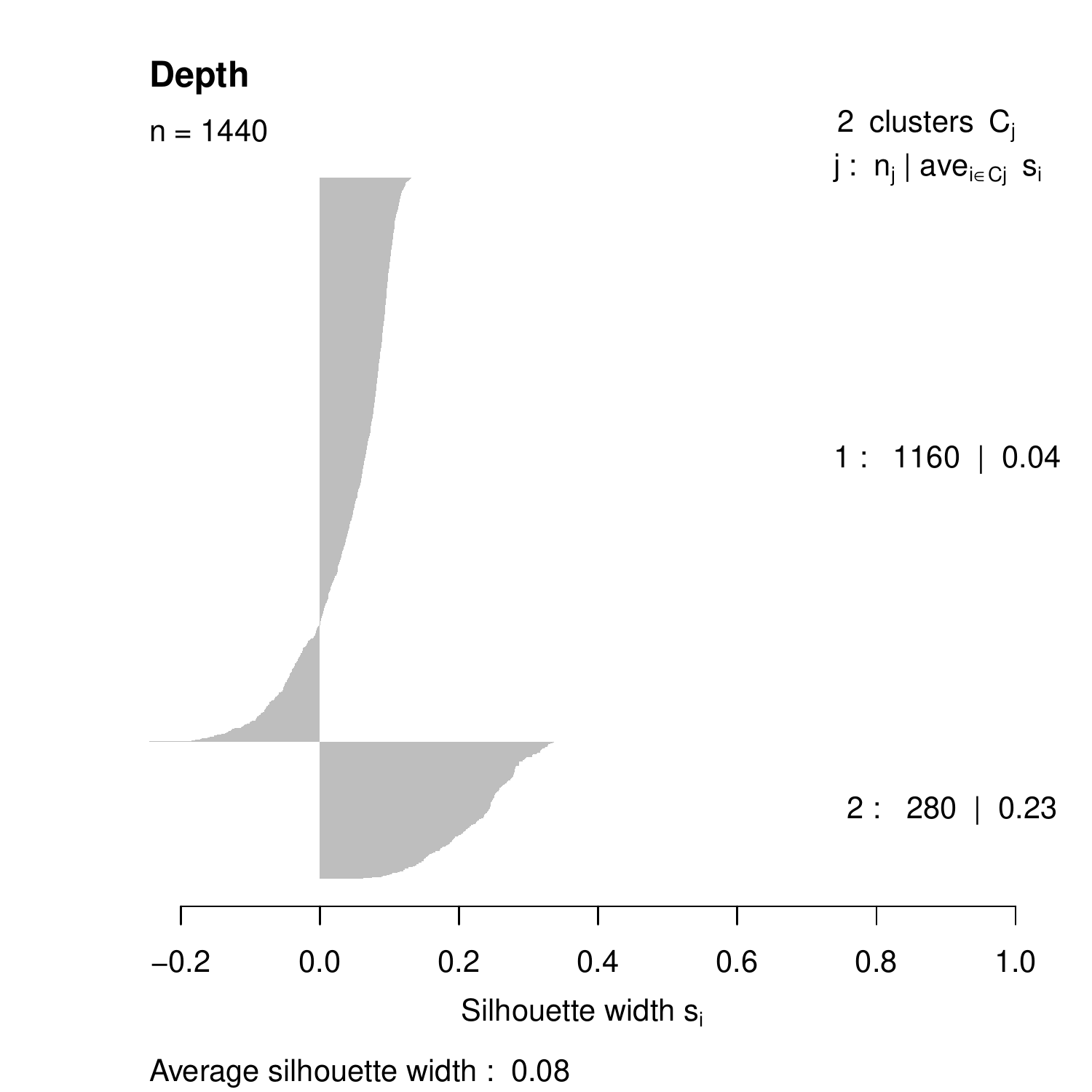}
  \includegraphics[width=0.45\textwidth]{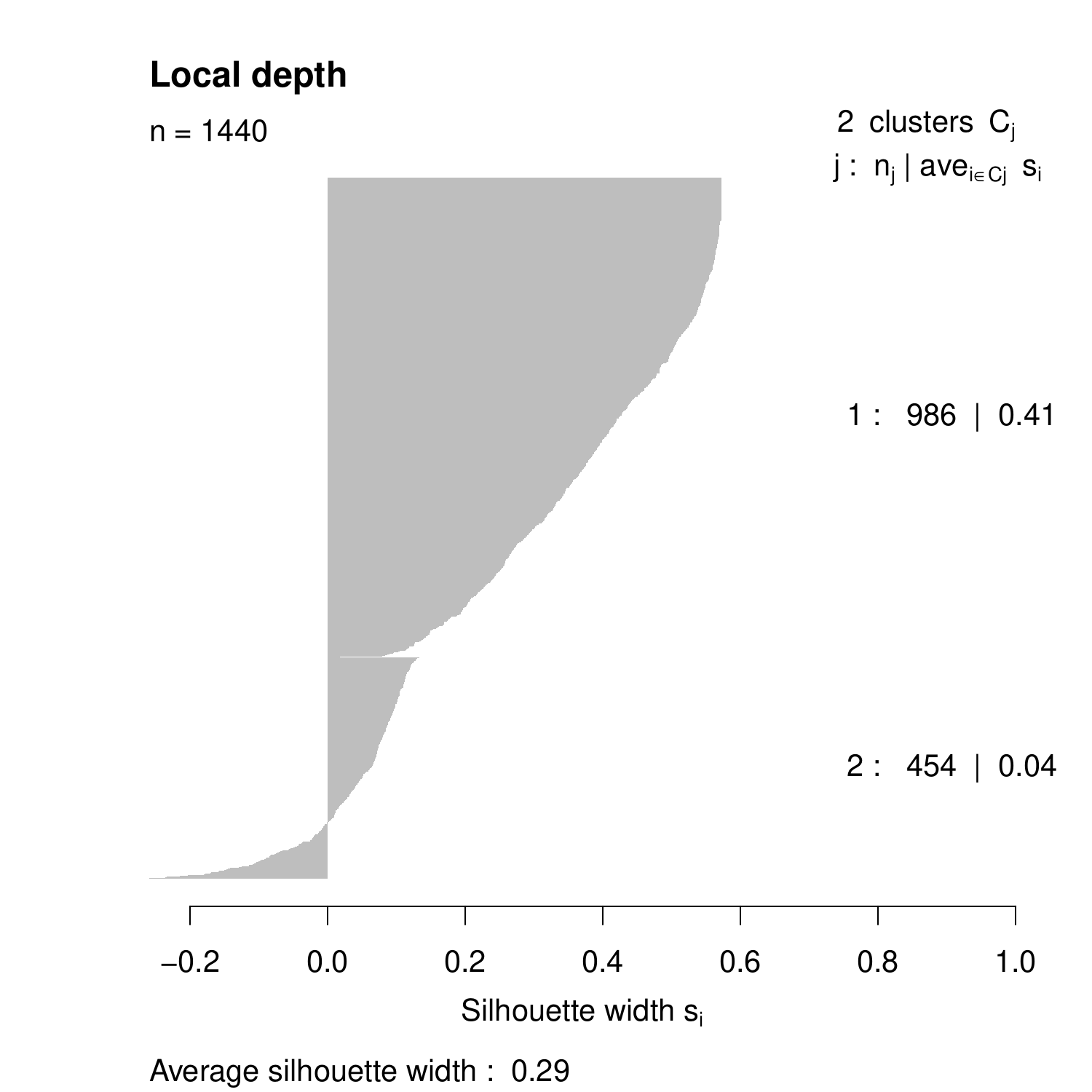}
  
\end{center}
\caption{Individual Household Electric Power Consumption. First row, dendograms, second row silhouette plot, first column modified half-region depth, second column local modified half-region depth.}
\label{fig_hpc_den_sil}
\end{figure}  

\clearpage

\begin{figure}
\begin{center}
  \includegraphics[width=0.45\textwidth]{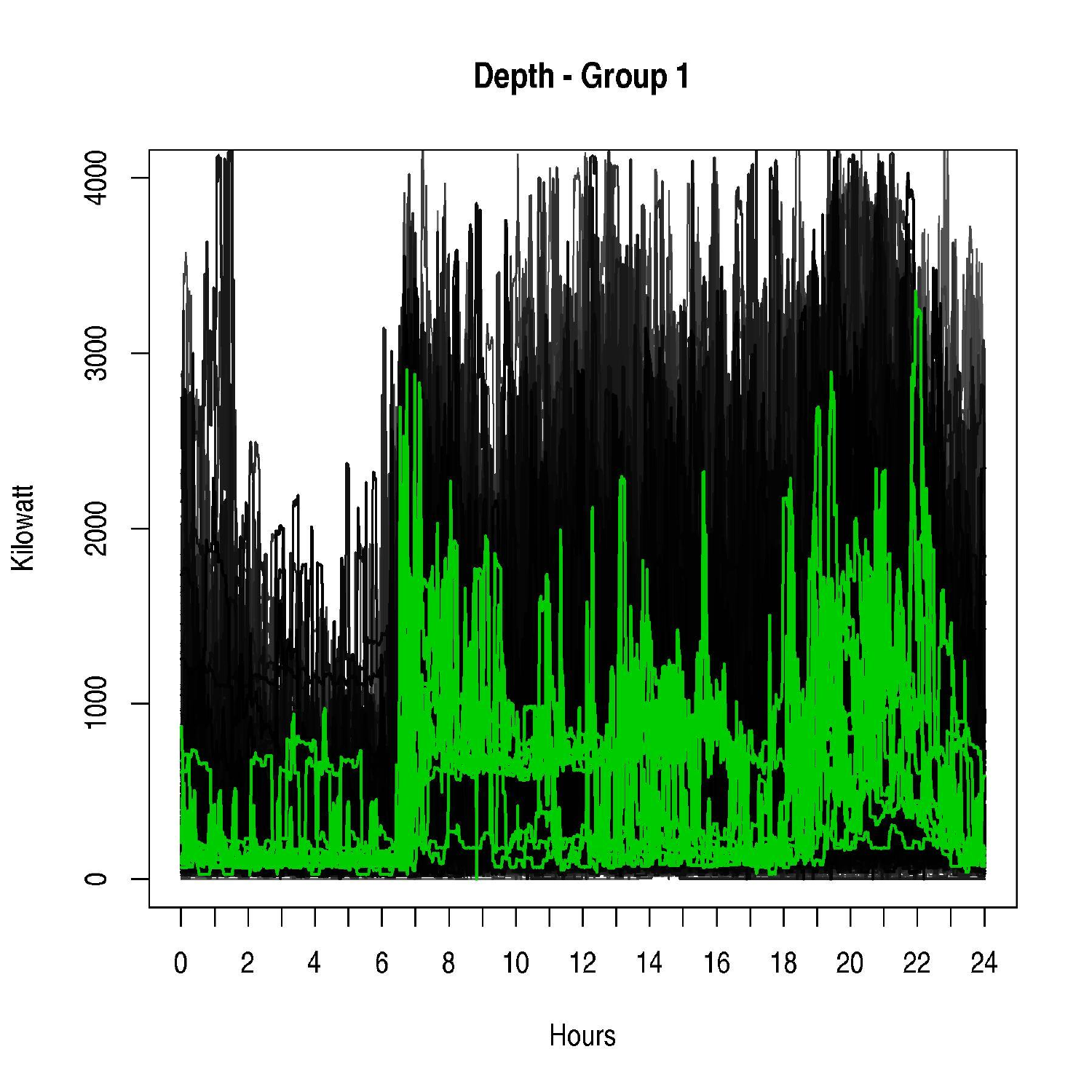}
  \includegraphics[width=0.45\textwidth]{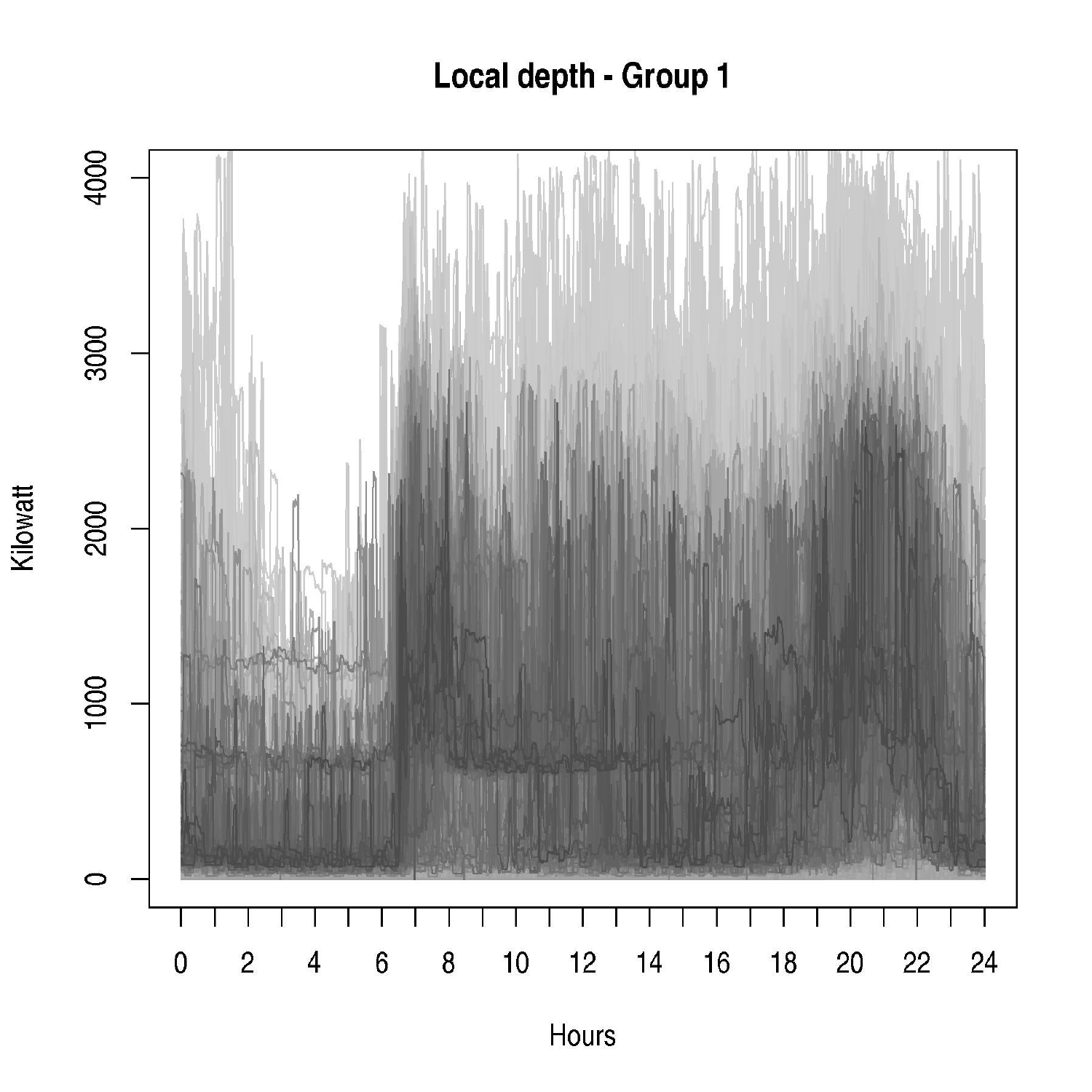}
  
  \includegraphics[width=0.45\textwidth]{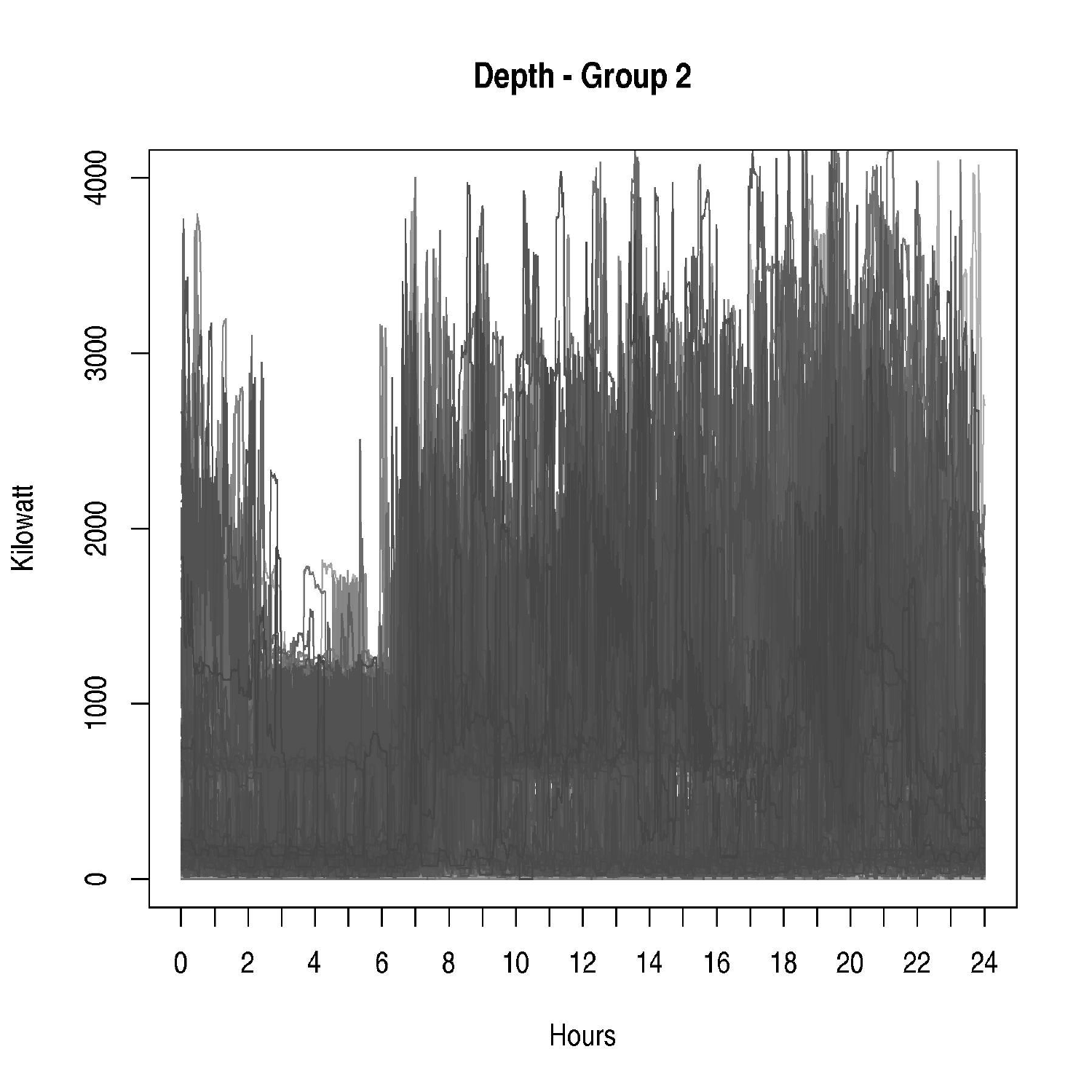}
  \includegraphics[width=0.45\textwidth]{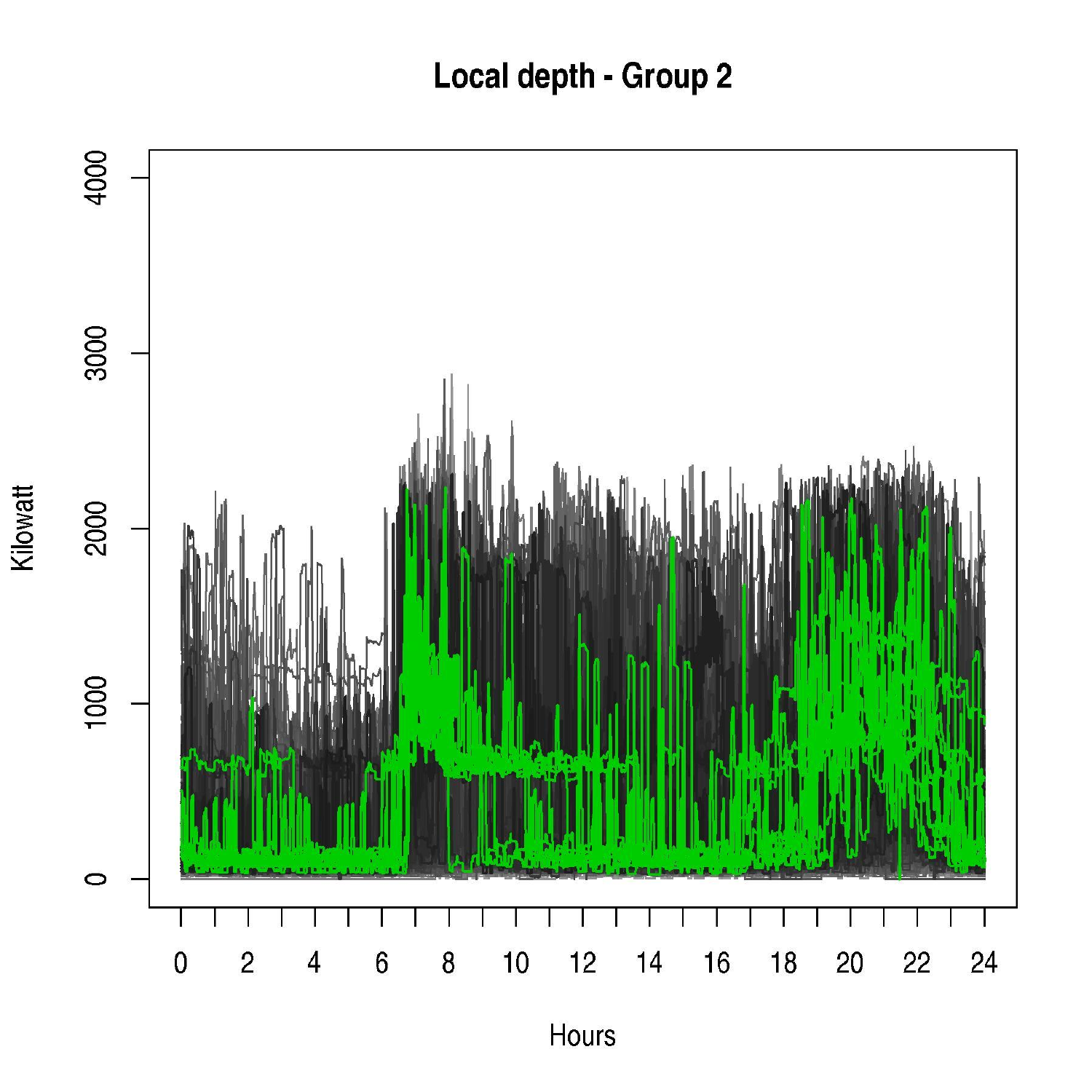}
  
\end{center}
\caption{Individual Household Electric Power Consumption. Groups provided by a cluster analysis based on modified half-region depth similarity (first column) and local modified half-region depth similarity (second column). The curves are plotted with color and thickness according to their depth/local depth. Green curves are the higher rank curves.}
\label{fig_hpc_clusters}
\end{figure}  

\clearpage

\section{out5d}
\label{sec:out5d}

In this section we explore the out5d data set available at davis.wpi.edu/\-xmdv/\-datasets/\-out5d.html. It is a collection of 16384 observations along five
dimensions collected from remote sensing devices in a particular region in Western Australia on a $128\times128$ grid. The 5 variables are Spot, Magnetics and  3 bands of radiometrics which highlight Potassium, Thorium and Uranium. We concentrate on the local modified half-region depth method and we illustrate the use of the \texttt{R} package \texttt{ldfun} available upon request to the author. We first load the required packages
\begin{Schunk}
\begin{Sinput}
> require(ldfun)
> require(cluster)
\end{Sinput}
\end{Schunk}
and the data set from the original source
\begin{Schunk}
\begin{Sinput}
> url <- "http://davis.wpi.edu/~xmdv/datasets/out5d.tar.gz"
> download.file(url, destfile="./out5d.tar.gz")
> untar("out5d.tar.gz")
> out5d <- read.table("./out5d.okc", header=FALSE, skip=7)
> colnames(out5d)<-c("Spot", "Magnetic", "Potassium", "Thorium", "Uranium")
\end{Sinput}
\end{Schunk}
The next function will be used to normalize the (local) depth in the interval $[0,1]$ 
\begin{Schunk}
\begin{Sinput}
> normalize <- function(x) {
+   x <- (x - min(x))/(max(x) - min(x))
+   return(x)
+ }
\end{Sinput}
\end{Schunk}
The function \texttt{quantile.localdepth.functional} evaluates the distance between to curves using as default the sup norm, and return specific quantiles of its distribution. If \texttt{size=TRUE}, all the distances are also reported. The argument \texttt{byrow=TRUE} means that the observations are by rows and variables are on the columns, this is the default setting.  
\begin{Schunk}
\begin{Sinput}
> tau <- quantile.localdepth.functional(out5d, 
+          probs=c(0.1, 0.2, 0.3, 0.4, 0.05), byrow=TRUE, size=TRUE)
\end{Sinput}
\end{Schunk}
Since, we have several observations, we approximated the empirical distribution functions by plotting the percentile of the distribution, see Figure \ref{fig_out5d_sizes}.
\begin{Schunk}
\begin{Sinput}
> plot(seq(0,1,0.01), quantile(tau$stats, probs=seq(0,1,0.01)), 
+      type="l", xlab="Size", 
+      ylab="Empirical Cumulative Distribution Funtion")
\end{Sinput}
\end{Schunk}
\begin{figure}
\begin{center}
  \includegraphics[width=0.45\textwidth]{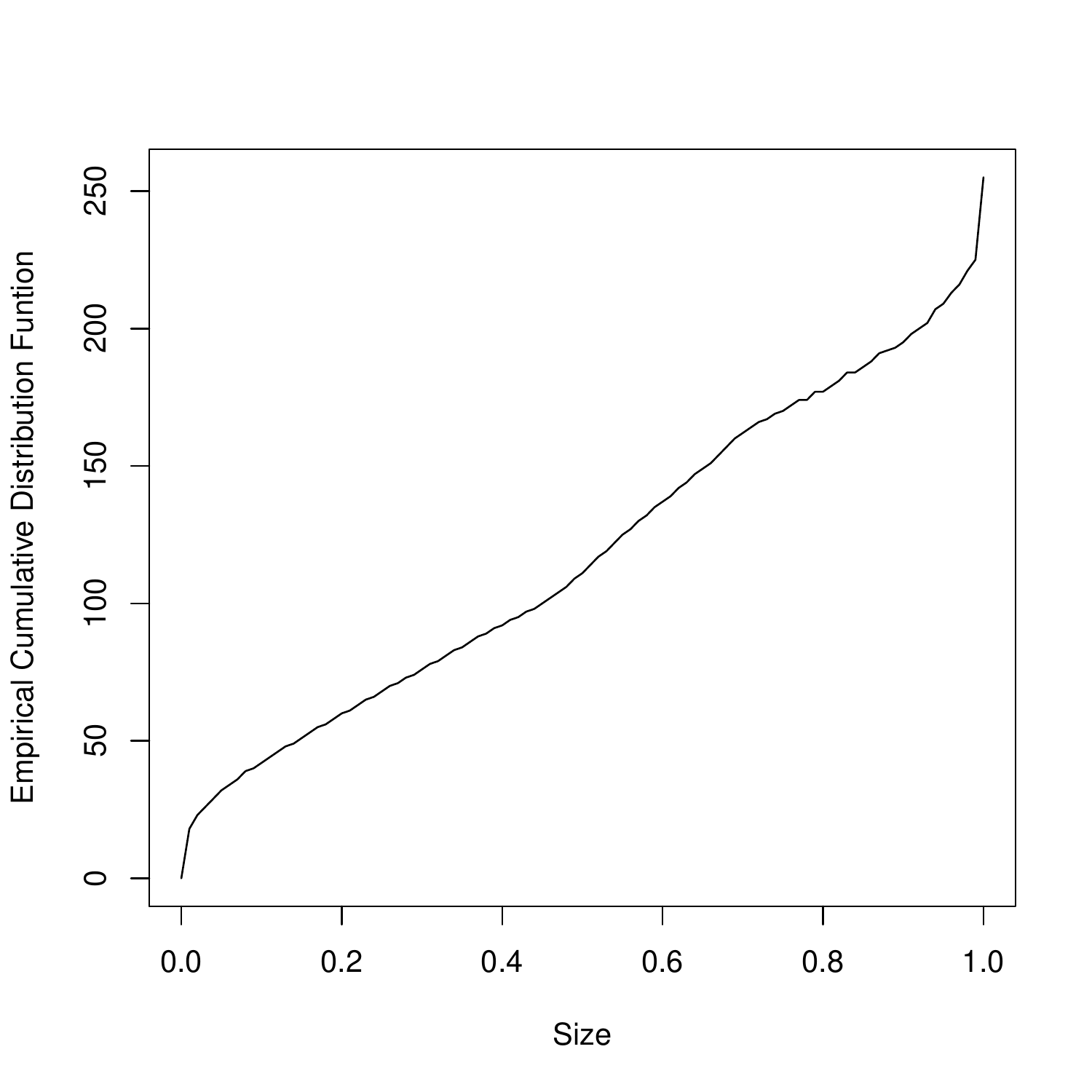}
\end{center}
\caption{out5d. Empirical cumulative distribution function of the distances between two curves.}
\label{fig_out5d_sizes}
\end{figure}
To explore the data set we set $\tau$ to be the $30\%$ quantile order of the distances distribution. The modified half-region detph and the local modified half-region depth are evaluated in the same call as
\begin{Schunk}
\begin{Sinput}
> mhr3 <- localdepth.modhalfregion(x=out5d, 
+           tau=tau$quantile[3], byrow=TRUE)
\end{Sinput}
\end{Schunk}
the resulting object is of class \texttt{localdepth} and it is well integrated with the methods available in the CRAN package \texttt{localdepth}.
Others depth measures are available in the package, see its documentation. DD-plot is easy to obtain as follows (Figure \ref{fig_out5d_dd})
\begin{Schunk}
\begin{Sinput}
> plot(mhr3)
> abline(0,1,lty=2)
\end{Sinput}
\end{Schunk}
\begin{figure}
\begin{center}
  \includegraphics[width=0.45\textwidth]{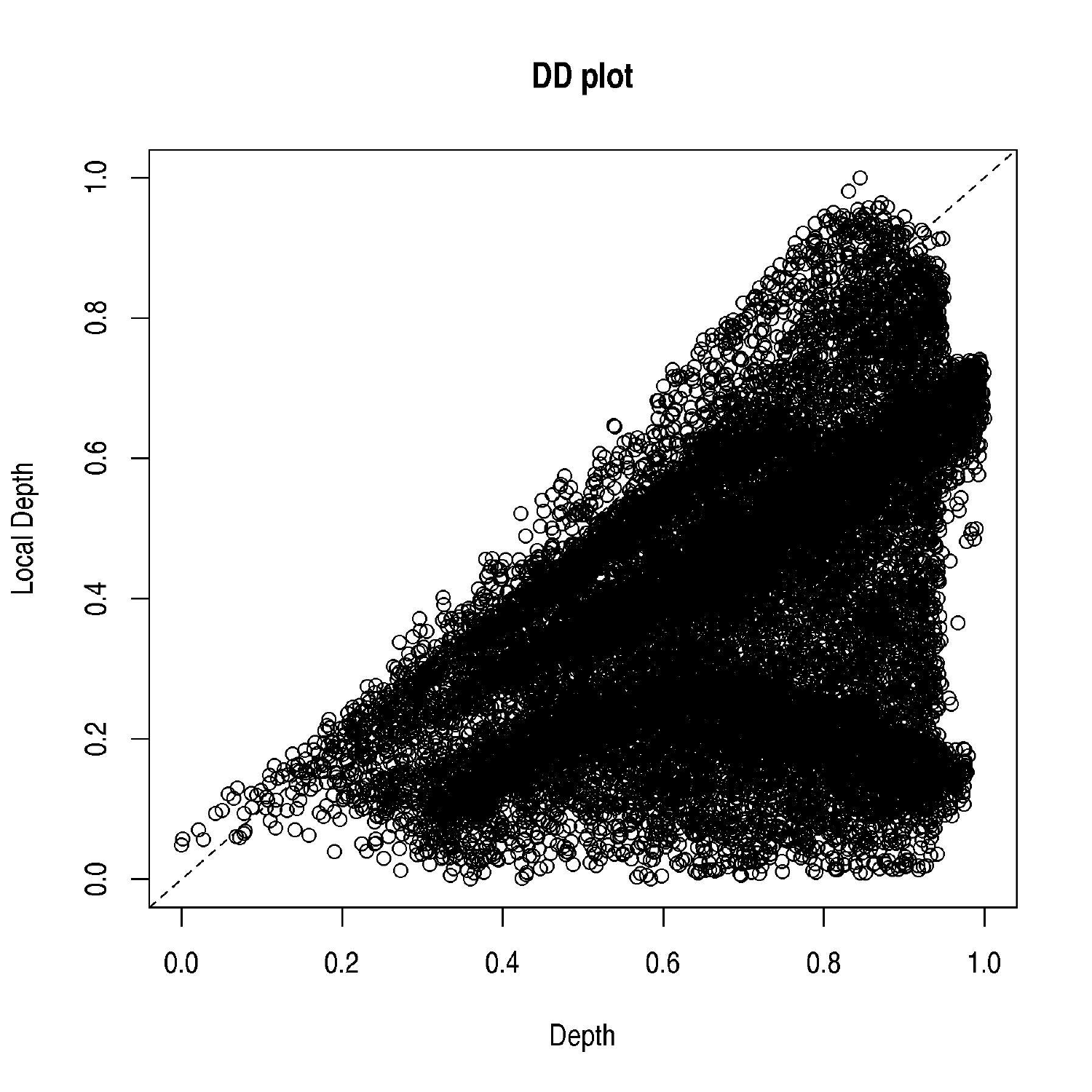}
\end{center}
\caption{out5d. DD plot, local modified half-region depth versus modified half-region depth.}
\label{fig_out5d_dd}
\end{figure}  
and a parallel plot which highlight the higher rank curves by colors (Figure \ref{fig_out5d_parallel_local3}) is obtained by
\begin{Schunk}
\begin{Sinput}
> parallelplot(mhr3, lattice=TRUE)
\end{Sinput}
\end{Schunk}
If the argument \texttt{lattice} is \texttt{TRUE} then the function \texttt{parallelplot} from the \texttt{lattice} package is used, otherwise the standard \texttt{matplot} function is called. To plot the ranks obtained by the depth, the argument \texttt{type} should be set accordingly.
\begin{figure}
\begin{center}
  \includegraphics[width=0.45\textwidth]{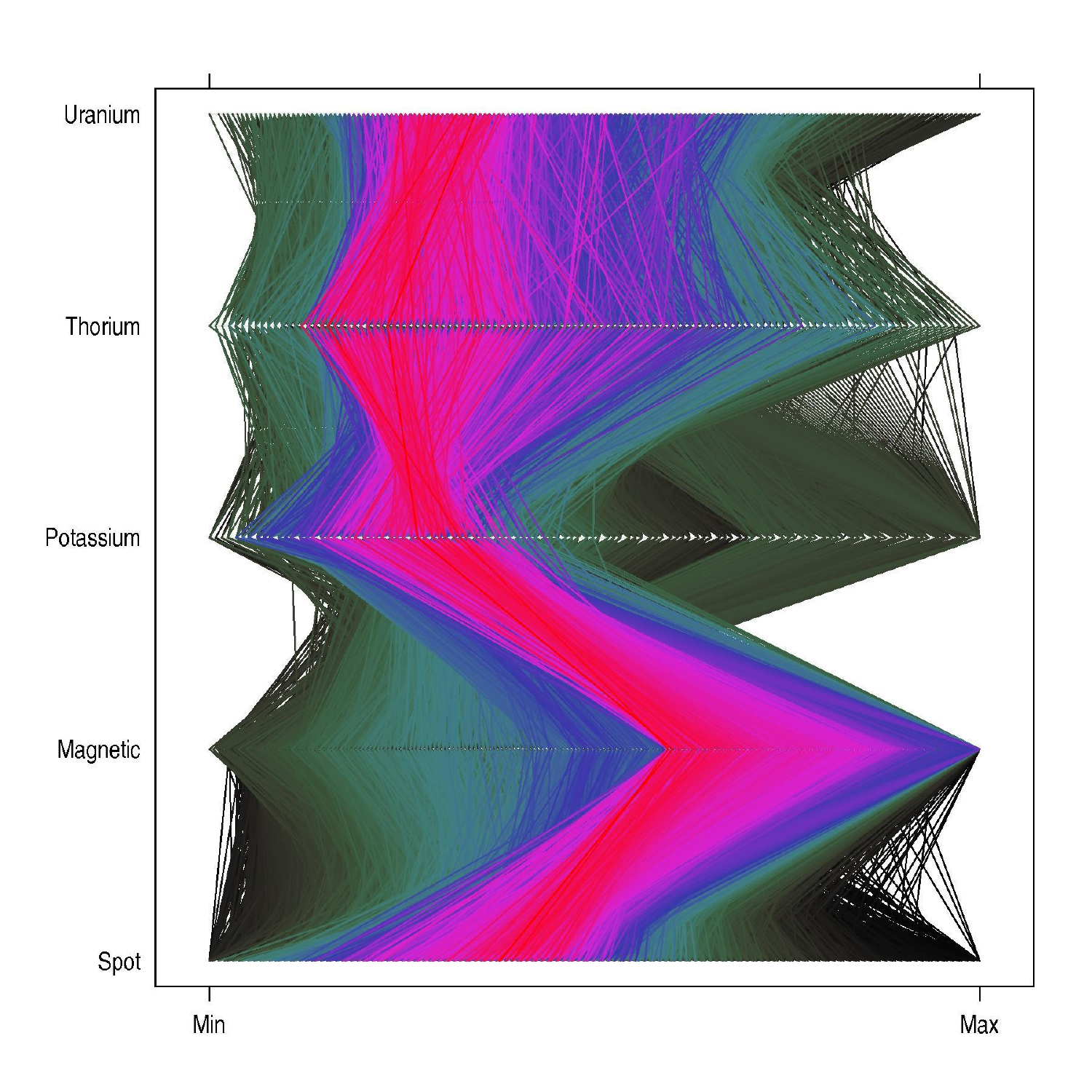}
\end{center}
\caption{out5d. Parallel plot, ranks are according to local modified half-region depth; darker means lower ranks, redder higher ranks.}
\label{fig_out5d_parallel_local3}
\end{figure}
The plot point out the presence of a dense group of curves. As we will see later on, this group is characterize by low levels of Thorium and Uranium, and intermediate values of the others. A different representation is obtained by the function \texttt{pairs} after rearraging the observations according to their ranks and preparing a suitable palette colors, see Figure \ref{fig_out5d_pairs_local3},
\begin{Schunk}
\begin{Sinput}
> omhr3 <- order(mhr3$localdepth)
> oout5d <- out5d[omhr3,]
> colmhr3 <- hsv(h = normalize(mhr3$localdepth), 
+                s = normalize(mhr3$localdepth), 
+                v = normalize(mhr3$localdepth), alpha = 1)
> ocol <- colmhr3[omhr3]
\end{Sinput}
\end{Schunk}
\begin{Schunk}
\begin{Sinput}
> pairs(oout5d, col=ocol, pch=20)
\end{Sinput}
\end{Schunk}
\begin{figure}
\begin{center}
  \includegraphics[width=0.45\textwidth]{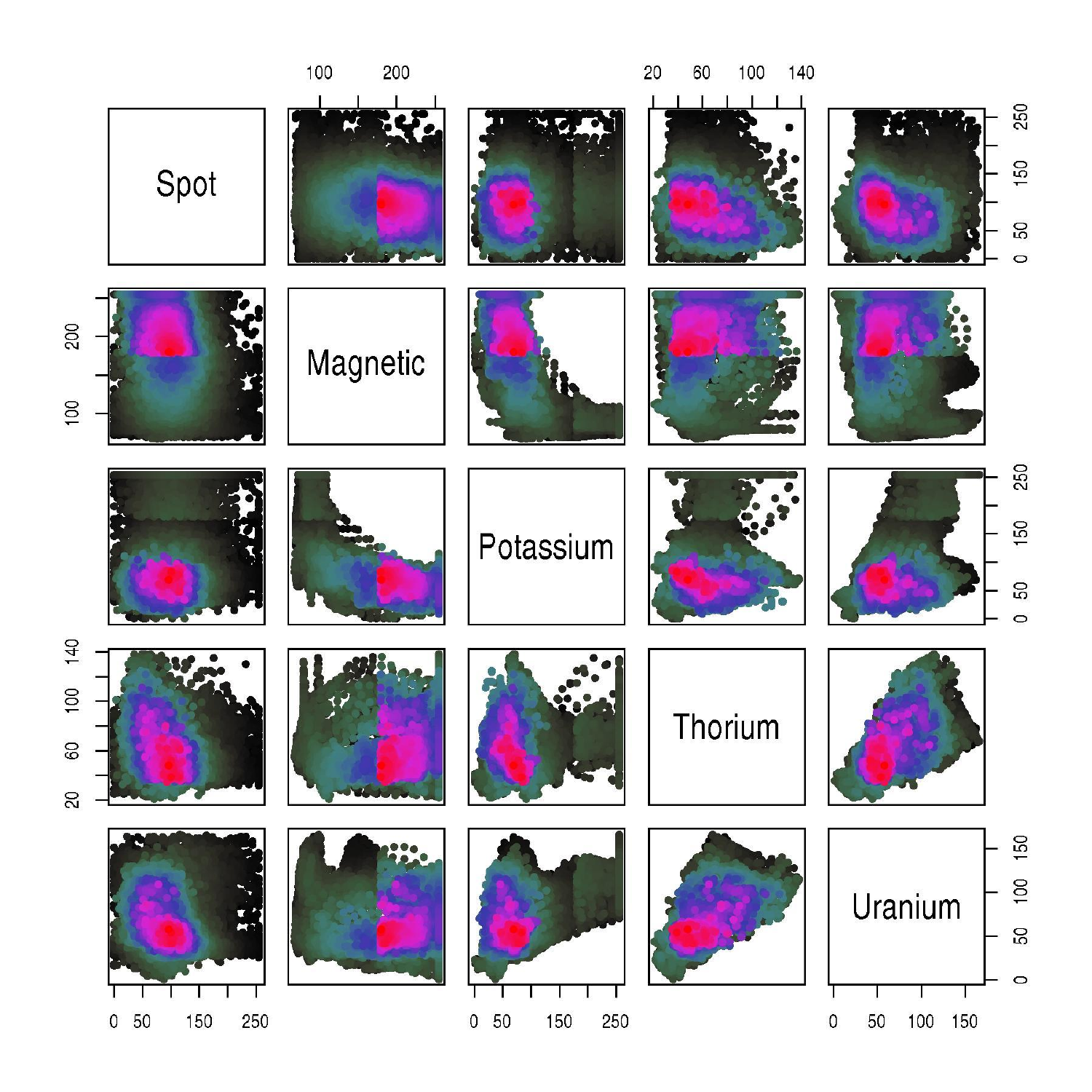}
\end{center}
\caption{out5d. Pairs plot, ranks are according to local modified half-region depth; darker means lower ranks, redder higher ranks.}
\label{fig_out5d_pairs_local3}
\end{figure}
We now turn our attention to the clustering problem. The function \texttt{localdepth.similarity.modhalfregion} provides the similarity measures based on both depth and local depth. The interface is very similar to the \texttt{localdepth.modhalfregion.functional} function, however we advise the user that this function run much slower and require a large amount of memory since it has to evaluate the (local) depth for $n \times (n+1) / 2$ points, hence for the sample sizes of the out5d data set we suggest its use in batch
\begin{Schunk}
\begin{Sinput}
> smhr3 <- localdepth.similarity.modhalfregion(x=out5d, 
+            tau=tau$quantile[3], byrow=TRUE)
\end{Sinput}
\end{Schunk}
The function \texttt{similarity2dissimilarity} will convert a similarity matrix in a dissimilarity matrix using the tranformation suggested in \citet{gower_1966}, then the result is used as input for the \texttt{hclust} function
\begin{Schunk}
\begin{Sinput}
> dissimilarity <- similarity2dissimilarity(smhr3$localdepth)
> mhrD <- hclust(as.dist(dissimilarity), method="ward")
\end{Sinput}
\end{Schunk}
In the usual way we obtain the dendrogram (Figure \ref{fig_out5d_den_sil} left panel)
\begin{Schunk}
\begin{Sinput}
> plot(mhrD, frame.plot=TRUE, labels=FALSE, ann=FALSE, main="Local depth")
> title("Local depth")
\end{Sinput}
\end{Schunk}
and the Silhouette plot (Figure \ref{fig_out5d_den_sil} right panel)
\begin{Schunk}
\begin{Sinput}
> groups <- cutree(mhrD, k=3)
> plot(silhouette(groups, dist=dissimilarity), main="Local depth")
> ##plot(s3)
\end{Sinput}
\end{Schunk}
\begin{figure}
\begin{center}
  \includegraphics[width=0.45\textwidth]{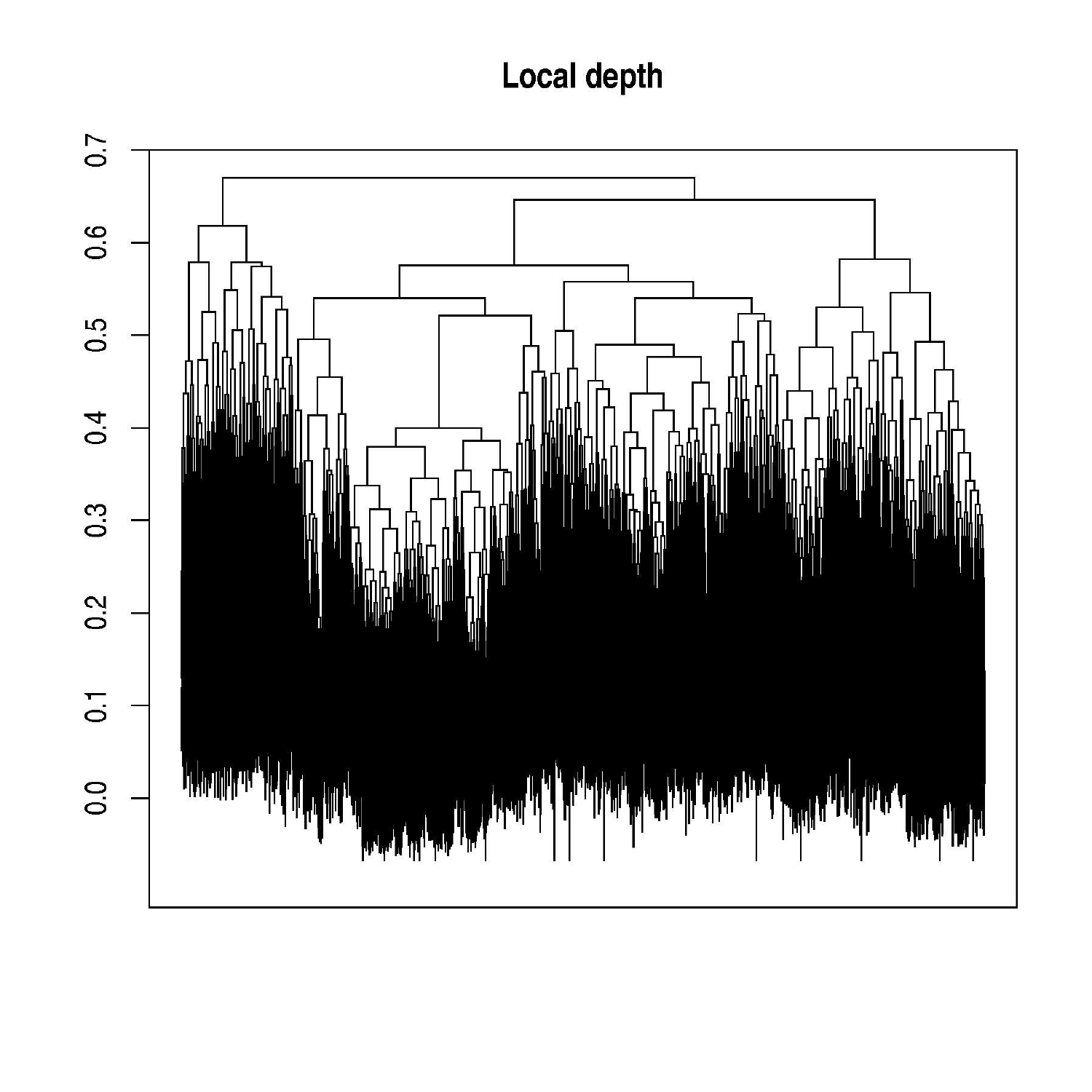}
  \includegraphics[width=0.45\textwidth]{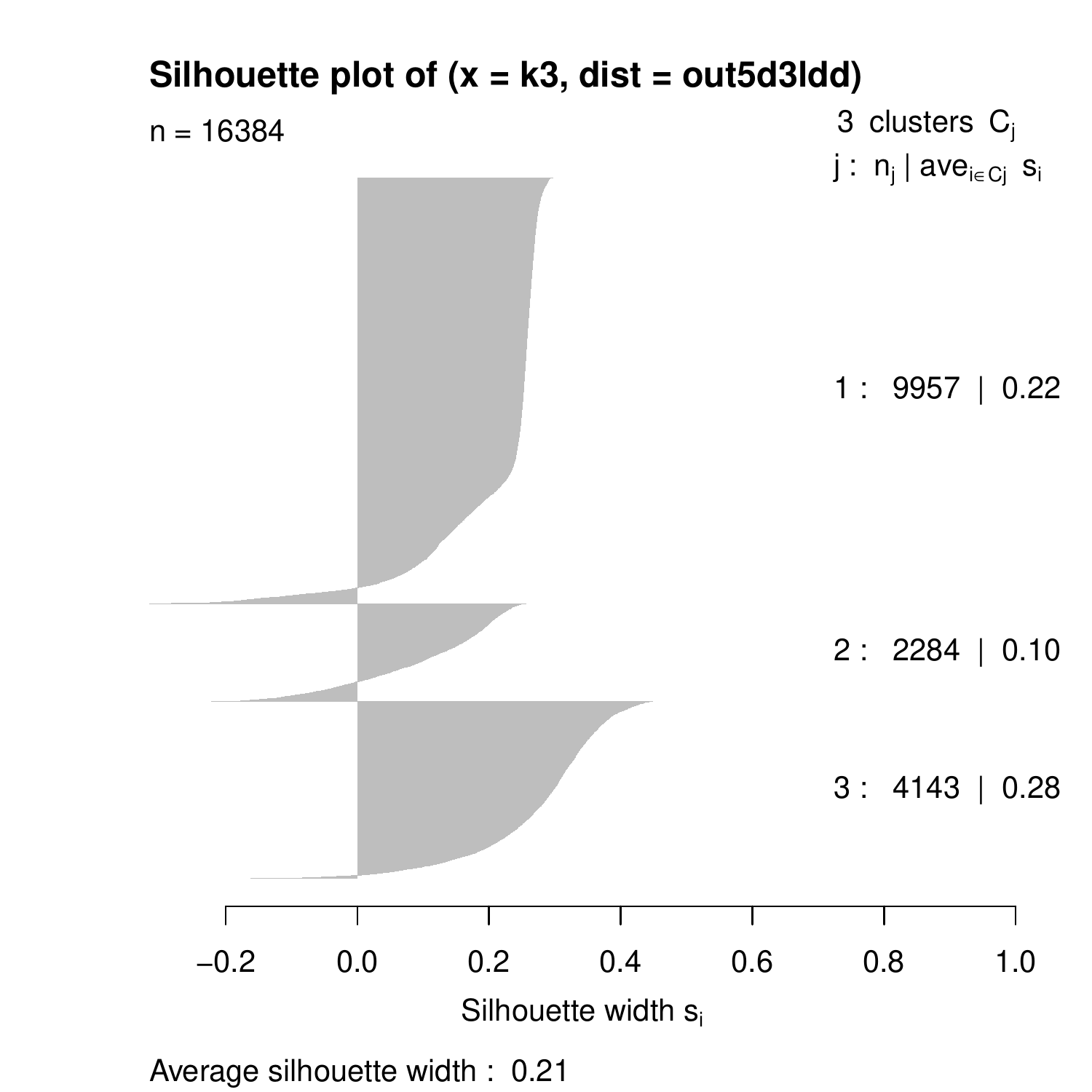}  
\end{center}
\caption{out5d. Dendogram and silhouette plot using dissimilarity obtained by local modified half-region depth.}
\label{fig_out5d_den_sil}
\end{figure}  
After inspection of the dendrogram and several silhouette plots we decide to stay with $3$ groups, to better characterize these groups we evaluate summary statistics on location and dispersion by 
\begin{Schunk}
\begin{Sinput}
> sout5d <- split(out5d, groups)
> lapply(sout5d, summary)
> lapply(sout5d, function(x) apply(x, 2, 
+                  function(y) diff(quantile(y, probs=c(0.25,0.75)))))
> lapply(sout5d, function(x) apply(x, 2, function(y) sd(y)))
\end{Sinput}
\end{Schunk}
some of these results are summarized in Tables \ref{tab:out5d3-cluster-median} and \ref{tab:out5d3-cluster-mean}.

\begin{table}
\begin{tabular}{lrrrrr}
\hline
Group ID & Spot & Magnetic & Potassium & Thorium & Uranium \\
\hline
1 & 100 (78) & 104 (74) & 90 (186) & 58 (29) & 55 (71) \\
2 &  90 (47) & 208 (28) & 64 (23) & 55 (20) & 51 (17) \\
3 &  62 (34) & 255 (0) & 60 (28) & 73 (25) & 82 (37) \\
\hline
\end{tabular}
\caption{Out5d data set. Median and Inter Quantile Range (in round parethesis) for the three groupes generated by the similarity based on local modified half-region depth}
\label{tab:out5d3-cluster-median}
\end{table}

\begin{table}
\begin{tabular}{lrrrrr}
\hline
Group ID & Spot & Magnetic & Potassium & Thorium & Uranium \\
\hline
1 & 106 (53) & 122 (46) & 132 (85) & 61 (18) & 72 (39) \\
2 &  89 (31) & 207 (17) &  62 (17) & 59 (17) & 56 (20) \\
3 &  68 (26) & 253 (7)  &  60 (19) & 75 (17) & 82 (23) \\
\hline
\end{tabular}
\caption{Out5d data set. Mean and Standard Deviation (in round parethesis) for the three groupes generated by the similarity based on local modified half-region depth}
\label{tab:out5d3-cluster-mean}
\end{table}

The first group is characterized by high levels of Spot and Potassium and low levels of Magnetic. The third group has extreme values of Magnetic, high levels of Thorium and Uranium, while low levels of Spot and Potassium. The second group presents somehow intermediate levels, with lower levels of Thorium and Uranium. Figure \ref{fig_out5d_parallel_cluster_local3} reports the parallel plots of the $3$ groups where we can appreciate some of these characteristics
\begin{Schunk}
\begin{Sinput}
> myPanel <- function(x, col, ok, ...) {
+   pp <- which.packet()
+   panel.parallel(x, col=col[ok==pp], ...)
+ }
> ogroups <- groups[omhr3]
> parallelplot(~oout5d | factor(ogroups), panel=myPanel, 
+              ok=ogroups, col=ocol, scales=list(x="same"))
\end{Sinput}
\end{Schunk}
\begin{figure}
\begin{center}
  \includegraphics[width=0.45\textwidth]{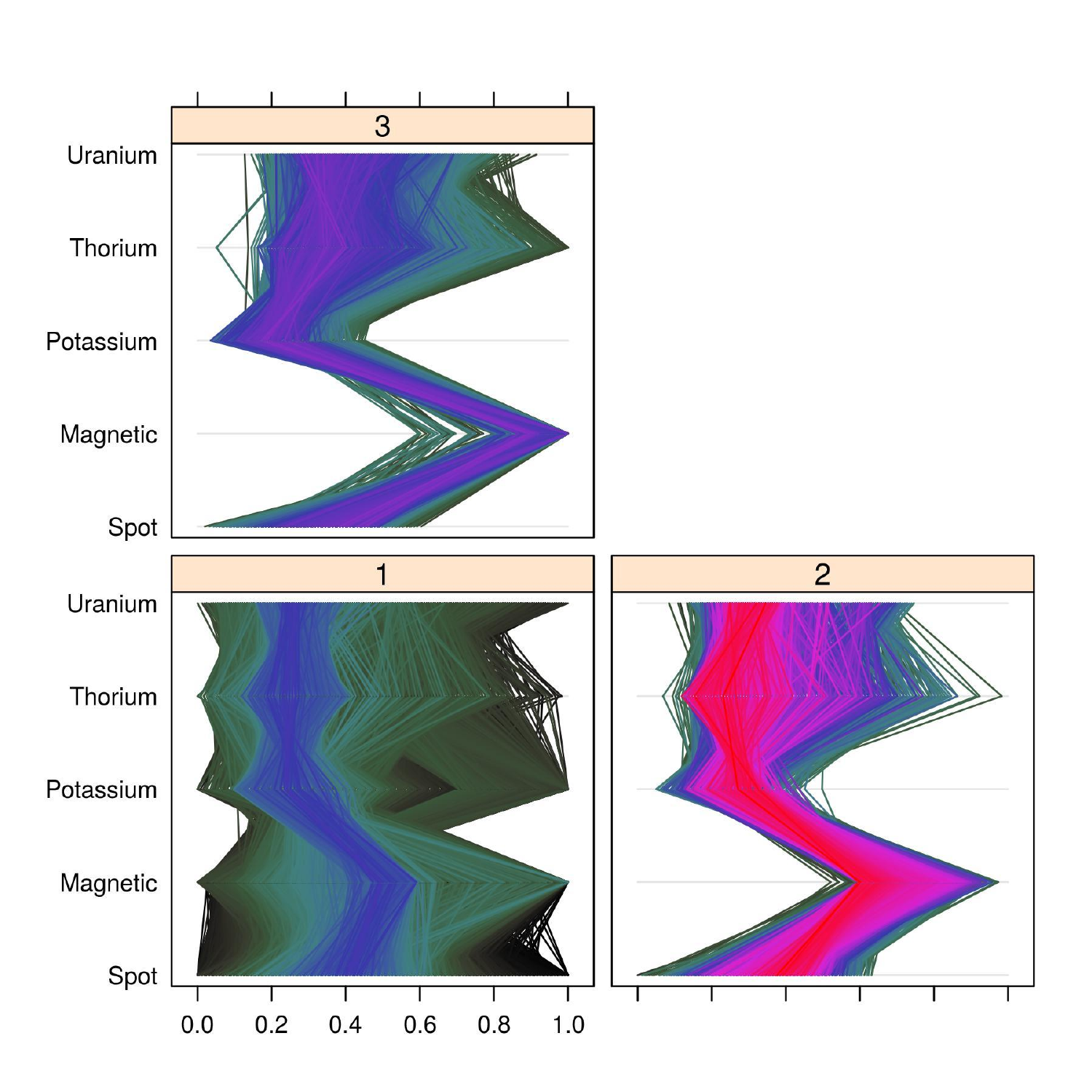}
\end{center}
\caption{out5d. Parallel plot for the $3$ groups by local modified half-region depth.}
\label{fig_out5d_parallel_cluster_local3}
\end{figure}  
To better highlight differences among the groups, we used the following code repeated for each variable
\begin{Schunk}
\begin{Sinput}
> plot(ecdf(sout5d[[1]]$Spot), pch="", verticals=TRUE, main="Spot")
> lines(ecdf(sout5d[[2]]$Spot), pch="", col=2, verticals=TRUE)
> lines(ecdf(sout5d[[3]]$Spot), pch="", col=3, verticals=TRUE)
\end{Sinput}
\end{Schunk}
which produces Figure \ref{fig_out5d_ecdf_groups_local3} on the Empirical cumulative distribution function of each variable conditional to group membership. These plots confirm our findings, the first group is characterized by high values of Spot and Potassium and low values of Magnetic, while the third group has very high levels of Magnetic, high levels of Thorium and low levels of Spot. 
\begin{figure}
\begin{center}
  \includegraphics[width=0.45\textwidth]{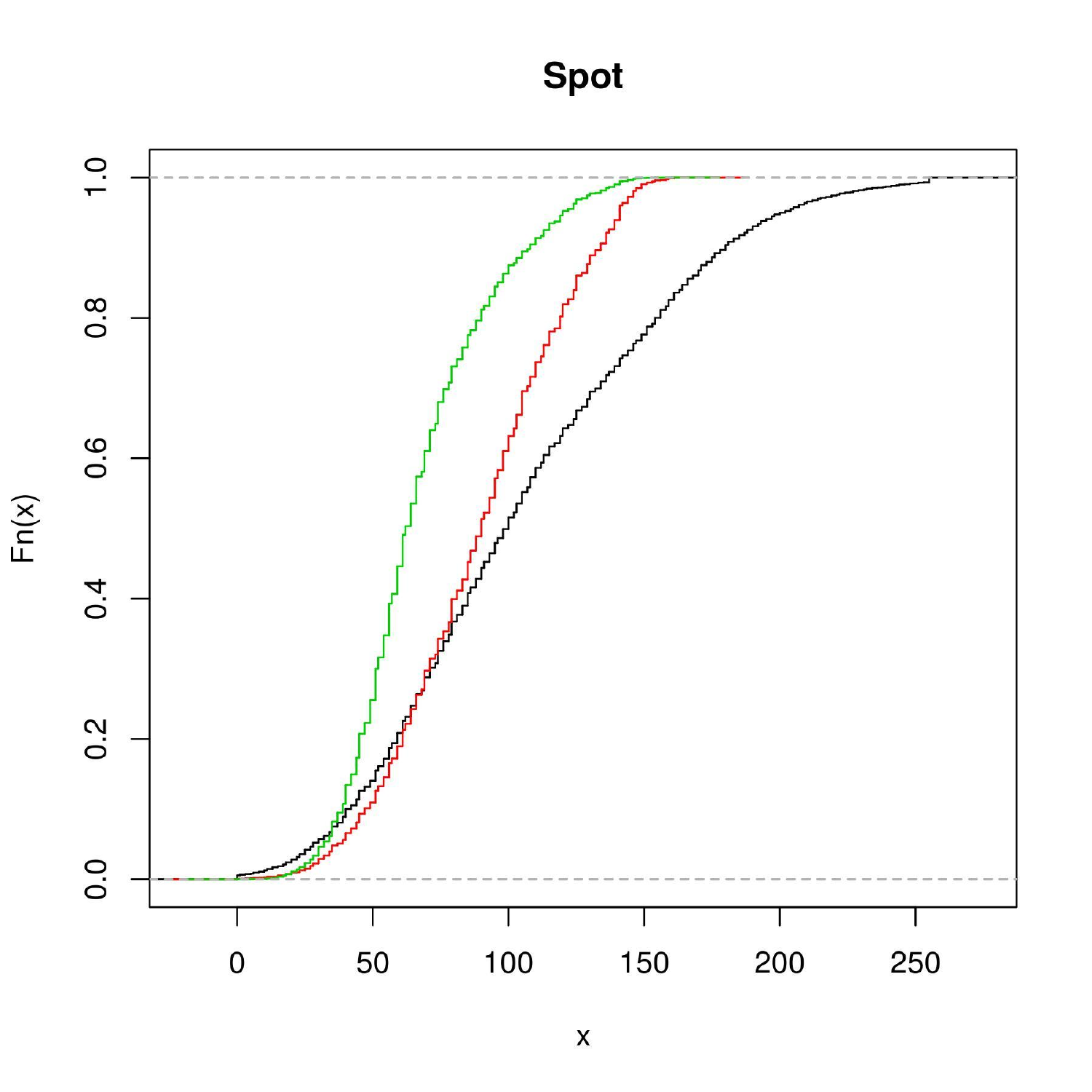}
  \includegraphics[width=0.45\textwidth]{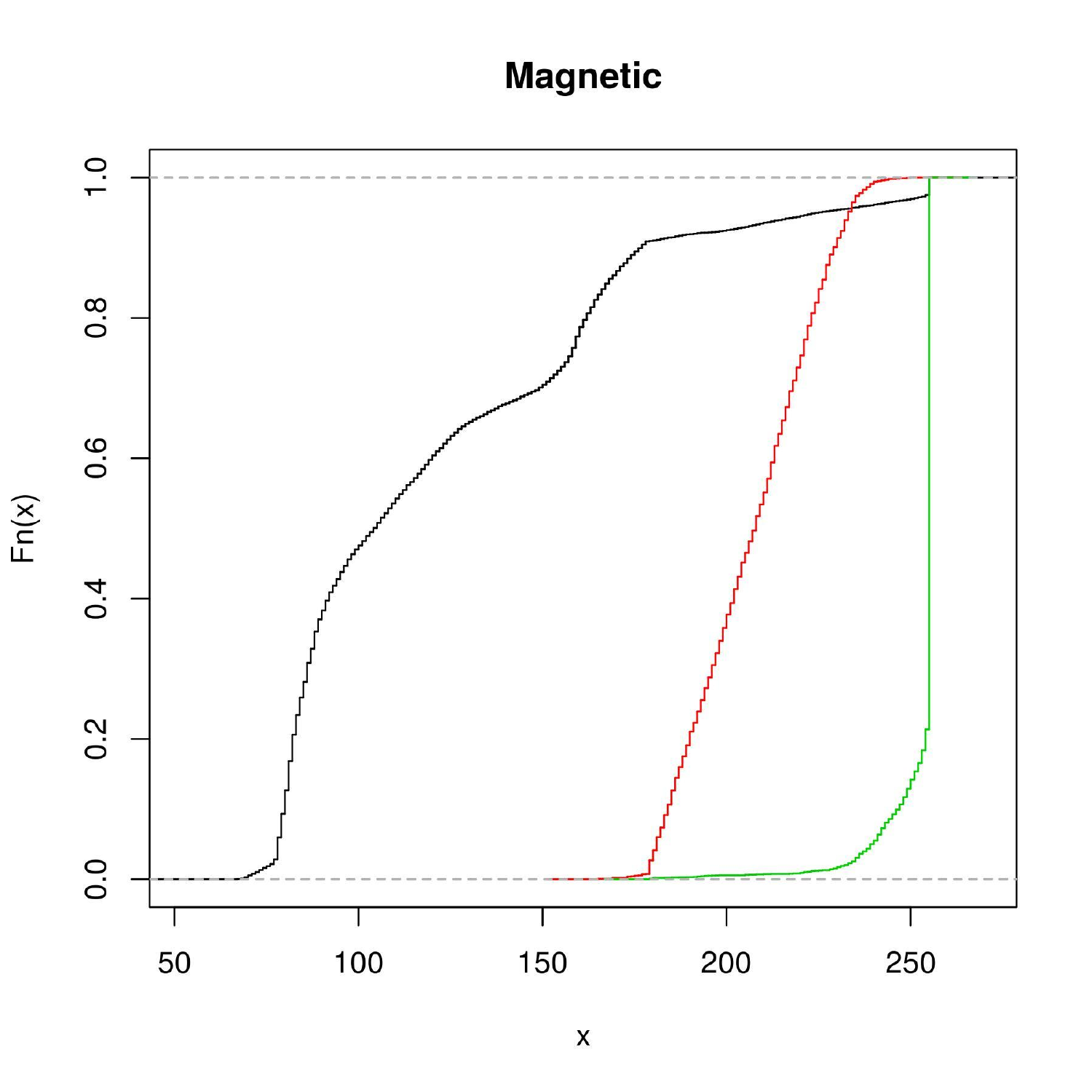}
  
  \includegraphics[width=0.45\textwidth]{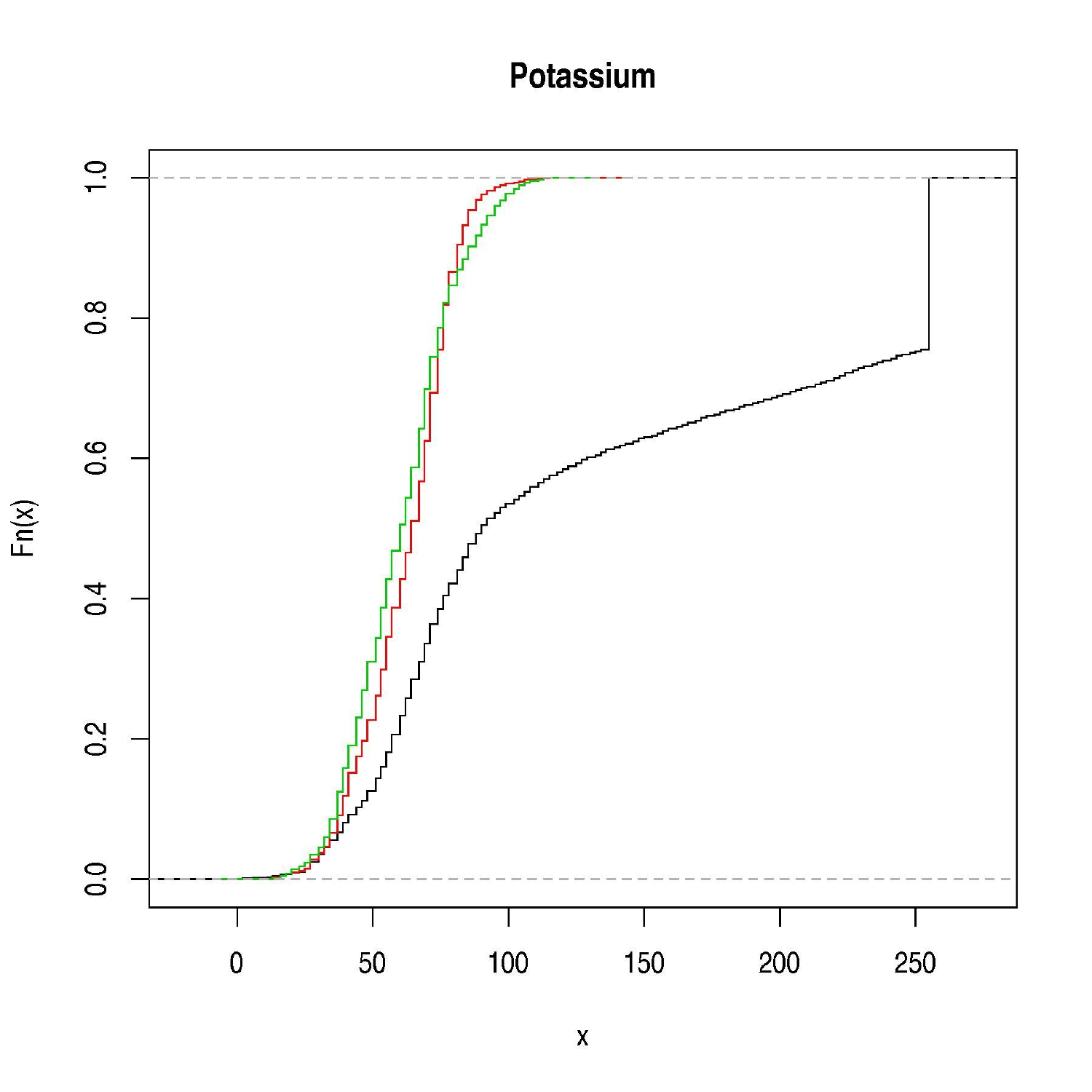}
  \includegraphics[width=0.45\textwidth]{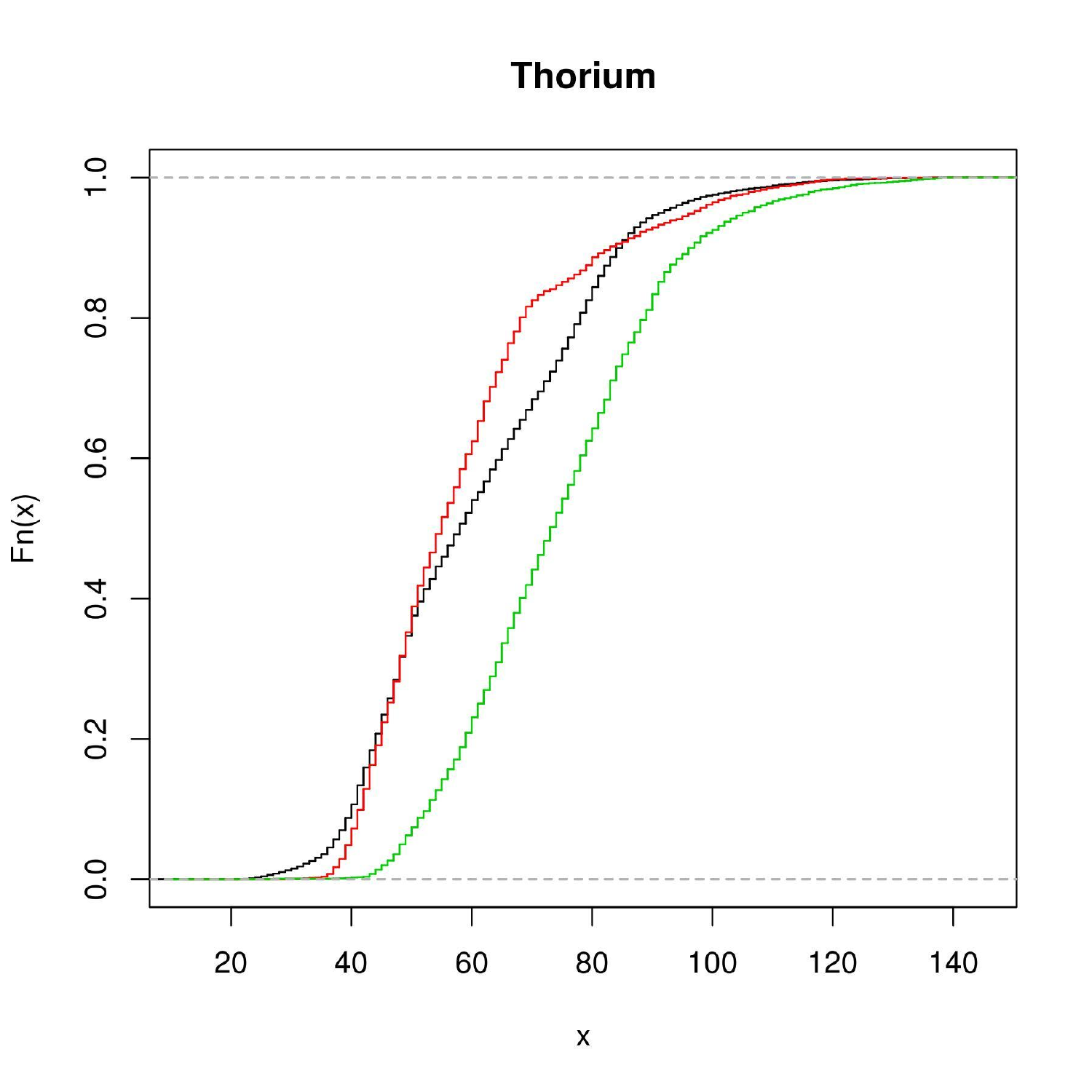}
  
  \includegraphics[width=0.45\textwidth]{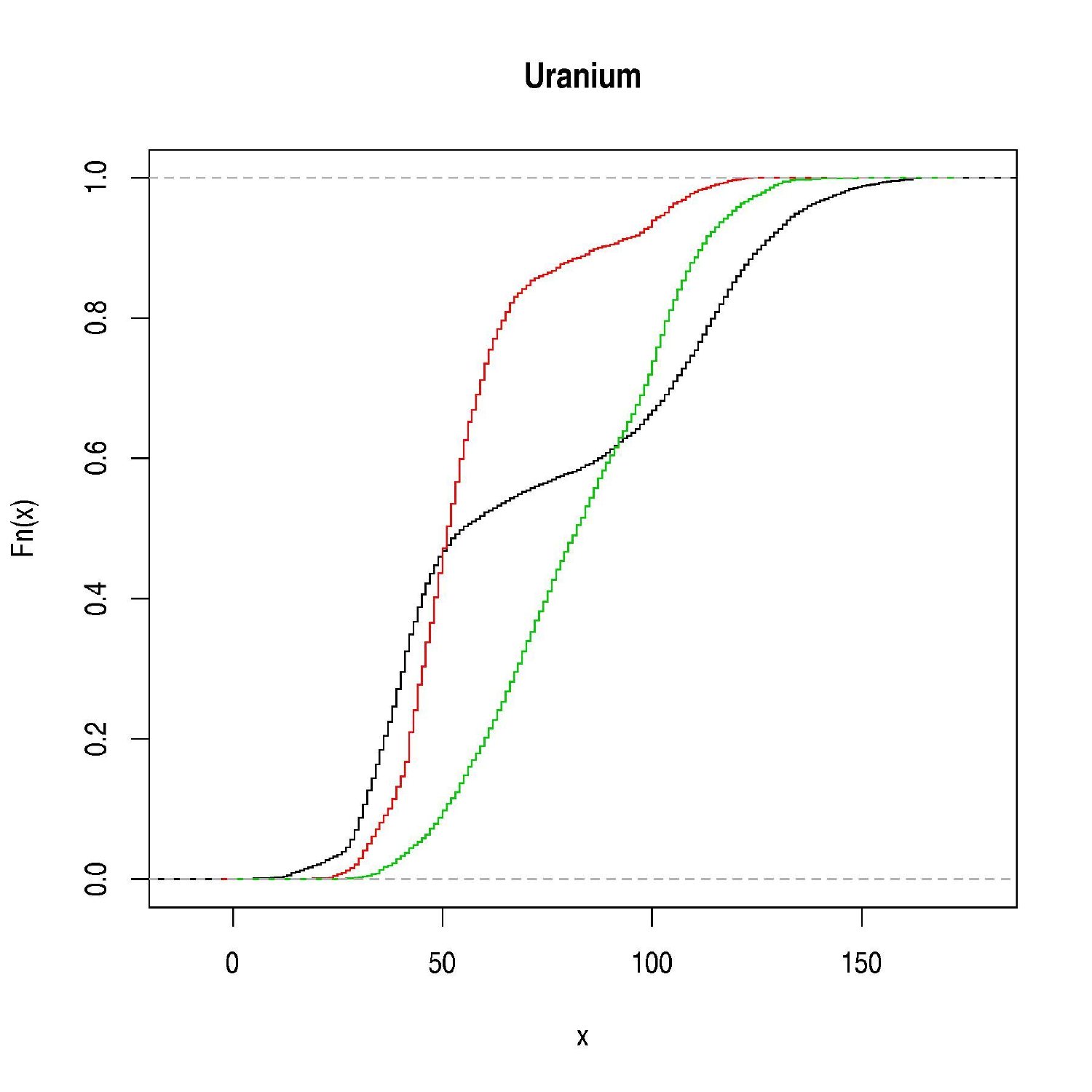}  
\end{center}
\caption{out5d. Empirical cumulative distribution function of each variable conditional to the group. Black: group $1$, red: group $2$ and green: group $3$.}
\label{fig_out5d_ecdf_groups_local3}
\end{figure}  
Finally, Figure \ref{fig_out5d_boxplot_cluster_local3} obtained with the following code
\begin{Schunk}
\begin{Sinput}
> out5dlattice <- data.frame(Measure=unlist(out5d), 
+   Cluster=rep(groups, times=ncol(out5d)), 
+   Variable=rep(c("Spot", "Magnetic", "Potassium", "Thorium", "Uranium"), 
+   each=nrow(out5d)))
> bwplot(Variable~Measure | factor(Cluster), data=out5dlattice)
\end{Sinput}
\end{Schunk}
presents box-plot of the variables conditional on group membership. 

\begin{figure}
\begin{center}
  \includegraphics[width=0.8\textwidth]{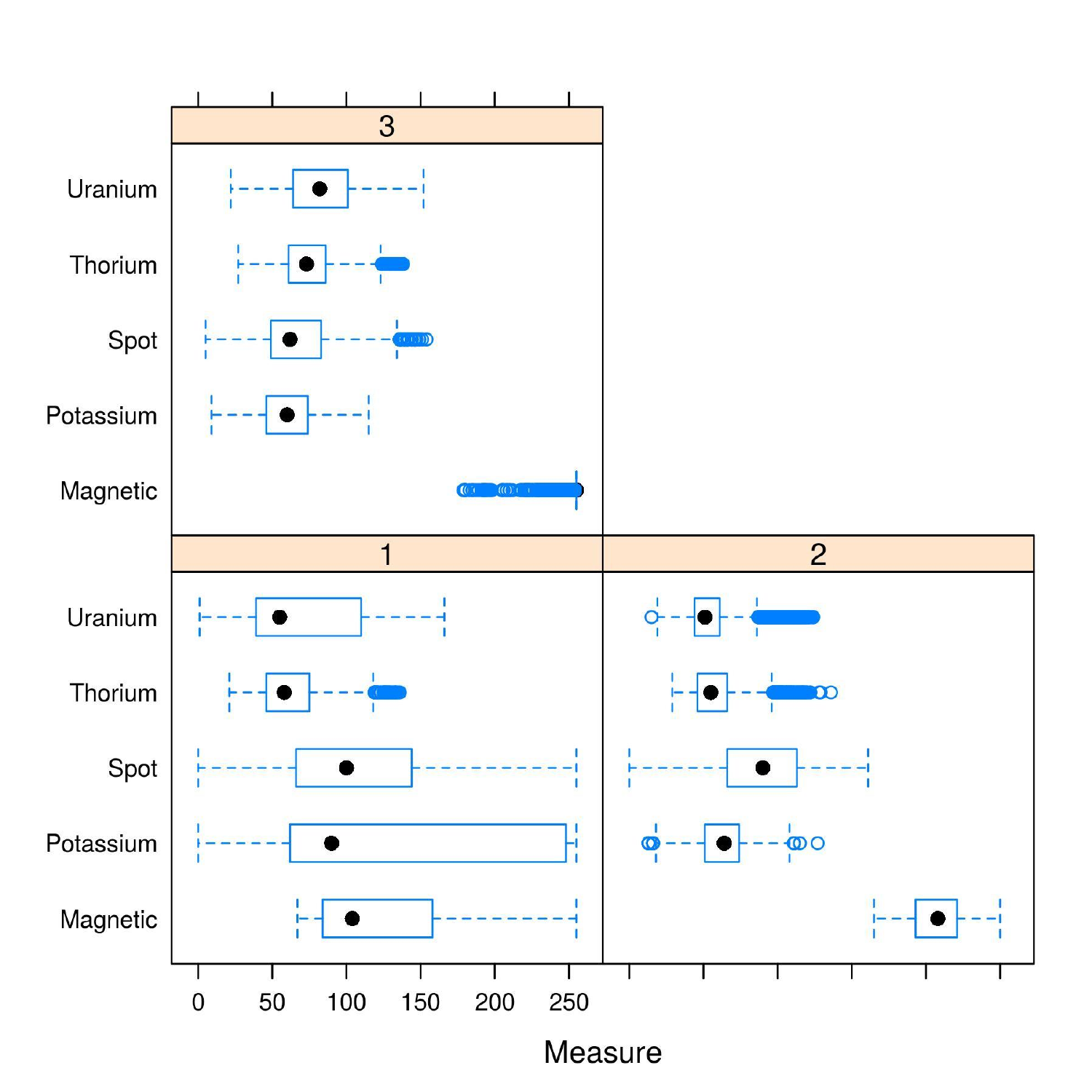}
\end{center}
\caption{out5d. Boxplot of the variables conditional on group membership.}
\label{fig_out5d_boxplot_cluster_local3}
\end{figure}  

\clearpage

\section{Proofs}
\label{sec:proofs}

\begin{proof}[Proof of Proposition \ref{PropLocalHalfRegionTauFinite}]
\begin{enumerate}
\item[i.] The result follows since $\SL_{e_j}(x(j),x(j)+\tau_1(j)) \subseteq \SL_{e_j}(x(j),x(j)+\tau_2(j))$ for all $j=1, \cdots, p$ which implies $\Pr_{\mathbf{X}} \left( \bigcap_{j=1}^p \SL_{e_j}(x(j),x(j)+\tau_1(j)) \right) \le \Pr_{\mathbf{X}} \left( \bigcap_{j=1}^p \SL_{e_j}(x(j),x(j)+\tau_2(j)) \right)$. The same holds for the other component.
\item[ii.] The first inequality holds since $\mathbf{x} \in \bigcap_{j=1}^p \SL_{e_j}(x(j),x(j)+\tau(j))$ and $\mathbf{x} \in \bigcap_{j=1}^p \SL_{-e_j}(x(j),x(j)-\tau(j))$ for all vector $\boldsymbol{\tau}$ while the second inequality holds since $\SL_{e_j}(x(j),x(j)+\tau(j)) \subset HS_{e_j}(x(j))$ for all $j=1,\cdots, p$.
\item[iii.] The result holds since $\lim_{\min(\tau(j), j=1, \cdots, p) \rightarrow \infty} \SL_{e_k}(x(k),x(k)+\tau(k)) = HS_{e_k}(x(k))$ by definition for $k=1, \cdots, p$.
\item[iv.] we have 
\begin{equation*}  
\lim_{\| \tau \| \rightarrow 0} \SL_{e_j}(x(j),x(j)+\tau(j)) = \lim_{ \tau(j) \rightarrow 0} \SL_{e_j}(x(j),x(j)+\tau(j)) = l_j
\end{equation*}
where $l_j$ is the line parallel to the $e_j$ axes and passing trough the point $\mathbf{x}$. Similarly for the other components of the half-region depth. Then $\bigcap_{j=1}^p l_j = \mathbf{x}$ and hence the result.  
\end{enumerate}  
\end{proof}

\begin{proof}[Proof of Proposition \ref{PropHalfSpaceFinite}]
The result follows immediately from the specialization of equations (\ref{EquLHRDistributionFunction}) to the case $p=1$ and the definition of the local half-region depth. 
\end{proof}

\begin{proof}[Proof of Proposition \ref{PropVanishingFinite}]
The first result follows since Proposition \ref{PropLocalHalfRegionTauFinite} implies $ld_{HR}(\mathbf{x}; \mathbf{X}, \boldsymbol{\tau}) \le d_{HR}(\mathbf{x}; \mathbf{X})$ and Proposition 2 of \citet{lopez-pintado_half-region_2011}. Similar motivation leads to the second statement.
\end{proof}

\begin{proof}[Proof of Proposition \ref{PropSemicontinuousFinite}]
To prove that $ld_{HR}(\cdot; \mathbf{X}, \boldsymbol{\tau})$ is upper semicontinuous we show that $\lim\sup_{n\rightarrow \infty} ld_{HR}(\mathbf{x}_n; \mathbf{X}, \boldsymbol{\tau}) = ld_{HR}(\mathbf{x}; \mathbf{X}, \boldsymbol{\tau})$, when $\mathbf{x}_n \stackrel{\| \cdot \|_2}{\rightarrow} \mathbf{x}$. Let $A_j = A_j(\mathbf{x}; \boldsymbol{\tau}) = \SL_{-e_j}(x(j)-\tau(j),x(j))$ and $A_{n,j} = A_j(\mathbf{x}_n; \boldsymbol{\tau}) = \SL_{-e_j}(x_n(j)-\tau(j),x_n(j))$, and similarly $B_j = B_j(\mathbf{x}; \boldsymbol{\tau}) = \SL_{e_j}(x(j),x(j)+\tau(j))$ and $B_{n,j} = B_j(\mathbf{x}_n; \boldsymbol{\tau}) = \SL_{e_j}(x_n(j),x_n(j)+\tau(j))$.
\begin{align*}
\lim\sup_{n\rightarrow \infty} ld_{HR}(\mathbf{x}_n; \mathbf{X}, \boldsymbol{\tau}) & = \lim\sup_{n\rightarrow \infty} \min \left( \Pr_{\mathbf{X}} \left( \bigcap_{j=1}^p A_{n,j} \right), \Pr_{\mathbf{X}} \left( \bigcap_{j=1}^p B_{n,j} \right) \right) \\
& \le \min \left( \lim\sup_{n\rightarrow \infty} \Pr_{\mathbf{X}} \left( \bigcap_{j=1}^p A_{n,j} \right), \lim\sup_{n\rightarrow \infty} \Pr_{\mathbf{X}} \left( \bigcap_{j=1}^p B_{n,j} \right) \right) \\
& = \min \left( \Pr_{\mathbf{X}} \left( \bigcap_{j=1}^p A_j \right), \Pr_{\mathbf{X}} \left( \bigcap_{j=1}^p B_j \right) \right) \\
& = ld_{HR}(\mathbf{x}; \mathbf{X}, \boldsymbol{\tau})
\end{align*}
We will show that 
\begin{equation*}
\left| \Pr_{\mathbf{X}} \left( \bigcap_{j=1}^p A_{n,j} \right) - \Pr_{\mathbf{X}} \left( \bigcap_{j=1}^p A_j \right)\right| \rightarrow 0 \qquad n \rightarrow \infty
\end{equation*}
which, together with the same result for the sequence of $B_{n,j}$ is sufficient to establish the continuity of $ld_{HR}(\cdot; \mathbf{X}, \boldsymbol{\tau})$. In fact,
\begin{align*}
\left| \Pr_{\mathbf{X}} \left( \bigcap_{j=1}^p A_{n,j} \right) - \Pr_{\mathbf{X}} \left( \bigcap_{j=1}^p A_j \right) \right| & \le \Pr_{\mathbf{X}} \left( \bigcap_{j=1}^p A_{n,j}  \cap \left(\bigcap_{j=1}^p A_j\right)^c \right) + \Pr_{\mathbf{X}} \left( \left(\bigcap_{j=1}^p A_{n,j} \right)^c \cap \bigcap_{j=1}^p A_j \right) \\
& = \Pr_{\mathbf{X}} \left( \bigcap_{j=1}^p A_{n,j}  \cap \bigcup_{j=1}^p A_j^c \right) + \Pr_{\mathbf{X}} \left( \bigcup_{j=1}^p A_{n,j}^c \cap \bigcap_{j=1}^p A_j \right)
\end{align*}
Since $\lim\sup_{n \rightarrow} \bigcap_{j=1}^p A_{n,j} = \bigcap_{j=1}^p A_j$, for the first term we have,
\begin{align*}
\lim\sup_{n \rightarrow} \Pr_{\mathbf{X}} \left( \bigcap_{j=1}^p A_{n,j}  \cap \bigcup_{j=1}^p A_j^c \right) & \le \Pr_{\mathbf{X}} \left( \lim\sup_{n \rightarrow} \left( \bigcap_{j=1}^p A_{n,j}  \cap \bigcup_{j=1}^p A_j^c \right) \right) \\ 
& = \Pr_{\mathbf{X}} \left( \lim\sup_{n \rightarrow} \left( \bigcap_{j=1}^p A_{n,j} \right) \cap \bigcup_{j=1}^p A_j^c \right) \\
& = \Pr_{\mathbf{X}} \left(  \bigcap_{j=1}^p A_j \cap \bigcup_{j=1}^p A_j^c \right) \\
& = 0
\end{align*}  
since $\bigcap_{j=1}^p A_j$ and $\bigcup_{j=1}^p A_j^c$ are disjoint sets.
\end{proof}

\begin{proof}[Proof of Proposition \ref{PropMaximizerFinite}]
For the first statement we have that
\begin{equation*}
\sup_{\mathbf{x} \in \mathbb{R}^p} \left| \min \left(\Pr_{X_n}(\bigcap_{j=1}^p A_j), \Pr_{X_n}(\bigcap_{j=1}^p B_j) \right) - \min \left(\Pr_{\mathbf{X}}(\bigcap_{j=1}^p A_j), \Pr_{\mathbf{X}}(\bigcap_{j=1}^p B_j) \right) \right|
\end{equation*}
is small than
\begin{equation}
\sup_{\mathbf{x} \in \mathbb{R}^p} \left|\Pr_{X_n}(\bigcap_{j=1}^p A_j) - \Pr_{\mathbf{X}}(\bigcap_{j=1}^p A_j) \right| + \sup_{\mathbf{x} \in \mathbb{R}^p} \left|\Pr_{X_n}(\bigcap_{j=1}^p B_j) - \Pr_{\mathbf{X}}(\bigcap_{j=1}^p B_j) \right| \label{EquGC1} 
\end{equation}
and we will show that both terms will goes to zero, uniformly almost surely. In fact, using the Inclusion-Exclusion formula and rearranging the terms
\begin{align*}
\sup_{\mathbf{x} \in \mathbb{R}^p} \left|\Pr_{X_n}(\bigcap_{j=1}^p A_j) - \Pr_{\mathbf{X}}(\bigcap_{j=1}^p A_j) \right| & \le \sup_{\mathbf{x} \in \mathbb{R}^p} \left| \hat{F}_{X_n}(\mathbf{x}) - F_\mathbf{X}(\mathbf{x}) \right| \\
& + \sum_{k=1}^p \sum_{I \subseteq \{1, \cdots, p\}, |I|=k} \sup_{\mathbf{x} \in \mathbb{R}^p} \left| \hat{F}_{X_n}((\mathbf{x} - \boldsymbol{\tau}_I)^-) - F_{\mathbf{X}}((\mathbf{x} - \boldsymbol{\tau}_I)^-) \right|
\end{align*}
Notice that the number of terms is finite and do not depend on sample size; applying Glivenko-Cantelli's theorem in $\mathbb{R}^p$, each terms on the right hand side of the expression goes to zero, uniformly almost surely. Similar result holds for the second term of \ref{EquGC1}. This proves the first statement of the proposition.

We prove already that $ld_{HR}(\cdot; \mathbf{X}, \boldsymbol{\tau})$ is an upper semicontinuous function that $\lim_{\|\mathbf{x}\| \rightarrow \infty} ld_{HR}(\mathbf{x}; \mathbf{X}, \boldsymbol{\tau}) = 0$ and if $ld_{HR}(\cdot; \mathbf{X}, \boldsymbol{\tau})$ is uniquely maximized at $\hat{\mathbf{x}}$, we have that for every $\varepsilon > 0$,  $\delta = ld_{HR}(\hat{\mathbf{x}}; \mathbf{X}, \boldsymbol{\tau}) - \sup_{\| \hat{\mathbf{x}} - \mathbf{x} \| \ge \varepsilon} ld_{HR}(\mathbf{x}; \mathbf{X}, \boldsymbol{\tau}) > 0$. To prove that $\hat{\mathbf{x}}_n \stackrel{a.s.}{\rightarrow} \hat{\mathbf{x}}$, it is sufficient to establish that
\begin{equation*}
\Pr\left[ \sup_{n \ge l} \| \hat{\mathbf{x}}_n - \hat{\mathbf{x}}  \| > \varepsilon \right] \rightarrow 0 \ , \qquad l \rightarrow \infty \ .
\end{equation*}

\begin{align*}
\Pr\left[ \sup_{n \ge l} \| \hat{\mathbf{x}}_n - \hat{\mathbf{x}}  \| > \varepsilon \right] & \le \Pr\left[ \sup_{n \ge l} \left( ld_{HR}(\hat{\mathbf{x}}; \mathbf{X}, \boldsymbol{\tau}) - ld_{HR}(\hat{\mathbf{x}}_n; \mathbf{X}, \boldsymbol{\tau})\right) > \delta \right] \\
& \le \Pr \left[ \sup_{n \ge l} \left( ld_{HR}(\hat{\mathbf{x}}; \mathbf{X}, \boldsymbol{\tau}) - ld_{HR}(\hat{\mathbf{x}}; X_n, \boldsymbol{\tau})\right) > \delta/2 \right. \\
& + \left. \sup_{n \ge l} \left( ld_{HR}(\hat{\mathbf{x}}_n; X_n, \boldsymbol{\tau}) - ld_{HR}(\hat{\mathbf{x}}_n; \mathbf{X}, \boldsymbol{\tau})\right) > \delta/2 \right] \\
& \le \Pr \left[ \sup_{n \ge l} \left( ld_{HR}(\hat{\mathbf{x}}; \mathbf{X}, \boldsymbol{\tau}) - ld_{HR}(\hat{\mathbf{x}}; X_n, \boldsymbol{\tau})\right) > \delta/2 \right] \\
& + \Pr\left[ \sup_{n \ge l} \left( ld_{HR}(\hat{\mathbf{x}}_n; X_n, \boldsymbol{\tau}) - ld_{HR}(\hat{\mathbf{x}}_n; \mathbf{X}, \boldsymbol{\tau})\right) > \delta/2 \right] \\
& \le \Pr \left[ \sup_{n \ge l} \sup_{\mathbf{x} \in \mathbb{R}^p} \left| ld_{HR}(\hat{\mathbf{x}}; \mathbf{X}, \boldsymbol{\tau}) - ld_{HR}(\hat{\mathbf{x}}; X_n, \boldsymbol{\tau}) \right| > \delta/2 \right] \\
& + \Pr\left[ \sup_{n \ge l} \sup_{\mathbf{x} \in \mathbb{R}^p} \left| ld_{HR}(\hat{\mathbf{x}}_n; X_n, \boldsymbol{\tau}) - ld_{HR}(\hat{\mathbf{x}}_n; \mathbf{X}, \boldsymbol{\tau})\right| > \delta/2 \right] \\
& \le 2 \Pr \left[ \sup_{n \ge l} \sup_{\mathbf{x} \in \mathbb{R}^p} \left| ld_{HR}(\hat{\mathbf{x}}; \mathbf{X}, \boldsymbol{\tau}) - ld_{HR}(\hat{\mathbf{x}}; X_n, \boldsymbol{\tau}) \right| > \delta/2 \right] \stackrel{l \rightarrow \infty}{\rightarrow} 0  \ .
\end{align*}
which leads to the second statement of the proposition.
\end{proof}

\begin{proof}[Proof of Proposition \ref{PropLocalHalfRegionTauInfiniteDimension}]
\begin{enumerate}
\item[i.] By the definition of $\hyp(y; \tau)$ we have that $\hyp(y; \tau_1) \subset \hyp(y; \tau_2)$  and hence $P(G(Y) \subset \hyp(y; \tau_1)) \leq P(G(Y) \subset \hyp(y; \tau_2))$. Similar results holds for the epigraph $\epi(y; \tau)$ which leads to the statement.
\item[ii.] Let $\bar{0}$ be the function always equals to $0$ and $\bar{\infty}$ be the function always equals to $\infty$. Then, by definition of $\hyp(y; \tau)$ and $\epi(y; \tau)$ we have $G(y) = \hyp(y; \bar{0}) \subseteq \hyp(y; \tau) \subseteq \hyp(y; \bar{\infty}) = \hyp(y)$ and $G(y) = \epi(y; \bar{0}) \subseteq \epi(y; \tau) \subseteq \epi(y; \bar{\infty}) = \epi(y)$. Furthermore, $ld_{HR}(y; Y, \bar{0}) = P(G(Y) = g(y))$ and $ld_{HR}(y; Y, \bar{\infty}) = d_{HR}$ together with the result at point i. lead to the inequalities.
\item[iii.] We notice that, if $\min_{t \in I} \tau(t) \rightarrow \infty$ then $\tau \rightarrow \bar{\infty}$.
\item[iv.] We notice that, if $\| \tau \|_\infty \rightarrow 0^{+}$ then  $\tau \rightarrow \bar{0}$.
\end{enumerate}
\end{proof}

\begin{proof}[Proof of Proposition \ref{PropLocalHalfRegionZero}]
The first result holds since \ref{PropLocalHalfRegionTauInfiniteDimension} implies $ld_{HR}(y; Y, \tau) \le d_{HR}(y; Y)$ and Proposition 5 of \citet{lopez-pintado_half-region_2011}. Similar motivation leads to the second statement.
\end{proof}  
  
\begin{proof}[Proof of Proposition \ref{PropSemiContinuousFunctional}]
To prove that $ld_{HR}(\cdot; Y, \tau)$ is upper semicontinuous we show that $\lim\sup_{n \rightarrow \infty} ld_{HR}(y_n; Y, \tau) \leq ld_{HR}(y; Y, \tau)$, where $y_n \stackrel{\|\cdot\|_\infty}{\longrightarrow} y$.
\begin{align*}
\lim\sup_{n \rightarrow \infty} ld_{HR}(y_n; Y, \tau) & = \lim\sup_{n \rightarrow \infty} \min(P(G(Y) \subseteq \hyp(y_n; \tau)), P(G(Y) \subseteq \epi(y_n; \tau))) \\
& \leq \min( \lim\sup_{n \rightarrow \infty} P(G(Y) \subseteq \hyp(y_n; \tau)), \lim\sup_{n \rightarrow \infty} P(G(Y) \subseteq \epi(y_n; \tau))) \\
& \leq \min(P(G(Y) \subseteq \hyp(y; \tau)), P(G(Y) \subseteq \epi(y; \tau))) \\
& = ld_{HR}(y; Y, \tau) \ .
\end{align*}
To establish the continuity of the functional $ld_{HR}(\cdot; Y, \tau)$ in $C(I)$ with respect to the supremum norm it is sufficient to prove that both $P(G(Y) \in \hyp(\cdot; \tau))$ and $P(G(Y) \in \epi(\cdot; \tau))$ are continuous. We prove the continuity of the first term, the prove is similar for the second term. We have to see that if $y_n \stackrel{\|\cdot\|_\infty}{\longrightarrow} y$ then $A_n = |P(G(Y) \subseteq \hyp(y_n; \tau)) - P(G(Y) \subseteq \hyp(y; \tau))| \stackrel{n\rightarrow\infty}{\longrightarrow} 0$. Recall that,
\begin{align*}
A_n & = |P(G(Y) \in \hyp(y_n; \tau)) - P(G(Y) \subseteq \hyp(y; \tau))| \\
& \leq P(G(Y) \subseteq \hyp(y_n; \tau) \cup G(Y) \nsubseteq \hyp(y; \tau)) \\
& + P(G(Y) \nsubseteq \hyp(y_n; \tau) \cup G(Y) \subseteq \hyp(y; \tau)) \ .
\end{align*}
Considering that the marginals of the distribution $P$ are continuous it is easy to prove that
\begin{align*}
P(G(Y) \subseteq \hyp(y_n; \tau) \cup G(Y) \nsubseteq \hyp(y; \tau)) & \stackrel{n\rightarrow\infty}{\longrightarrow} 0 \\
\text{and} & \\
P(G(Y) \nsubseteq \hyp(y_n; \tau) \cup G(Y) \subseteq \hyp(y; \tau)) & \stackrel{n\rightarrow\infty}{\longrightarrow} 0 \ .
\end{align*}
Hence $P(G(Y) \in \hyp(\cdot; \tau))$ is a continuous function.
\end{proof}

\begin{proof}[Proof of Theorem \ref{TeoConsistency}]
The random variable $ld_{HR}(\cdot; \mathbf{Y}_n, \tau)$ corresponding to the local half-region depth can be expressed as
\begin{align*}
ld_{HR}(y; \mathbf{Y}_n, \tau) & = \min \left( \frac{1}{n} \sum_{i=1}^n \ind(G(Y_i) \subset \hyp(y; \tau)), \frac{1}{n} \sum_{i=1}^n \ind(G(Y_i) \subset \epi(y; \tau)) \right) \\
& \text{by the law of large numbers and the continuity of the minimum,} \\
& \stackrel{a.s.}{\longrightarrow} \min ( P(G(Y) \subset \hyp(y; \tau)), P(G(Y) \subset \epi(y; \tau)) ) \\
& = ld_{HR}(y; Y, \tau) \ ,
\end{align*}  
and then the result.
\end{proof}

\begin{proof}[Proof of Theorem \ref{TeoEquiContinuousFunctional}]
Without loss of generality we assume that $I = [0,1)$, $P_n$ is the empirical probability of $n$ random curves $\mathbf{Y}_n = (Y_1, \cdots, Y_n)$ sampled from the stochastic process $Y$ with probability $P$, and given a measurable function $\psi$ we write $P_n\psi$ for the mean process $\sum_{i=1}^n \psi(Y_i)/n$ and $P\psi$ for the expectation of $\psi$ with respect to $P$, i.e., $\int \psi(Y) \ dP$.

We define two families of measurable functions $\mathcal{F}_+(E) = \{ \psi_+(z, y) : z, y \in E \rightarrow \{ 0, 1 \} \}$, and, similarly, $\mathcal{F}_-(E)$ where
\begin{align*}
\psi_+(z, y) & = \ind( y(t) \leq z(t), t \in I) \ind( y(t) - \tau(t) \geq z(t), t \in I) \\
\psi_-(z, y) & = \ind( y(t) \geq z(t), t \in I) \ind( y(t) + \tau(t) \leq z(t), t \in I)
\end{align*}
We have,
\begin{align*}
\sup_{y \in E} |ld_{HR}(y; \mathbf{Y}_n, \tau) - ld_{HR}(y; Y, \tau)| & \leq \sup_{y \in \underline{E}} |ld_{HR}(y; \mathbf{Y}_n, \tau) - ld_{HR}(y; Y, \tau)| \\
& + \sup_{y \in \overline{E}} |ld_{HR}(y; \mathbf{Y}_n, \tau) - ld_{HR}(y; Y, \tau)|
\end{align*}
where $\underline{E} = \{ y \in E : \| y \|_\infty \leq M \}$ and $\overline{E} = \{ y \in E : \| y \|_\infty > M \}$ for some positive constant $M$. The second term converges almost surely to zero when $M$ tends to infinity by Proposition \ref{PropLocalHalfRegionZero}, hence we have to establish that the first term goes to zero almost surely for a given $M$ sufficiently large. At this aim we note that,
\begin{equation*}
\sup_{y \in \underline{E}} |ld_{HR}(y; \mathbf{Y}_n, \tau) - ld_{HR}(y; Y, \tau)| \leq \sup_{\psi \in \mathcal{F}_+(\underline{E})} | P_n\psi - P\psi| + \sup_{\psi \in \mathcal{F}_-(\underline{E})} | P_n\psi - P\psi| \ .
\end{equation*}
Hence, it is sufficient to prove that the two families $\mathcal{F}_+(\underline{E})$ and $\mathcal{F}_-(\underline{E})$ are P-Glivenko Cantelli \citet[section 2.8.8]{kosorok_2008}. Here we show the result for $\mathcal{F}_+(\underline{E})$, similar prove can be obtain for $\mathcal{F}_-(\underline{E})$. We prove that the bracketing number $N_{[]}(\varepsilon, \mathcal{F}_+(\underline{E}), L_1(P))$ is finite, then Theorem 2.2 of \citet{kosorok_2008} leads to the result. Given $\varepsilon > 0$, we need to construct a finite number of functions $z_1, \cdots, z_p$ that determine the $\varepsilon$-brackets $[\psi_+(z_i), \psi_+(z_j)]$, covering the family $\mathcal{F}_+(\underline{E})$ and such that $P(\psi_+(z_j, Y) - \psi_+(z_i, Y)) < \varepsilon$. Hence, let fix an $\varepsilon$, then there exists $\nu$ such that by (i) we have
\begin{equation*}
P(\psi_+(z_j, Y) - \psi_+(z_i, Y)) \leq P(\ind(z_j \leq Y) - \ind(z_i \leq Y)) =  \Pr(z_j \leq Y \leq z_i) < \varepsilon \ .
\end{equation*}
when $\|z_i - z_j\| < \nu$. By the equicontinuity of $\underline{E}$, given $\nu > 0$, we let $\underline{\tau} > 2 \nu$ and there exists $\delta > 0$, such that if $| s - t | < \delta$ then $|y(s) - y(t) < \nu$ for every $y \in \underline{E}$. Consider the set of functions defined as constants in the intervals $[0, \delta)$, $[\delta, 2\delta)$, $[2\delta, 3\delta)$, $\cdots$, $[((1/\delta) -1)\delta, 1)$, and taking values in the sequence: $-[M/\nu]\nu$, $\cdots$, $-\nu$, $0$, $\nu$, $2\nu$, $\cdots$, $[M/\nu]\nu$. The total number $p$ of possible functions that can be constructed like this is finite and we denote them as $z_1, \cdots, z_p$. Define the following set of indicator functions $\{ \psi_+(z_k, y), \ k \in \{1, \cdots, p \}\}$. This set of functions allows us to construct $\varepsilon$-brackets that cover the family $\mathcal{F}_+(\underline{E})$. Since $p$ is finite this implies that the bracketing number $N_{[]}(\varepsilon, \mathcal{F}_+(\underline{E}), L_1(P))$ is finite.       
\end{proof}

\begin{proof}[Proof of Theorem \ref{TeoMaximizerFunctional}]
We have to show that, for every $\varepsilon > 0$,
\begin{equation*}
\Pr(\sup_{n \geq l} \| \hat{y}_n - \hat{y} \|_\infty \geq \varepsilon) \stackrel{l \rightarrow \infty}{\longrightarrow 0} \ .
\end{equation*}
The space $(E, \| \cdot \|_\infty)$ is a metric space and $ld_{HR}(\cdot, Y, \tau)$ is upper continuous on $E$ and satisfies
\begin{equation*}
\sup_{\| y \|_\infty \ge M} ld_{HR}(y; Y, \tau) \stackrel{M \rightarrow \infty}{\longrightarrow} 0 \ .
\end{equation*}
Then the rest of the proof is analogous to the one in Proposition \ref{PropMaximizerFinite} and it is omitted.
\end{proof}

\clearpage


\end{document}